\newcolumntype{L}[1]{>{\raggedright\arraybackslash}p{#1}}
\newcolumntype{C}[1]{>{\centering\arraybackslash}p{#1}}
\newcolumntype{R}[1]{>{\raggedleft\arraybackslash}p{#1}}
\long\def\comment#1{}
\newcommand{\nop}[1]{}
\newtheorem{theorem}{\bf Theorem}[section]
\newtheorem{example}{\bf Example}
\theoremstyle{remark}
\theoremstyle{definition}
\newtheorem{definition}{\bf Definition}
\newcommand{\revisedtext}[1]{{\color{black}#1}}
\def\BibTeX{{\rm B\kern-.05em{\sc i\kern-.025em b}\kern-.08em
    T\kern-.1667em\lower.7ex\hbox{E}\kern-.125emX}}
\begin{document}

\title{SINDI: An Efficient Index for Sparse Vector Approximate Maximum Inner Product Search
}

\author{
	Ruoxuan Li{\small$~^{1}$}, Xiaoyao Zhong{\small$~^{2}$}, Jiabao Jin{\small$~^{2}$}, Peng Cheng{\small$~^{3,1}$}, Wangze Ni{\small$~^{4}$}, Zhitao Shen{\small$~^{2}$}, Wei Jia{\small$~^{2}$},\\ 
	Xiangyu Wang{\small$~^{2}$}, Heng Tao Shen{\small$~^{3}$}, Jingkuan Song{\small$~^{3}$}
	\\
	\fontsize{10}{10}\itshape
	$~^{1}$East China Normal University, Shanghai, China; 
	$~^{2}$Ant Group, Shanghai, China;\\
	$~^{3}$Tongji University, Shanghai, China;
	$~^{4}$Zhejiang University, Hangzhou China
	\fontsize{9}{9}\upshape\\
	rxlee@stu.ecnu.edu.cn, zhongxiaoyao.zxy@antgroup.com, jinjiabao.jjb@antgroup.com,\\ 
	cspcheng@tongji.edu.cn, niwangze@zju.edu.cn, zhitao.szt@antgroup.com, jw94525@antgroup.com, \\
	wxy407827@antgroup.com, shenhengtao@hotmail.com, jingkuan.song@gmail.com
}

\maketitle

\begin{abstract}
Sparse vector Maximum Inner Product Search (MIPS) is crucial in multi-path retrieval for Retrieval-Augmented Generation (RAG). Recent inverted index-based and graph-based algorithms have achieved high search accuracy with practical efficiency. However, their performance in production environments is often limited by redundant distance computations and frequent random memory accesses. Furthermore, the compressed storage format of sparse vectors hinders the use of SIMD acceleration.
    In this paper, we propose the \emph{sparse inverted non-redundant distance index} (\textsc{Sindi}), which incorporates three key optimizations:   
    (i) Efficient Inner Product Computation: \textsc{Sindi} leverages SIMD acceleration and eliminates redundant identifier lookups, enabling batched inner product computation;  
    (ii) Memory-Friendly Design: \textsc{Sindi} replaces random memory accesses to original vectors with sequential accesses to inverted lists, substantially reducing memory-bound latency.
    (iii) Vector Pruning: \textsc{Sindi} retains only the high-value non-zero entries of vectors, improving query throughput while maintaining accuracy.
    We evaluate \textsc{Sindi} on multiple real-world datasets. Experimental results show that \textsc{Sindi} achieves state-of-the-art performance across datasets of varying scales, languages, and models. On the \textsc{MsMarco} dataset, when Recall@50 exceeds 99\%, \textsc{Sindi} delivers single-thread query-per-second (QPS) improvements ranging from $4.2\times$ to $26.4\times$ compared with \textsc{Seismic} and \textsc{PyANNS}. Notably, \textsc{Sindi} has been integrated into Ant Group's open-source vector search library, \emph{VSAG}.
\end{abstract}


\begin{IEEEkeywords}
Maximum Inner Product, Sparse Vectors
\end{IEEEkeywords}

\section{Introduction}
\label{sec:introduction}

Recently, retrieval-augmented generation (RAG)~\cite{blendedrag, maxrag, neuripsrag, parametricrag, accrag, enhancerag} \revisedtext{has} become one of \revisedtext{the} most successful information retrieval framework attracting attention from research communities and industry. Usually, texts are embedded into dense vectors (i.e., no dimension of the vector is zero entry) in RAG, then retrieved through approximate nearest neighbor search (ANNS) on their corresponding dense vectors~\cite{vsag, enhanced_graph, yukun, mingyu}.

\revisedtext{To enhance the RAG framework, researchers have found that complementing dense vector-based RAG with sparse vector retrieval yields superior accuracy and recall~\cite{ma2025lightretriever, sallinen2025mmoremassivemultimodalopen, ahmad2025transformertafsirqias2025, fensore2025evaluatinghybridretrievalaugmented}. Unlike dense vectors, sparse vectors (where only a small fraction of dimensions are non-zero) are generated by specific models (e.g., \textsc{Splade}~\cite{distill-splade,  splade, efficiencyspalde,  towardssplade}) to preserve semantic information while enabling precise lexical matching~\cite{ma2025lightretriever}. In the enhanced RAG framework, dense vectors capture holistic semantic similarity, while sparse vectors ensure exact term recall. This synergy translates into significant application-level improvements: in the evaluation of AntGroup production RAG system, integrating sparse vectors into a strong hybrid baseline (BM25 + Dense) boosted \textbf{Recall@3 by 18.5\%} (from 55.60\% to 74.10\%) and \textbf{Recall@10 by 3.7\%} (from 85.20\% to 88.90\%). We illustrate the process of this enhanced RAG workflow in the following example:}

\begin{figure}[t]
	\centering
	{\includegraphics[width=0.48\textwidth]{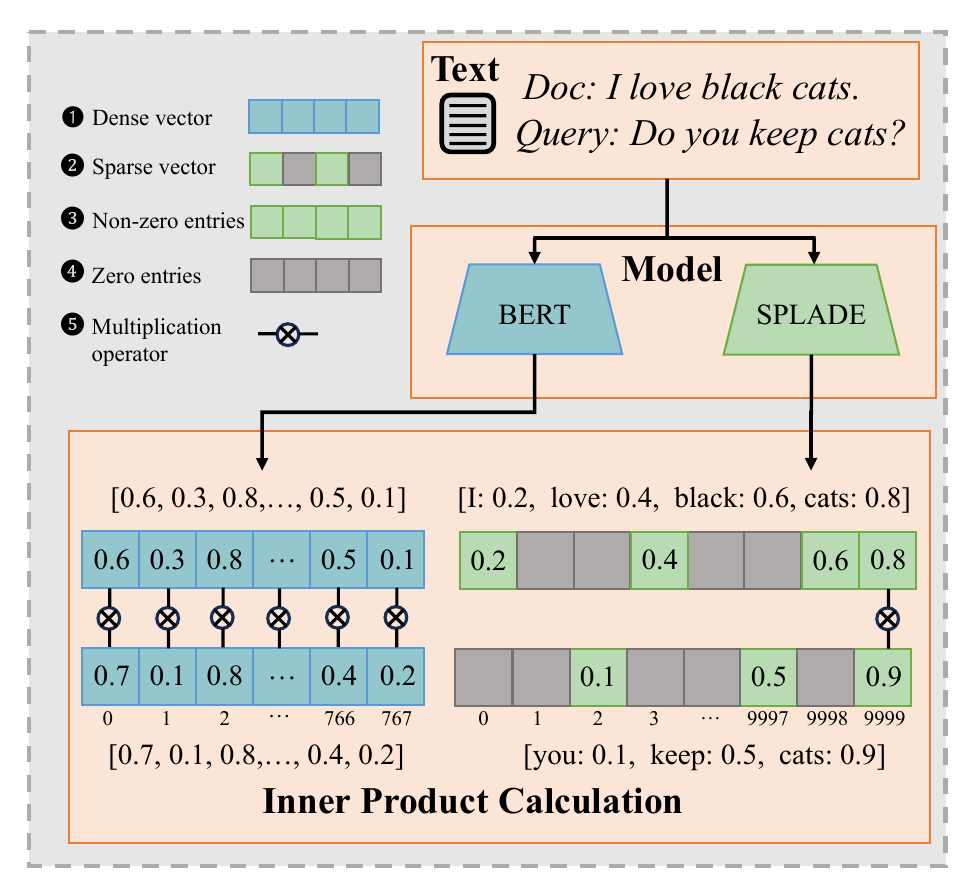}}
	\caption{\small Example of Dense and Sparse Vector Representations and Inner Product Calculations.}\vspace{-2ex}
	\label{fig:sv_vs_dv}
\end{figure}

\begin{example}	
	\label{ex:sv_vs_dv}
	\textbf{Precise lexical matching.} In the retriever stage of RAG, queries and documents are compared to select top-$k$ candidates. Dense vectors capture semantic similarity, while sparse vectors support exact term matching. For example, ``I love black cats'' is tokenized into ``i'', ``love'', ``black'', and ``cats'', with ``cats'' receiving the highest weight (0.8). Queries containing ``cats'' precisely match documents where this token scores high.
	\textbf{Challenges in inner-product computation.} Dense vectors are stored contiguously, enabling parallel SIMD dot-products. Sparse vectors have high dimensionality but compactly store only non-zero entries, causing two bottlenecks:
	(1) \emph{ID lookup overhead}: Matching shared non-zero dimensions requires traversing all entries. Even if only one dimension matches, all must be scanned. 
	(2) \emph{No SIMD acceleration}: Unaligned storage prevents parallel SIMD processing.
\end{example}

The similarity retrieval problem for sparse vectors is formally known as the Maximum Inner Product Search (MIPS)~\cite{ip-1, ip-2, ip-3, ip-4, ip-5}, which aims to identify the top-$k$ vectors in a dataset that have the largest inner product value with a given query vector. However, due to the curse of dimensionality~\cite{indyk1998approximate}, performing exact MIPS in high-dimensional spaces is computationally prohibitive. To mitigate this issue, we focus on Approximate Maximum Inner Product Search (AMIPS)~\cite{ip-3, ip-6, ip-7}, which trades a small amount of recall for significantly improved search efficiency.

Many algorithms~\cite{seismic, sosia, bmp, pyanns} for AMIPS employ techniques like inverted indices~\cite{seismic}, proximity graphs~\cite{pyanns}, and hash partitioning~\cite{sosia}. They improve efficiency by grouping similar vectors, reducing the candidates examined during queries.

Despite reducing the search space, existing approaches still face two major performance bottlenecks:  
(i) \textit{Distance computation cost}: Matching non-zero dimensions between a query and a document incurs substantial identifier lookup overhead, and the inner product computation cannot be effectively accelerated using SIMD instructions.  
(ii) \textit{Random memory access cost}: During query processing, data are accessed in a random manner, and the variable lengths of sparse vectors further complicate direct access in memory.

To address the aforementioned challenges, we propose \textsc{Sindi}, a \textit{Sparse Inverted Non-redundant Distance-calculation Index} for efficient sparse vector search. The main contributions of this paper are: (i) \textit{Value-Storing Inverted Index}: \textsc{Sindi} stores both vector identifiers and values in the inverted index, enabling direct access during query processing; (ii) \textit{Efficient Inner Product Computation}: \textsc{Sindi} eliminates redundant overhead in identifying common dimensions and fully exploits SIMD acceleration. By grouping values under the same dimension, \textsc{Sindi} enables batched inner product computation during queries; (iii) \textit{Cache-Friendly Design}: \textsc{Sindi} reduces random memory accesses by avoiding fetches of original vectors. Instead, it sequentially accesses inverted lists for specific dimensions, thereby lowering cache miss rates. (iv) \textit{Vector Mass Pruning}: \textsc{Sindi} retains only high-value non-zero entries in vectors, effectively reducing the search space and improving query throughput while preserving accuracy.

\begin{table}[t] 
	\renewcommand{\arraystretch}{1.4} 
	\footnotesize
	\caption{Comparison to Existing Algorithms.}
	\vspace{0.5em}
	\label{tab:cmp_to_alg_transposed}
	\begin{tabular}{lccc}
		\toprule
		& \textbf{\textsc{Sindi}(ours)} & \textsc{Seismic} & \textsc{Py}\textsc{Ann}s \\
		\midrule
		Distance Complexity & $O\Bigl( \frac{\Vert q \Vert}{s} \Bigr)$
		& $O(\Vert q \Vert + \Vert x \Vert)$ & $O(\Vert q \Vert+ \Vert x \Vert)$ \\
		Memory Friendly                & \checkmark   & \ding{55}       & \ding{55} \\
		SIMD Support                 & \checkmark  & \ding{55}        & \ding{55} \\
		QPS (Recall@50=99\%)             & 241         & 58               & 24 \\
		Construction Time(s)         & 58          & 220              & 4163 \\
		\bottomrule
	\end{tabular}
\end{table}

We compare \textsc{Sindi} with several state-of-the-art methods on the \textsc{MsMarco} dataset (8.8M scale) in Table~\ref{tab:cmp_to_alg_transposed}. 
\revisedtext{Regarding the computational cost, \textsc{Sindi} achieves an \textbf{amortized per-candidate time complexity} of $O\left(\frac{\lVert q \rVert}{s}\right)$ (symbols defined in Table~\ref{tab:symbols}), establishing a robust performance lower bound across different data distributions.} 
In contrast, traditional inverted index and graph-based algorithms typically scale with $O(\lVert q \rVert + \lVert x \rVert)$. The detailed derivation is given in \S~\ref{subsec:efficient-distance-computation}.

In summary, the contributions of this paper are as follows:  
\begin{itemize}[leftmargin=*,labelsep=0.5em]
	\item We present \textsc{Sindi}, a novel value-storing inverted index described in \S\ref{subsec:value-storing-inverted-index} and \S\ref{subsec:efficient-distance-computation}, which reduces redundant distance computation and random memory accesses. We further introduce a \textit{Window Switch} strategy in \S\ref{subsec:cache-optimization} to support large-scale datasets.
	\item We propose \textit{Vector Mass Pruning} in \S\ref{sec:approximate-inverted-index} to decrease the search space and improve query speed while maintaining accuracy.
	\item We evaluate \textsc{Sindi} on multi-scale, multilingual datasets in \S\ref{sec:experimental}, demonstrating $4\times\sim26\times$ higher single-thread \revisedtext{Queries Per Second (QPS)} than \textsc{PyANNS} and \textsc{Seismic} at over 99\% Recall@50, and achieving 8.8M-scale index construction in 60 seconds with minimal cost.
\end{itemize}

\begin{table}[t]
	\centering\footnotesize
	\caption{Summary of Symbols}
	\vspace{0.5em}
	\begin{tabular}{ll}
		\toprule
		\textbf{Symbol} & \textbf{Description} \\
		\hline
		$\mathcal{D}$ & base dataset \\
		$d$ & dimension of $\mathcal{D}$ \\
		$\vec{x}, \vec{q}$ & base vector, query vector \\
		$x,\ \Vert x \Vert$ & sparse format of $\vec{x}$; number of non-zero entries in $x$ \\
		$\vec{x}_i ,\, x_i$ & $i$-th base vector and its sparse format \\
		$x_i^j$ & value of $\vec{x}_i$ in dimension $j$ \\
		$s$ & SIMD width (elements per SIMD operation) \\
		$\lambda$ & window size \\
		$\sigma$ & number of windows \\
		$\alpha$ & base vector pruning ratio \\
		$\beta$ & query vector pruning ratio \\
		$\gamma$ & reorder pool size \\
		$I,\ I_j$ & inverted index; inverted list for dimension $j$ \\
		$I_{j, w}$ & $w$-th window of inverted list $I_j$ \\
		$P^j$ & temporary product array on dimension $j$ \\
		$A, A[m]$ & distance array; value at index $m$ \\
		$\Omega(\vec{x}_1, \vec{x}_2)$ & set of common non-zero dimensions of $\vec{x}_1$ and $\vec{x}_2$ \\
		$\delta(\vec{x}_1, \vec{x}_2)$ & inner product of $\vec{x}_1$ and $\vec{x}_2$ \\
		\bottomrule
	\end{tabular}
	\label{tab:symbols}
\end{table}

\section{Preliminaries}
\label{sec:preliminaries}

\subsection{Problem Definition}
\label{subsec:definition}
Sparse vectors differ from dense vectors in that most of their dimensions have zero values. By storing only the non-zero entries, they significantly reduce storage and computation costs. We formalize the definition as follows.

\begin{definition}[Sparse Vector and Non-zero Entries]
	\label{def:sparse_vector}
	\textit{
		Let $\mathcal{D} \subseteq \mathbb{R}^{d}$ be a dataset of $d$-dimensional sparse vectors.  
		For any $\vec{x} \in \mathcal{D}$, let $x$ denote its sparse representation, defined as the set of non-zero entries:
		$
		x = \{\, x^j \mid x^j \neq 0,\; j \in [0, d-1] \,\}.
		$
		Here, $x^j$ denotes the value of $\vec{x}$ in dimension $j$.  
		The notation $\lVert x \rVert$ denotes the number of non-zero entries in $x$.
	}
\end{definition}

To avoid confusion, we illustrate sparse vectors with an example in Figure~\ref{fig:sv_vs_dv}. 
\begin{example}
	Consider the document ``I love black cats'' encoded into a sparse embedding: $[ \text{I}: 0.2,\, \text{love}: 0.4,\, \text{black}: 0.6,\, \text{cats}: 0.8 ]$. The corresponding sparse representation is $x = \{ x^0 = 0.2,\, x^3 = 0.4,\, x^{9998} = 0.6,\, x^{9999} = 0.8 \}$, where $\lVert x \rVert = 4$.
\end{example}

Since the similarity measure in this work is based on the inner product, we formally define it as follows.

\begin{definition}[Inner Product on Sparse Vectors]
	\label{def:sparse_inner_product}
	\textit{
		Let $\vec{x}_1, \vec{x}_2 \in \mathcal{D}$, and let $x_1$ and $x_2$ denote their sparse representations.  
		Define the set of common non-zero dimensions as  
		$
		\Omega(\vec{x}_1, \vec{x}_2) 
		= \{\, j \mid x_1^j \in x_1 \ \wedge\ x_2^j \in x_2 \,\}.
		$
		The inner product between $\vec{x}_1$ and $\vec{x}_2$ is then given by
		$
		\delta(\vec{x}_1, \vec{x}_2) 
		= \sum_{j \in \Omega(\vec{x}_1, \vec{x}_2)} x_1^j \cdot x_2^j .
		$
	}
\end{definition}

Given the formal definition of the inner product for sparse vectors, we now define the Sparse Maximum Inner Product Search (Sparse-MIPS) , which finds the vector in the dataset that maximizes this similarity measure with the query.

\begin{definition}[Sparse Maximum Inner Product Search]
	\textit{
		Given a sparse dataset
		$\mathcal{D} \subseteq \mathbb{R}^{d}$ and a query point $\vec{q} \in \mathbb{R}^{d}$, the Sparse Maximum Inner Product Search (Sparse-MIPS) returns a vector $\vec{x}^{\ast} \in \mathcal{D}$ that has the maximum inner product with $\vec{q}$, i.e.,
		$
		\vec{x}^{\ast}
		= \underset{\vec{x} \in \mathcal{D}}{\arg\max}\;
		\delta(\vec{x},\, \vec{q}).
		\label{eq:mips}
		$
	}
	\label{def:mips}
\end{definition}

For small datasets, exact Sparse-MIPS can be obtained by scanning all vectors. For large-scale high-dimensional collections, this is prohibitively expensive, and Approximate Sparse-MIPS mitigates the cost by trading a small loss in accuracy for much higher efficiency.

\begin{definition}[Approximate Sparse Maximum Inner Product Search]
	\textit{
		Given a sparse dataset $\mathcal{D} \subseteq \mathbb{R}^{d}$, a query point $\vec{q}$,
		and an approximation ratio $c \in (0, 1]$, let $\vec{x}^{\ast} \in \mathcal{D}$ be the vector that has the maximum inner product with $\vec{q}$.  
		\revisedtext{A $c$-Maximum Inner Product Search ($c$-MIPS)} returns a point $\vec{x} \in \mathcal{D}$ satisfying
		$
		\delta(\vec{q},\, \vec{x}) \;\ge\;
		c \cdot \delta(\vec{q},\, \vec{x}^{\ast}).
		$
	}
	\label{def:amips}
\end{definition}

In practice, $c$-Sparse-MIPS methods can reduce query latency by orders of magnitude compared with exact search, making them preferable for large-scale, real-time applications such as web search and recommender systems.
 
For ease of reference, the main notations and their meanings are summarized in Table~\ref{tab:symbols}, which will be referred to throughout the rest of the paper.

\subsection{Existing Solutions}
\label{discussion}

Representative algorithms for the AMIPS problem on sparse vectors include the inverted-index based \textsc{Seismic}~\cite{seismic}, the graph based \textsc{PyANNS}~\cite{pyanns}. \textsc{Seismic} constructs an inverted list based on vector dimensions. \textsc{PyANNS} creates a proximity graph where similar vectors are connected as neighbors.

\begin{figure}[t!]\vspace{-2ex}
	\centering
	{\includegraphics[width=0.48\textwidth]{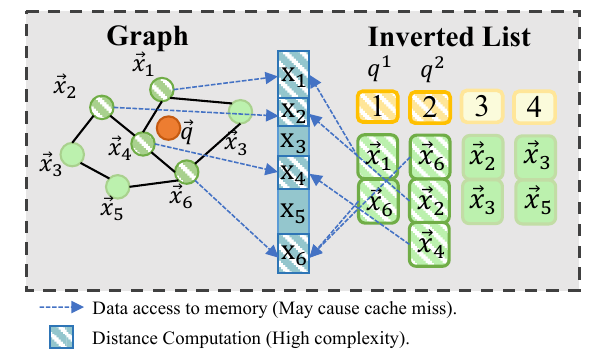}}
	\caption{\small The Bottleneck of the Graph Index and Inverted Index During Searching Process.}
	\label{fig:existing_solutions}\vspace{-2ex}
\end{figure}

\begin{example}
	Figure~\ref{fig:existing_solutions} illustrates a proximity graph and an inverted index constructed for $\vec{x}_{1}$ to $\vec{x}_{6}$.  
	Consider a query vector $\vec{q}$ with two non-zero entries $q^{1}$ and $q^{2}$.  
	In the proximity graph, when the search reaches $\vec{x}_4$, the algorithm computes distances between $\vec{q}$ and all its neighbors, sequentially accessing $x_{1}$, $x_{2}$, $x_{4}$, and $x_{6}$ from memory.  
	In the inverted index, the algorithm traverses the posting lists for dimensions $1$ and $2$, accessing $x_{1}$, $x_{6}$, $x_{2}$, and $x_{4}$.  
	Since vector access during search is essentially random, this incurs substantial random memory access overhead.
	Moreover, because $\lVert x \rVert$ varies across vectors, the distance computation between $\vec{q}$ and $\vec{x}$ has a time complexity of $O(\lVert q \rVert + \lVert x \rVert)$
\end{example}

\noindent \textit{\underline{Redundant Distance Computations}.}  
Sparse vectors incur high distance computation cost due to (i) the explicit lookup needed to identify the common dimensions $\Omega(\vec{x}, \vec{q})$ between a document $\vec{x}$ and a query $\vec{q}$, resulting in complexity $O(\lVert q \rVert + \lVert x \rVert)$, and (ii) the inability of existing algorithms to exploit SIMD acceleration for inner product computation.  
Profiling 6980 queries on the \textsc{MsMarco} dataset (1M vectors) using perf~\cite{perf} and VTune~\cite{vtune} shows that \textsc{PyANNS} spent 83.3\% of CPU cycles on distance calculation.

\noindent \textit{\underline{Random Memory Accesses.}} The inefficiency of memory access in existing algorithms can be attributed to two main factors.  First, \textsc{Seismic} organize similar data into the same partition. To improve accuracy, vectors are replicated across multiple partitions. This replication breaks the alignment between storage layout and query traversal order, preventing cache-friendly sequential access. During retrieval, the index returns candidate vector IDs, which incur random memory accesses to fetch data, leading to frequent cache misses. Moreover, $\lVert x \rVert$ varies across sparse vectors, requiring offset table lookups to locate each vector’s data.  
In our measurements, \textsc{Seismic} averaged 5168 random vector accesses per query (5.1\,MB), with an L3 cache miss rate of 67.68\%.

\section{Full Precision Inverted Index}
\label{sec:full-precision-inverted-index}

This section introduces full-precision \textsc{Sindi}, an inverted index designed for sparse vector retrieval. Its advantages are organized along three aspects: index structure, distance computation, and cache optimization. 

\begin{itemize}[leftmargin=*,labelsep=0.5em]
	\item In \S~\ref{subsec:value-storing-inverted-index}, \textsc{Sindi} constructs a value-based inverted index by storing both vector identifiers and their corresponding dimension values in posting lists. This design eliminates the redundant dimension-matching overhead present in traditional inverted indexes.
	\item In \S~\ref{subsec:efficient-distance-computation}, \textsc{Sindi} employs a two-phase search process involving \textit{product computation} and \textit{accumulation}. By using SIMD instructions in multiplication, it reduces query complexity from $O(\Vert q \Vert + \Vert x \Vert)$ to $O\Bigl(\frac{\Vert q \Vert}{s}\Bigr)$. \revisedtext{This yielding superior CPU utilization compared to state-of-the-art methods and improves query throughput}.
	
	\item In \S~\ref{subsec:cache-optimization}, to reduce cache misses, \textsc{Sindi} uses a Window Switch strategy, partitioning posting lists into fixed-size segments ($\lambda$) sharing a distance array. This minimizes memory overhead, and as shown theoretically and in Figure~\ref{fig:window}, an optimal $\lambda$ reduces access costs.
\end{itemize}

\subsection{Value-storing Inverted Index}
\label{subsec:value-storing-inverted-index}

Redundant inner product computations arise because identifying the common non-zero dimensions $\Omega(\vec{q}, \vec{x})$ between a query vector $\vec{q}$ and a document vector $\vec{x}$ requires scanning many irrelevant entries outside their intersection.  
We observe that the document identifiers retrieved from traversing an inverted list correspond precisely to the dimensions in $\Omega(\vec{x}, \vec{q})$.  
Therefore, when accessing a document $\vec{x}$ from the list of dimension~$j$, we can simultaneously retrieve its value $x^j$, thereby enabling direct computation of the inner product without incurring the overhead of finding $\Omega(\vec{q}, \vec{x})$.

\begin{figure}[t]
	\centering
	{\includegraphics[width=0.48\textwidth]{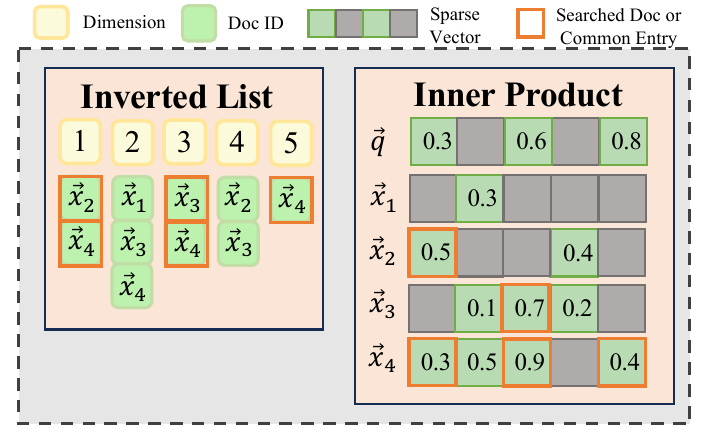}}
	\caption{\small Overlap of Inverted List Entries and Common Non-Zero Dimensions in Inner-Product Computation.}
	\label{fig:motivation}\vspace{-2ex}
\end{figure}

\begin{example}
	Figure~\ref{fig:motivation} illustrates the inverted lists constructed for vectors $x_1$ to $x_5$.  
	When a query $q$ arrives, it sequentially probes the inverted lists for dimensions~1, 3, and~5.
	In the base dataset, only dimensions in $\Omega(\vec{q}, \vec{x})$ need to be traversed during the inner product computation.  
	For example, although $x_4$ has a value in dimension~2, this dimension is not in $\Omega(\vec{q}, x_4)$ and thus never accessed.
	We further observe that the document identifiers retrieved from the inverted lists overlap exactly with those used to determine the common non-zero dimensions for the inner product.
	This implies that the products of these non-zero entries can be computed during document retrieval, thereby ensuring that only dimensions in $\Omega(\vec{q}, \vec{x})$ are involved in the inner product computation.
\end{example}

Inspired by these observations, we extend each inverted list $I_j$ to store not only the identifier $i$ of the vector $\vec{x}_i$, but also the value $x_i^j$ in dimension~$j$.  
In our notation, we simply write $x_i^j$ in $I_j$ to denote this stored value, with the subscript $i$ implicitly encoding the associated vector ID.  
This value-storing design eliminates the cost of explicitly locating $\Omega(\vec{x}_i, \vec{q})$ during inner product computation, as well as the additional random memory access needed to fetch $x_i^j$ from the original vector.

\subsection{Efficient Distance Computation}
\label{subsec:efficient-distance-computation}
With the value-storing inverted index, the inner product between a query $\vec{q}$ and candidate vector $\vec{x}$ can be computed without explicitly identifying $\Omega(\vec{q}, \vec{x})$.  
During search, \textsc{Sindi} organizes the computation into two stages: \emph{product computation} and \emph{accumulation}. In the first stage, as each relevant posting list $I_j$ is traversed, the products $q^j \times x_i^j$ are computed (using SIMD instructions when possible) and stored sequentially in a temporary array $P^j$.  
Here, $P^j$ holds the products for $I_j$, with each entry $P^j[t]$ corresponding to the $t$-th entry in $I_j$. Therefore, $P^j$ has the same length as $I_j$. In the second stage, the values in $P^j$ are aggregated into a preallocated distance array $A$ of length $\lVert \mathcal{D} \rVert$, where each entry $A[i]$ stores the accumulated score for vector $\vec{x}_i$.  
This arrangement enables $O(1)$ time per accumulation into $A$, and naturally exploits SIMD parallelism.  

The following example illustrates the detailed steps of the search procedure introduced above.

\begin{example} 
	Figure~\ref{fig:example_of_sindi} illustrates the search procedure.  
	Since $\lVert \mathcal{D} \rVert = 9$, the distance array $A$ is initialized with $\text{size}(A) = 9$ and all elements set to $0$.  
	The query $\vec{q}$ contains three non-zero components, $q^1$, $q^5$, and $q^8$, so only the inverted lists $I_1$, $I_5$, and $I_8$ are traversed.
	Consider $\vec{x}_4$ as an example.  
	From $I_1$, we obtain $x_4^1 = 6.8$, and the product $q^1 \times x_4^1 = 17.0$ (this multiplication can be SIMD-accelerated) is temporarily stored in $P[0]$.  
	Accumulating $P[0]$ into $A[4]$ gives $A[4] = 0 + 17.0 = 17.0$.  
	Similarly, from $I_5$ we compute $x_4^5 \times q^5 = 14.0$, store it in $P[2]$, and add it to obtain $A[4] = 31.0$.  
	The computation for $I_8$ proceeds analogously.  
	Finally, $A[4]$ becomes $36.1$, which equals $\delta(\vec{x}_4, \vec{q})$.
	Although $\vec{x}_4$ has another non-zero entry $x_4^2$, it is not in $\Omega(\vec{x}_4, \vec{q})$ and thus does not contribute to the inner product.  
	The same accumulation process applies to $\vec{x}_2$, $\vec{x}_3$, $\vec{x}_6$, and $\vec{x}_7$.  
	Eventually, $A[4]$ holds the largest value, so the nearest neighbor of $\vec{q}$ is $\vec{x}_4$.
\end{example}

\begin{figure}[!t]\centering
	{\includegraphics[width=0.4\textwidth]
		{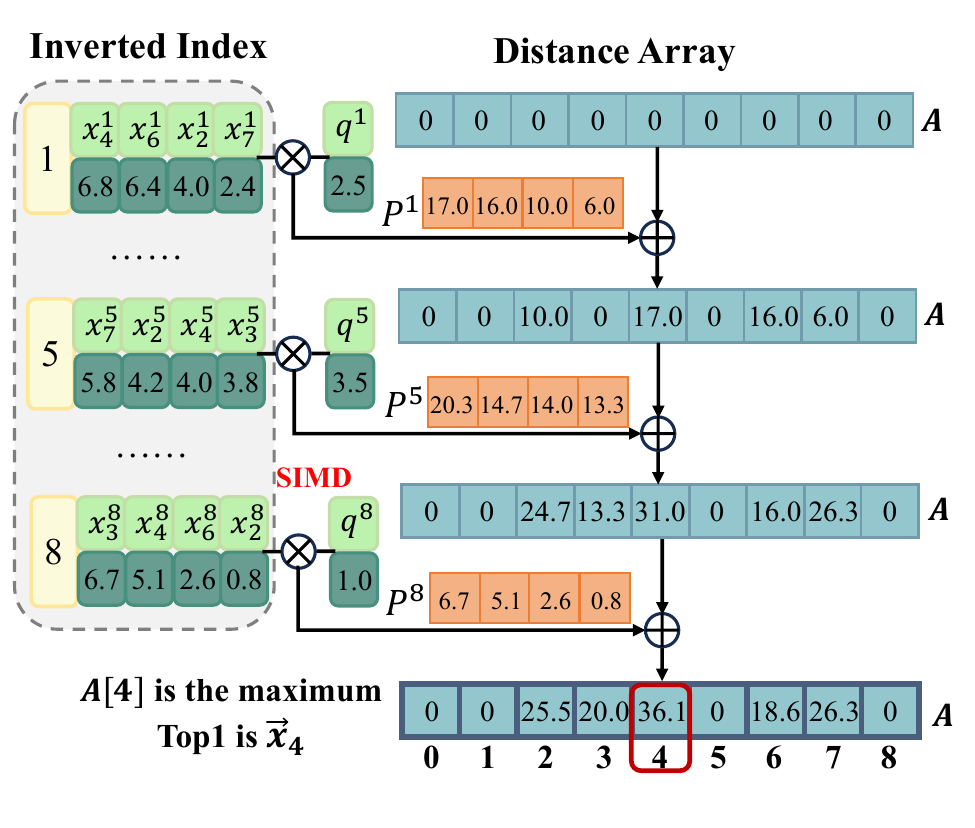}}\vspace{-1ex}
	\caption{\small An Example of \textsc{Sindi} and Query Process.}
	\vspace{-2ex} 
	\label{fig:example_of_sindi}
\end{figure}

Using SIMD instructions, \textsc{Sindi} processes the $j$-th inverted list $I_j$ in batches, multiplying $q^j$ with each $x_i^j$ it contains and writing the results sequentially into $P^j$.  
This approach maximizes CPU utilization and reduces the complexity of the inner product computation from $O(\lVert q \rVert + \lVert x \rVert)$ to $O\!\left(\frac{\lVert q \rVert}{s}\right)$, where $s$ denotes the number of elements processed per SIMD operation.  
The complexity of \textsc{Sindi}'s distance computation is derived as follows:

\begin{theorem}[Amortized Time Complexity of \textsc{Sindi} Distance Computation]
	\label{thm:sindi-complexity}
	Let $\mathcal{D}$ be the dataset, $I$ its inverted index, and $I_j$ the posting list for dimension~$j$.  
	Let $\mathcal{I}^j = \{ x_i^j \mid x_i^j \neq 0 \}$ denote the set of non-zero entries in $I_j$.
	
	Given a query vector $\vec{q}$, let $\mathcal{J} = \{\, j \mid q^j \neq 0 \,\}$ be the set of query dimensions with non-zero entries.  
	Let $\mathcal{X} = \{\, \vec{x}_i \mid \exists j \in \mathcal{J},\; x_i^j \neq 0 \,\}$ denote the set of candidate vectors retrieved by $\vec{q}$.
	
	Let $s$ be the number of elements that can be processed simultaneously using SIMD instructions.  
	Then the amortized per-vector time complexity of computing the inner product between $\vec{q}$ and all $\vec{x}_i \in \mathcal{X}$ is $\Theta\!\left( \frac{\lVert q \rVert}{s} \right)$.
\end{theorem}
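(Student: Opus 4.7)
The plan is to decompose the total cost of SINDI's distance computation into its two phases — product computation and accumulation — bound each phase separately, and then amortize the aggregate cost over the $|\mathcal{X}|$ candidate vectors touched by the query $\vec{q}$.

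First, I would bound the product-computation phase. For each dimension $j \in \mathcal{J}$, SINDI streams through the posting list $I_j$ in SIMD batches of width $s$, performing $\lceil |\mathcal{I}^j|/s \rceil$ vectorized multiplications and writing the results sequentially into $T^j$. Summed over $\mathcal{J}$, this contributes $\Theta\!\left( \frac{1}{s}\sum_{j \in \mathcal{J}} |\mathcal{I}^j| \right)$ operations, up to a lower-order $|\mathcal{J}|$ rounding term. For the accumulation phase, each entry $T^j[t]$ is scattered into its slot in $A$ in $O(1)$ time (optionally vectorized), so this phase is bounded by the product phase and does not change the asymptotics.

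The key combinatorial step is to relate $\sum_{j \in \mathcal{J}} |\mathcal{I}^j|$ to $\lVert q \rVert$ and $|\mathcal{X}|$ by exchanging the order of summation. Each candidate $\vec{x}_i \in \mathcal{X}$ appears in $I_j$ for exactly the dimensions $j \in \Omega(\vec{x}_i, \vec{q})$, which yields $\sum_{j \in \mathcal{J}} |\mathcal{I}^j| = \sum_{\vec{x}_i \in \mathcal{X}} |\Omega(\vec{x}_i, \vec{q})|$. Since $|\Omega(\vec{x}_i, \vec{q})| \le \lVert q \rVert$, the sum is at most $\lVert q \rVert \cdot |\mathcal{X}|$, and dividing by $|\mathcal{X}|$ delivers the amortized upper bound $O\!\left(\lVert q \rVert / s\right)$ per candidate.

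The main obstacle will be the matching lower bound needed for the $\Theta$ claim. In a pathological instance where candidates overlap $\vec{q}$ in only a constant number of dimensions, the amortized per-vector cost drops toward $\Theta(1/s)$, so the $\Theta$ statement implicitly presumes a balanced-overlap regime in which $|\Omega(\vec{x}_i, \vec{q})| = \Theta(\lVert q \rVert)$ on average over $\vec{x}_i \in \mathcal{X}$. I would make this regularity assumption explicit and then witness the lower bound with the tight family in which every candidate shares all $\lVert q \rVert$ query dimensions, so that $\sum_{j \in \mathcal{J}} |\mathcal{I}^j| = \Theta(\lVert q \rVert \cdot |\mathcal{X}|)$ and both directions of the bound hold. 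The remaining derivation is routine bookkeeping that can be compressed into the concluding calculation.
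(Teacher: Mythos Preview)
Your proposal is correct and follows essentially the same path as the paper's proof: count the total postings $\sum_{j\in\mathcal{J}}\lVert\mathcal{I}^j\rVert$, divide by $s$ for SIMD, amortize over $\lVert\mathcal{X}\rVert$, and bound each candidate's contribution by $\lVert q\rVert$ via $\lVert\Omega(\vec{x}_i,\vec{q})\rVert\le\lVert q\rVert$. Your explicit double-counting identity $\sum_{j\in\mathcal{J}}\lVert\mathcal{I}^j\rVert=\sum_{\vec{x}_i\in\mathcal{X}}\lVert\Omega(\vec{x}_i,\vec{q})\rVert$ makes the step cleaner than the paper's presentation, and your caveat about the $\Theta$ lower bound is well-placed --- the paper's own proof establishes only the $O$ direction and then asserts $\Theta$ without the regularity argument you supply.
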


\begin{proof}
	The total number of non-zero entries accessed in all posting lists for $\mathcal{J}$ is
	$
	\sum_{j \in \mathcal{J}} \lVert \mathcal{I}^j \rVert.
	$
	Since $s$ entries can be processed in parallel using SIMD, the total time is
	$
	T_{\text{total}} = \sum_{j \in \mathcal{J}} \frac{\lVert \mathcal{I}^j \rVert}{s}.
	$
	Amortizing over all $\lVert \mathcal{X} \rVert$ candidates gives
	$
	T = \frac{T_{\text{total}}}{\lVert \mathcal{X} \rVert}
	= \frac{\sum_{j \in \mathcal{J}} \lVert \mathcal{I}^j \rVert}{s \cdot \lVert \mathcal{X} \rVert}.
	$
	For any $\vec{x}_i \in \mathcal{X}$, we have $q \cap x_i \subseteq q$, implying that $\Vert \Omega(\vec{x}_i, \vec{q}) \Vert \le \Vert q \Vert$. Therefore:
	$
	T \le \frac{\sum_{\vec{x}_i \in \mathcal{X}} \Vert q \Vert}{s \cdot \Vert \mathcal{X} \Vert}
	= \frac{\Vert q \Vert \cdot \Vert \mathcal{X} \Vert}{s \cdot \Vert \mathcal{X} \Vert}
	= \frac{\Vert q \Vert}{s}.
	$
	Substituting into $T$ yields
	$
	T \le \frac{\sum_{\vec{x}_i \in \mathcal{X}} \lVert q \rVert}{s \cdot \lVert \mathcal{X} \rVert}
	= \frac{\lVert q \rVert}{s}.
	$
	Hence, the amortized per-vector complexity is
	$
	\Theta\!\left( \frac{\lVert q \rVert}{s} \right).
	$
\end{proof}

\subsection{Cache Optimization}
\label{subsec:cache-optimization}

When the dataset size $\mathcal{D}$ reaches the million scale, the distance array $A$ becomes correspondingly large.  
Random accesses to such a long array cause frequent cache misses, making query performance highly memory-bound.  
In addition, allocating a full-length distance array for every query incurs substantial memory overhead.  

To address these issues, \textsc{Sindi} adopts the \emph{Window Switch} strategy, which partitions the vector ID space into fixed-size windows and restricts each query to accessing IDs within a single window at a time.
With a shorter distance array per window, the accessed entries are located within a much more compact memory region. This substantially improves spatial locality, and the resulting access pattern closely resembles a sequential scan, enabling hardware prefetching and reducing cache misses.

\subsubsection{Window Switch}
During index construction, \textsc{Sindi} partitions the dataset $\mathcal{D}$ into contiguous ID segments, referred to as \emph{windows}.  
The window size is denoted by $\lambda$ ($0 < \lambda \le \lVert \mathcal{D} \rVert$), and the total number of windows is $\sigma = \left\lceil \frac{\lVert \mathcal{D} \rVert}{\lambda} \right\rceil$.  
The $w$-th window contains vectors from $\vec{x}_{w\lambda}$ to $\vec{x}_{(w+1)\lambda - 1}$, and the window index of vector $\vec{x}_i$ is $\left\lfloor \tfrac{i}{\lambda} \right\rfloor$.  
Each inverted list $I_j$ is partitioned in the same way, so every list has $\sigma$ windows.  
We denote the $w$-th window of the $j$-th inverted list by $I_{j,w}$ ($0 \le w < \sigma$), whose entries are the non-zero $x_i^j$ in dimension~$j$ for vectors in that ID range. Each $x_i^j$ implicitly carries the identifier $i$ through its subscript.

At query time, the length of the distance array $A$ is set to the window size $\lambda$, and all windows share this same $A$.  
Within a window, each vector $\vec{x}_i$ is mapped to a unique entry $A[i \bmod \lambda]$.

The search procedure for the $w$-th window proceeds in two steps:  

\textbf{(1) Inner product computation.}  
For each scanned posting list $I_{j,w}$, compute the products $q^j \times x_i^j$ for all its entries, using SIMD instructions when possible.  
These products are written sequentially into a temporary array $P^j$, which has the same length as $I_{j,w}$ and is index-aligned with it — i.e., $P^j[t]$ stores the product for the $t$-th posting in $I_{j,w}$.  
The accumulation stage then adds each $P^j[t]$ into the corresponding $A[i \bmod \lambda]$ using $O(1)$ time per update.

\textbf{(2) Heap update.}  
After processing all query dimensions for the current window, $A$ holds the final scores for that window's candidates.  
Scan $A$ to insert the top-scoring vectors into a minimum heap $H$, which maintains the vector IDs and scores for the results to be returned.  
Here, each $A[t]$ corresponds to vector $\vec{x}_{t + \lambda \times w}$, so the global ID can be recovered from the local index~$t$.

Note that \textit{Window Switch} only changes the order of list entries scanned, without altering the number of arithmetic operations.  
Thus, the time complexity of distance computation remains $O\!\left( \frac{\lVert q \rVert}{s} \right)$.

\begin{algorithm}[t]\small
	\DontPrintSemicolon
	\KwIn{A sparse dataset $\mathcal{D}$ and dimension $d$, window size $\lambda$}
	\KwOut{Inverted list $I$}
	\For{$j \in \{0,...,d-1\}$}{
		$\mathcal{X} \leftarrow \{\, \vec{x}_i \in \mathcal{D} \mid x_i^j \neq 0 \,\}$; \\
		\ForEach{$\vec{x}_i \in \mathcal{X}$}{
			$w \leftarrow \lfloor \frac{i}{\lambda} \rfloor$\;
			$I_{j,w}.append(x_i^j)$\;
		}
		
	}
	\Return{$I$}\;
	\caption{\textsc{PreciseSindiConstruction}}
	\label{algo:full_precision_ivf_construction}
\end{algorithm}

\subsubsection{Construction and Search}
The construction process of the full-precision \textsc{Sindi} index with \textit{Window Switch} is shown in Algorithm~\ref{algo:full_precision_ivf_construction}.  
Given a sparse vector dataset $\mathcal{D}$ of dimension $d$, the algorithm iterates over each dimension $j$ (Line~1).  
For each $j$, it collects all vectors $\vec{x}_i \in \mathcal{D}$ having a non-zero entry $x_i^j$ into a temporary set $\mathcal{X}$ (Line~2).  
These vectors are then appended to the corresponding window $I_{j,w}$ of $I_j$ based on their IDs, where $w = \lfloor i / \lambda \rfloor$ (Lines~3--5).  
After processing all dimensions, the inverted index $I$ is returned (Line~6).  
The time complexity of the construction process is $O(\lVert\mathcal{D}\rVert \lVert\bar{x}\rVert)$, where $\lVert \bar{x} \rVert = \frac{\sum_{\vec{x}_i \in \mathcal{D}} \lVert x_i \rVert}{\lVert\mathcal{D}\rVert}$ denotes the average number of nonzero entries per vector.

The search process for the full-precision \textsc{Sindi} index is summarized in Algorithm~\ref{algo:full_precision_ivf_search}, and consists of three stages: product computation, accumulation, and heap update.  
Given a query $\vec{q}$, inverted index $I$, and target recall size $k$, a distance array $A$ of length $\lambda$ is initialized to zeros (Lines~1–2) and an empty min-heap $H$ is created (Line~3).  
The outer loop traverses all windows $w \in \{0, \dots, \sigma-1\}$ (Line~4).  
For each non-zero query component $q^j$ (Line~5), SIMD-based batched multiplication is performed with all $x_i^j$ in $I_{j,w}$, and the results are stored sequentially into a temporary product array $P^j$ aligned with $I_{j,w}$ (Line~6).  
Each $x_i^j$ is then retrieved (Lines~7–8), its mapped index $m = i \bmod \lambda$ computed (Line~9), and $P^j[t]$ accumulated into $A[m]$ (Line~10).  
After all $q^j$ for the current window are processed, the heap update stage begins (Line~12): each $A[m]$ that exceeds the heap minimum or when $H$ has fewer than $k$ entries is inserted into $H$ with its global ID $(m + \lambda w)$ and score $A[m]$ (Lines~13–14), removing the smallest if size exceeds $k$ (Lines~15–16).  
$A[m]$ is then reset to zero for the next window (Line~17).  
When all windows are processed, $H$ contains up to $k$ vector IDs with their full-precision distances to $\vec{q}$, which are returned as the final result (Line~20).

\noindent\textbf{Complexity.} Let $l = \frac{\sum_{q^j \in q} |I_j|}{\lVert q \rVert}$ denote the average number of nonzero entries in the traversed lists.  
With \textit{Window Switch}, the total number of postings visited is still $\lVert q \rVert \, l$, so the time complexity of a full-precision query is $O\!\left(\frac{\lVert q \rVert \, l}{s}\right)$, where $s$ is the SIMD batch size. Hence, the computational cost is independent of the window size $\lambda$.

\begin{algorithm}[t]\small
	\DontPrintSemicolon
	\KwIn{Query $\vec{q}$, an inverted list $I$, and $k$}
	\KwOut{At most $k$ points in $\mathcal{D}$}
	\For{$m \in \{0,...,\lambda-1\}$}{
		$A[m] \leftarrow 0$
	}
	$H \leftarrow$ empty min-heap \\
	
	\For{$w \in \{0,...,\sigma-1\}$ }{
		\ForEach{$q^j \in q$}{
			$P^j \leftarrow$ \textsc{Simd}Product($q^j, \, I_{j,w}$); \\
			\For{$t \in \{0, \dots, I_{j,w}.size() - 1\}$}{
				$x_i^j \leftarrow I_{j,w}[t]$; \\
				$m \leftarrow i \bmod \lambda$; \\
				$A[m] \leftarrow A[m] + P^j[t]$; \\
			}
		}
		\For{$m \in \{0,...,\lambda-1\}$}{
			\uIf{$A[m] > H.min()$ \textbf{or} $H.len() < k$}{
				$H.insert(m+\lambda \times w,A[m])$ \\
			}
			\uIf{$H.len() > k$}{
				$H.pop()$ \\
			}
			$A[m] \leftarrow 0$ \\
		}
	}
	\Return{$H$}\;
	\caption{\textsc{PreciseSindiSearch}}
	\label{algo:full_precision_ivf_search}
\end{algorithm}

\subsection{Analysis of Window Size's Impact on Performance}

While the \textit{Window Switch} strategy does not alter the overall computational complexity, two types of memory-access costs are sensitive to the window size~$\lambda$:

\begin{itemize}[leftmargin=*,labelsep=0.5em]
	\item \textbf{Random-access cost to the distance array.}  
	During accumulation, each partial score $P^j[t]$ is added to $A[i \bmod \lambda]$, producing a random write pattern over $A$.  
	When $\lambda$ decreases, the length of $A$ becomes smaller, more of it fits in cache, and this random-access cost decreases due to fewer cache misses.
	
	\item \textbf{Cache eviction to sub-list cost when switching windows.}  
	Under \textit{Window Switch}, the search iterates over multiple posting sub-lists $I_{j,w}$ for different dimensions $j$ within the same window $w$.  
	Switching from one dimension's sub-list to another may evict previously cached sub-list data from memory.  
	When $\lambda$ decreases, the number of windows $\sigma = \frac{\lVert\mathcal{D}\rVert}{\lambda}$ increases, leading to more frequent loading and eviction of these sub-lists, and hence increasing this cost.
\end{itemize}

Selecting $\lambda$ thus requires balancing the reduced random-access cost to the distance array against the increased cache-eviction cost to inverted sub-lists. The following example illustrates this trade-off:

\revisedtext{
	\begin{example}  
		Figure~\ref{fig:window} reports experimental results for the full-precision \textsc{Sindi} on the SPLADE-FULL and AntSparse-10M datasets.  
		For each dataset, we executed queries under different window sizes $\lambda$ and measured the QPS.  
		We also used the Intel VTune Profiler~\cite{vtune} to record \textit{memory bound} metrics for two types of memory accesses: random accesses to the distance array (arising from accumulation writes) and cache eviction to sub-lists (occurring when switching dimensions within a window).  
		Here, \textit{memory bound} denotes the percentage of execution time stalled due to memory accesses.  
		For the SPLADE-FULL dataset, as $\lambda$ increases from $1$K to $8.8$M, the memory bound from distance array accesses increases monotonically, whereas that from sub-list cache evictions decreases monotonically.  
		The total memory-bound latency reaches its minimum near $\lambda\approx 150$K, corresponding to the highest query throughput.  
		The AntSparse-10M dataset exhibits the same trend, confirming the existence of an optimal window size.
	\end{example}
}

Based on this, the memory-access latency for queries can be modeled by a double power-law~\cite{bertsekas2021data, hennessy2011computer}:
\begin{equation}
	T_{\text{mem}}(\lambda) = c_{rand} \lambda^{+\alpha} + c_{evict} \lambda^{-\beta} + C_0,
	\label{eq:window-mem}
\end{equation}
where:
\begin{itemize}[leftmargin=*,labelsep=0.5em]
	\item $\lambda$ is the window size;
	\item $c_{rand}\lambda^{+\alpha}$ models the increasing cost of random accesses to the distance array as $\lambda$ grows;
	\item $c_{evict}\lambda^{-\beta}$ models the decreasing cost of cache eviction to sub-lists with larger $\lambda$;
	\item $C_0$ is the baseline memory-access cost unrelated to $\lambda$.
\end{itemize}

This function reaches its minimum at $\lambda^* = \left(\frac{c_{evict}\beta}{c_{rand}\alpha}\right)^{\frac{1}{\alpha+\beta}}$. When $\lambda \ll \lambda^*$, $T_{\text{mem}}$ is dominated by the sub-list eviction term and decreases as $\lambda$ increases; when $\lambda \gg \lambda^*$, the distance-array term dominates and $T_{\text{mem}}$ grows with $\lambda$. The optimum $\lambda^*$ occurs when these two terms are balanced.

\revisedtext{
	\begin{example}
		Figure~\ref{fig:window} presents the theoretical QPS curves, where parameters $(\alpha,\beta)$ are estimated via log--log regression~\cite{draper1998applied} and $(c_{rand},c_{evict},C_0)$ via least-squares fitting~\cite{lawson1995solving} of the double power-law model. For SPLADE-FULL, the model predicts an optimal $\lambda^* \approx 1.2 \times 10^5$, while for AntSparse-10M, it predicts $\lambda^* \approx 5.1 \times 10^4$. Both predictions align with the robust high-throughput interval $\lambda \in [50,000, 100,000]$. This broad plateau confirms that within this range, performance is stable, rendering precise point-wise tuning unnecessary despite minor hardware-level variability. More details are shown in Appendix B.
	\end{example}
}

\begin{figure}[!t]
	\centering
	\subfigure{
		\scalebox{0.5}[0.5]{\includegraphics{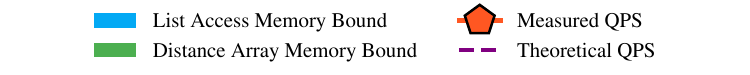}}}\hfill\\\vspace{-3ex}
	\addtocounter{subfigure}{-1}
	\subfigure[][{\scriptsize SPLADE-FULL}]{
		\scalebox{0.15}[0.15]{\includegraphics{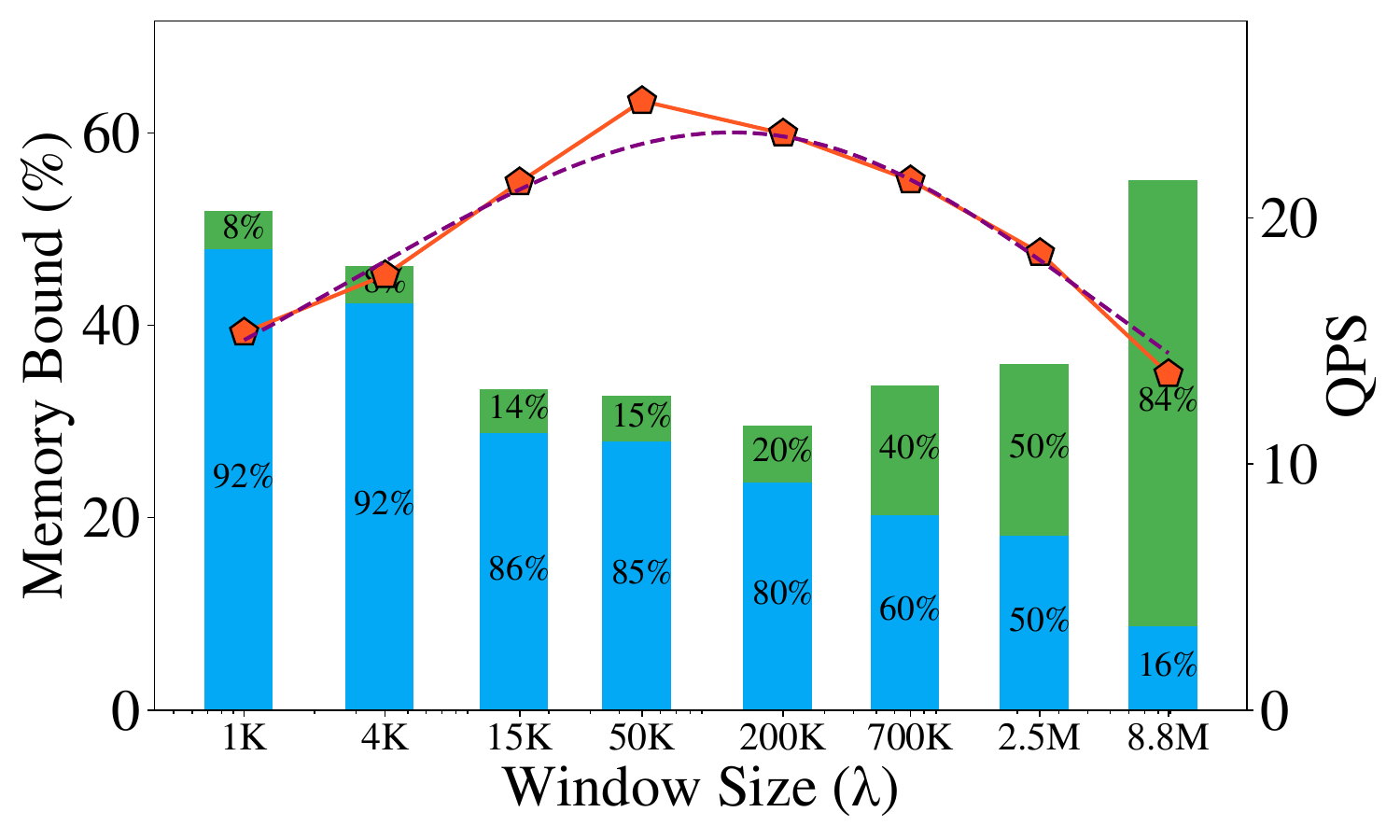}}
		\label{subfig:splade-full-window}}\vspace{-2ex}
	\hfill
	\subfigure[][{\scriptsize AntSparse-10M}]{
		\scalebox{0.15}[0.15]{\includegraphics{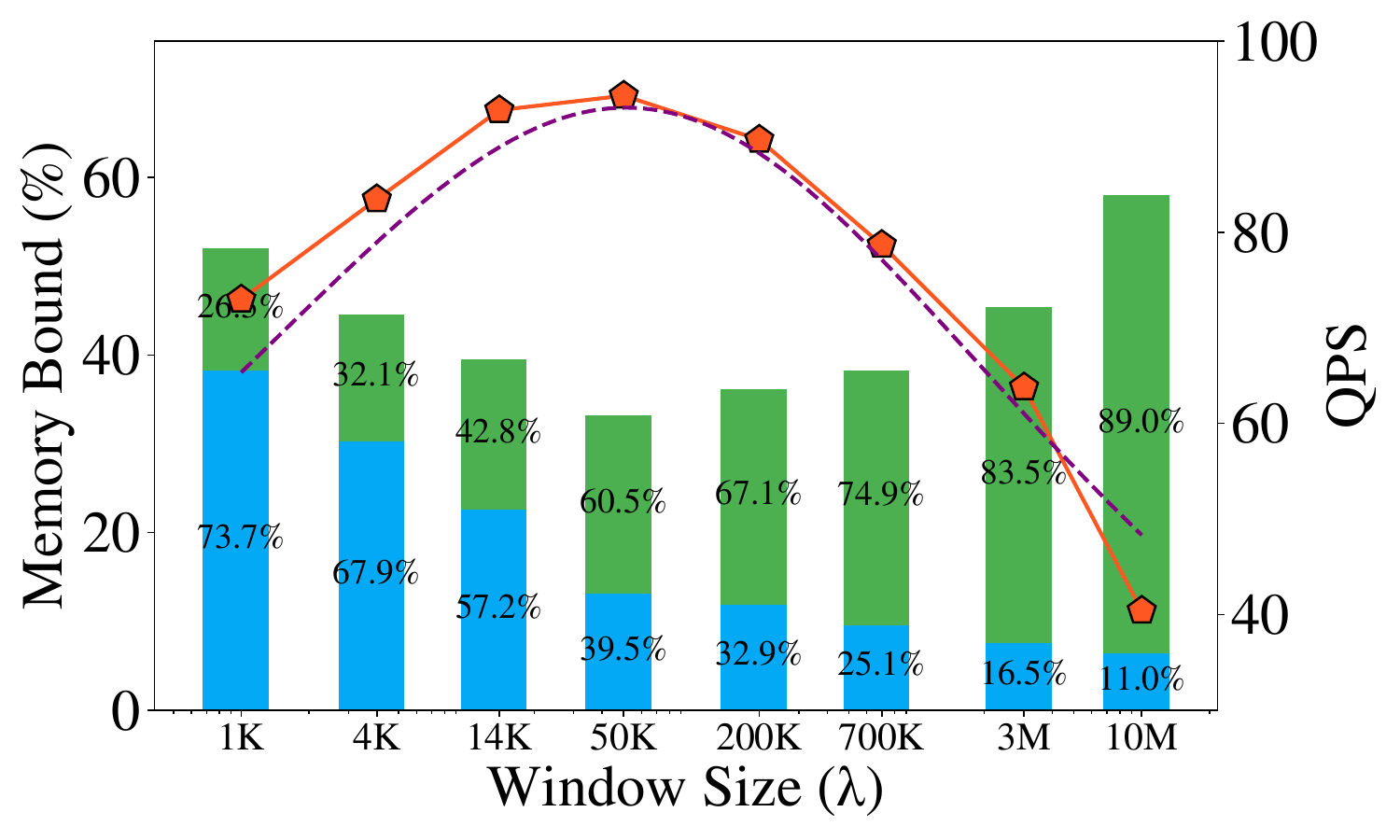}}
		\label{subfig:antsparse-10m-window}}
	\vspace{1ex} 
	
	\caption{\revisedtext{\small Impact of Window Size on Query Throughput and Memory Accesses.}}
	\vspace{-2ex} 
	\label{fig:window}
\end{figure}

\section{Approximate Inverted Index}
\label{sec:approximate-inverted-index}

This section focuses on optimizing the query process through pruning and reordering. 
In our framework, pruning serves as a form of coarse retrieval, reducing the number of non-zero entries or inverted-list lengths to quickly generate a compact candidate set. 
\textit{Reordering} then plays the role of fine-grained ranking by computing exact inner products for this candidate set. Combining these two stages significantly improves query throughput while keeping the loss in recall negligible.

\subsection{Pruning Strategies}
\label{subsec:pruning}
A key property of sparse vectors is that a small number of high-valued non-zero entries can preserve the majority of the vector's overall information content~\cite{splade}. 
This distribution pattern typically results from the training objectives of sparse representation models. 
For instance, \textsc{SPLADE} often concentrates the most informative components into a limited set of high-weight dimensions. Conversely, many low-valued non-zero entries correspond to common stopwords (e.g., ``is'', ``the''), which can be safely removed with minimal impact on retrieval quality.

Let $\vec{x}_i'$ denote the pruned version of document $\vec{x}_i$, 
and $\vec{q}'$ the pruned version of query $\vec{q}$.  
Let $l$ be the average posting-list length before pruning and $l'$ the average posting–list length after pruning.  
The reduction in computational cost achieved by pruning is
$
\Vert q \Vert \cdot l \;-\; \Vert q' \Vert \cdot l',
$

For a given pruning operator $\phi$, we define the \emph{inner product error} for document $\vec{x}_i$ as
$
e^{(\phi)}_i = \delta(\vec{x}_i, \vec{q}) - 
\delta\!\bigl(\phi(\vec{x}_i), \phi(\vec{q})\bigr),
$
where $\delta(\cdot,\cdot)$ denotes the exact inner product and $\phi(\cdot)$ applies the pruning transformation.  
The total inner product error over the dataset $\mathcal{D}$ is then
$
\varepsilon^{(\phi)} = \sum_{\vec{x}_i \in \mathcal{D}} e^{(\phi)}_i
$

Smaller $\Vert x'_i \Vert$ yields higher query throughput, typically incurs a larger inner product error.
Therefore, pruning must be designed with a trade-off between efficiency and accuracy. 
The following example shows that it is possible to retain a subset of high–value non-zero entries while incurring merely a small loss in inner product accuracy.

\begin{example}
	Figure~\ref{subfig:ip_error} reports the inner product error measured on a $100$K-scale dataset when retaining different proportions of the largest non-zero entries from both document and query vectors. 
	The results show that the error drops rapidly as the retaining ratio increases from 0.1 to 0.3, and becomes almost negligible once the ratio exceeds 0.5. 
	This indicates a saturation effect, where further increasing the retaining ratio yields minimal additional gains in accuracy.
\end{example}

As discussed in Section~\ref{sec:full-precision-inverted-index}, 
for full-precision \textsc{Sindi} the upper bound for computing the inner product between $\vec{x}_i$ and $\vec{q}$ is $\varTheta\!\left( \frac{\Vert q \Vert}{s} \right)$. 
For a given $\vec{q}$, the overall query complexity is $O\!\left( \frac{\Vert q \Vert \, l}{s} \right)$, where $l$ is the average number of inverted lists traversed. 
Thus, reducing query latency requires decreasing $l$, $\Vert x \Vert$, and $\Vert q \Vert$. 
This can be achieved via three approaches: list pruning, document pruning, and query pruning. 
List and document pruning are performed during index construction, while query pruning is applied at query time. 
In this work, we focus on the construction stage, since query pruning and document pruning are both forms of vector pruning. 
We compare three strategies—\textit{List Pruning (LP)}, \textit{Vector Number Pruning (VNP)}, and \textit{Mass Ratio Pruning (MRP)}—and analyze their respective strengths and weaknesses.

\revisedtext{
	As discussed in Section~\ref{sec:full-precision-inverted-index}, 
	for full-precision \textsc{Sindi} the upper bound for computing the inner product between $\vec{x}_i$ and $\vec{q}$ is $\varTheta\!\left( \frac{\Vert q \Vert}{s} \right)$. 
	For a given $\vec{q}$, the overall query complexity is $O\!\left( \frac{\Vert q \Vert \, l}{s} \right)$, where $l$ is the average number of inverted lists traversed. 
	Thus, reducing query latency requires decreasing $l$, $\Vert x \Vert$, and $\Vert q \Vert$. This can be achieved via retaining entries with the largest magnitude. 
	While magnitude-based pruning is an established concept—exemplified by \textit{List Pruning (LP)} in \textsc{Seismic}~\cite{seismic} and utilized in \textsc{Bmp}~\cite{bmp}—\textsc{Sindi}'s \textit{Mass Ratio Pruning (MRP)} introduces a distinct objective. Unlike prior methods that employ pruning primarily to shrink the search space (filtering), MRP is architected to minimize \textbf{inner product error} via an adaptive cumulative mass ratio, ensuring high-quality candidates for reordering. In the following, we systematically compare LP, \textit{Vector Number Pruning (VNP)}, and MRP to analyze their respective trade-offs.
}

\begin{figure}[!t]
	\centering
	\subfigure[][{\scriptsize Inner Product Error}]{
		\scalebox{0.17}[0.17]{\includegraphics{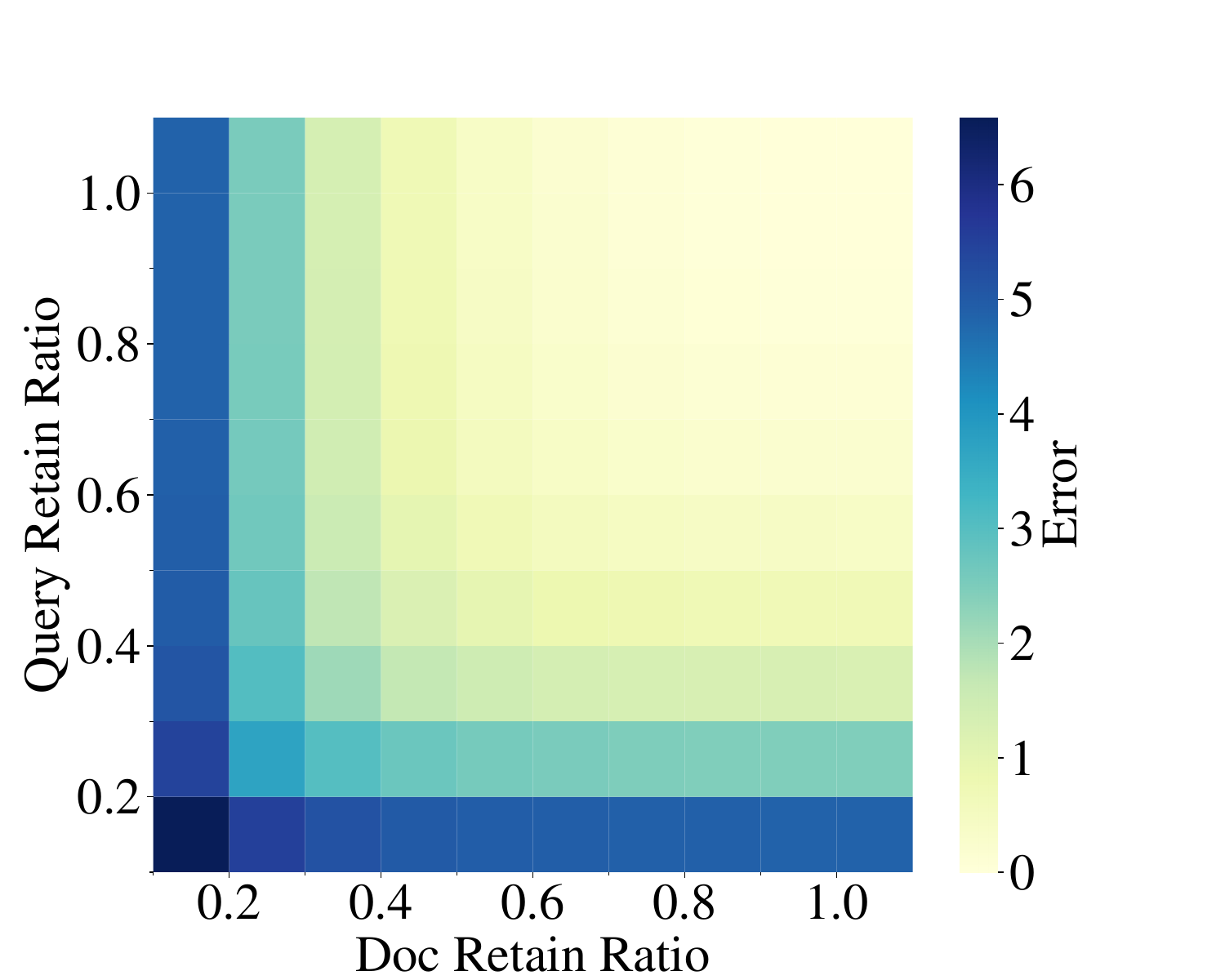}}
		\label{subfig:ip_error}}
	\subfigure[][{\scriptsize Recall Comparision}]{
		\scalebox{0.17}[0.17]{\includegraphics{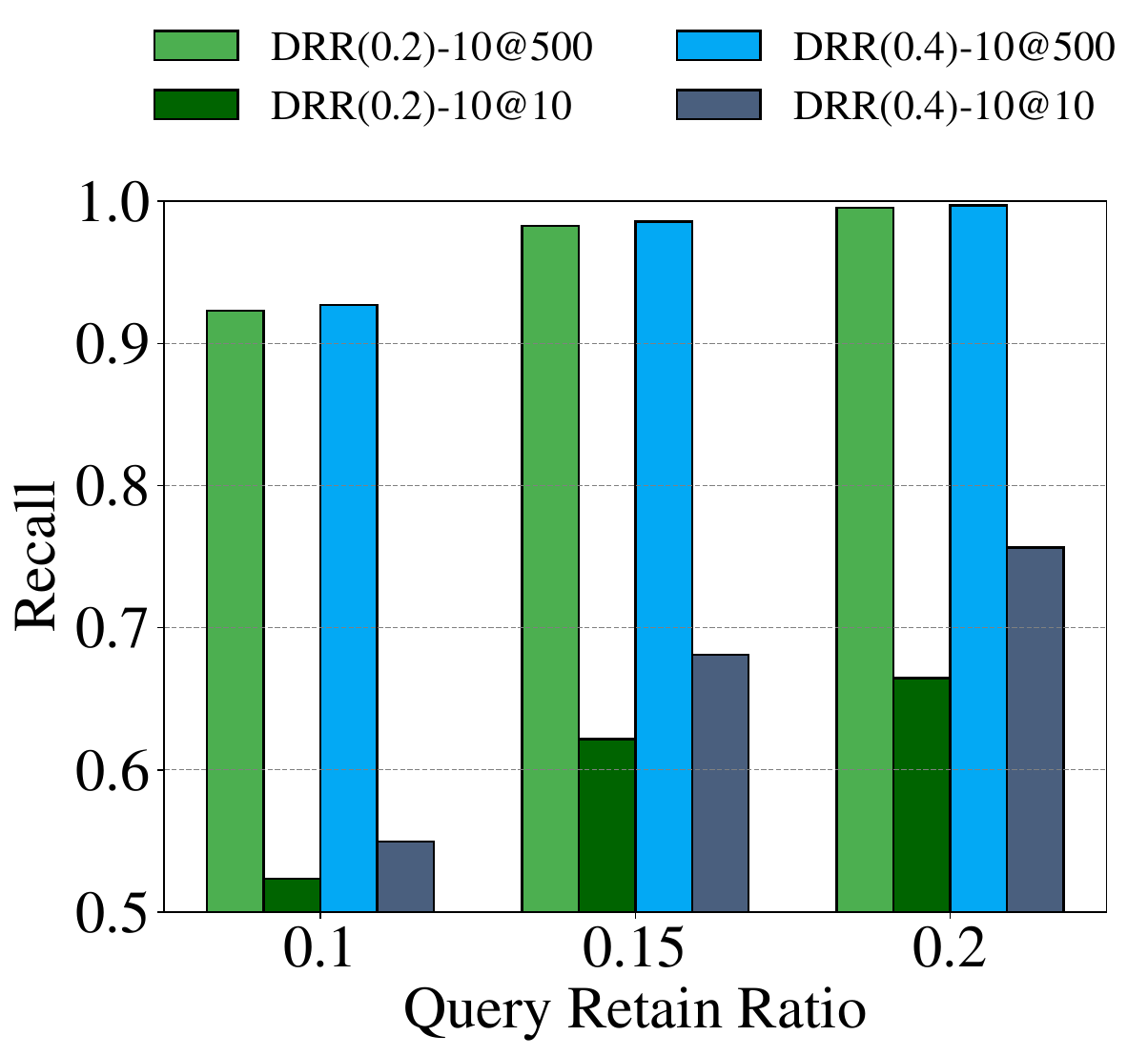}}
		\label{subfig:recall_cmp}}
	\vspace{-1ex} 
	
	\caption{\small Intuition of Pruning and Reorder}
	\vspace{-2ex} 
	\label{fig:recall}
\end{figure}

\noindent\textbf{\textit{List Pruning (LP)}.}
LP operates at the inverted–list level: for each dimension $j$, it retains only the non–zero entries with the largest absolute values in $I_j$, limiting the list length to $l'$.  
Since the size of $I_j$ varies across dimensions, some high–value $|x_i^j|$ entries in longer lists may be removed, while lower–value entries in shorter lists may be kept.  
After pruning, each document vector $\vec{x}_i$ becomes $\phi_{\mathrm{LP}}(\vec{x}_i)$, containing only the coordinates that survive the list truncation.

\noindent\textbf{\textit{Vector Number Pruning (VNP)}.} 
VNP applies the pruning operator $\phi_{\mathrm{VNP}}$ at the vector level. 
For each document vector $\vec{x}_i$, $\phi_{\mathrm{VNP}}(\vec{x}_i)$ retains the $vn$ non–zero entries with the largest absolute values, ensuring $\Vert \phi_{\mathrm{VNP}}(\vec{x}_i) \Vert = vn$.  
Since $\Vert \vec{x}_i \Vert$ varies across vectors, high–value $\vert x_i^j \vert$ entries that contribute substantially to the inner product may still be removed under this fixed–size scheme.

\noindent\textbf{\textit{Mass Ratio Pruning (MRP)}.} 
MRP applies the pruning operator $\phi_{\mathrm{MRP}}$ based on the cumulative sum of the absolute values of a vector’s non-zero entries.  
For each document vector $\vec{x}_i$, $\phi_{\mathrm{MRP}}(\vec{x}_i)$ ranks all non-zero entries in descending order of absolute value and retains the shortest prefix whose cumulative sum reaches a fraction $\alpha$ of the vector’s total mass.  
This adaptive scheme removes low–value entries that contribute little to the inner product, while allowing vectors with different value distributions to keep variable numbers of entries, thereby reducing inverted–list size without enforcing a uniform length limit.

To formally introduce MRP, we first define the mass of a vector.

\begin{definition}[Mass of a Vector]
	\label{def:mass}
	Let $\vec{x}\in\mathbb{R}^{d}$ be a vector. The \emph{mass} of $\vec{x}$ is defined as the sum of the absolute values of $x$'s non-zero entries:
	$
	mass(\vec{x})
	\;=\;
	\sum_{x^j \in x} \lvert x^j \rvert .
	$
\end{definition}

We define the vector obtained after Mass Ratio Pruning as the $\alpha$-mass subvector.

\revisedtext{
	\begin{definition}[$\alpha$-Mass Subvector]
		\label{def:alpha-mass-subvector}
		Let $\vec{x} \in \mathbb{R}^{d}$ and let $\pi$ be a permutation
		that orders the non-zero entries of $\vec{x}$ by non–increasing absolute value, 
		i.e., $\vert x^{\pi_{j}} \vert \ge \vert x^{\pi_{j+1}} \vert$.  
		For a constant $\alpha \in (0,1]$, let $1 \le r \le \Vert x \Vert$ be the smallest
		integer satisfying
		$
		\sum_{j=1}^{r} \vert x^{\pi_{j}} \vert
		\;\ge\;
		\alpha \times mass(\vec{x}).
		$
		The $\alpha$-mass subvector \cite{seismic}, denoted $\alpha\text{-}mass(\vec{x})$, 
		is the vector whose non-zero entries are $\{ x^{\pi_{j}} \}_{j=1}^{r}$.
	\end{definition}
}

\begin{figure}[t!]\centering
	{\includegraphics[width=0.48\textwidth]
		{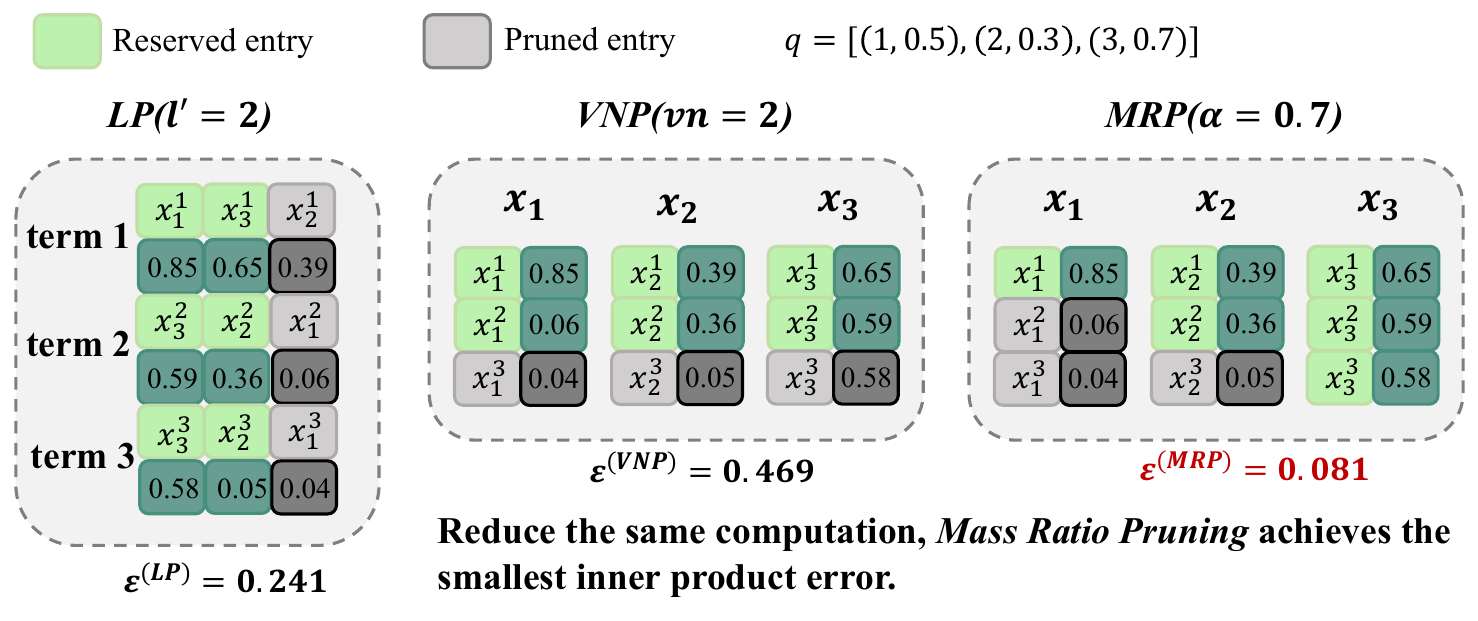}}
	\caption{\small An Example of \textit{List Pruning}, \textit{Vector Number Pruning} and \textit{Mass Ratio Pruning}.}
	\vspace{-2ex} 
	\label{fig:example_of_pruning}
\end{figure}

\begin{example}
	Figure~\ref{fig:example_of_pruning} illustrates the three pruning methods applied to sparse vectors $\vec{x}_1$, $\vec{x}_2$, and $\vec{x}_3$. 
	\textit{List Pruning} prunes each inverted list to size $l' = 2$, 
	\textit{Vector Number Pruning} retains the top $vn=2$ entries of each vector, 
	and \textit{Mass Ratio Pruning} prunes each $\vec{x}_i$ to its $\alpha\text{-}mass(\vec{x}_i)$ with $\alpha=0.7$.  
	The figure shows:  
	(i) all three strategies result in the same reduction in computation, 
	$\Vert q \Vert l - \Vert q' \Vert l' = 9 - 6 = 3$;  
	(ii) \textit{Mass Ratio Pruning} yields the smallest inner product error.  
	This is because \textit{List Pruning} cannot retain the larger value $x_2^1$ when each list is limited to two vectors,  
	and \textit{Vector Number Pruning} drops $x_3^3$.  
	In contrast, \textit{Mass Ratio Pruning} prioritizes high–value entries, thereby minimizing error.
\end{example}


\begin{algorithm}[t]\small
	\DontPrintSemicolon
	\KwIn{Sparse dataset $\mathcal{D}$ of dimension $d$; window size $\lambda$; pruning ratio $\alpha$}
	\KwOut{Inverted index $I$}
	
	$\mathcal{D}' \leftarrow \emptyset$
	
	\ForEach{$\vec{x}_i \in \mathcal{D}$}{
		$\vec{x}'_i \leftarrow \alpha\text{-}mass(\vec{x}_i)$ \\
		$\mathcal{D'} \leftarrow \mathcal{D'} \, \cup  \vec{x}'_i$
	}
	
	$I =$ \textsc{PreciseSindiConstruction}$(\mathcal{D’}, d, \lambda)$ \\
	\Return{$I$ and $\mathcal{D}$}\;
	\caption{\textsc{ApproximateSindiConstruction}}
	\label{algo:approximate_ivf_construction}
\end{algorithm}

Algorithm~\ref{algo:approximate_ivf_construction} outlines the construction of the approximate \textsc{Sindi} index. 
Given a sparse dataset $\mathcal{D}$ of maximum dimension $d$, window size $\lambda$, and pruning ratio $\alpha$, 
the algorithm first initializes an empty set $\mathcal{D}'$ to store pruned vectors (Line~1).  
For each vector $\vec{x}_i \in \mathcal{D}$ (Line~2), its $\alpha$-mass subvector 
$\alpha\text{-}mass(\vec{x}_i)$ is computed and assigned to $\vec{x}'_i$ (Line~3), 
which is then added to $\mathcal{D}'$.  
The remaining steps invoke \textsc{PreciseSindiConstruction} \revisedtext{(Algorithm~\ref{algo:full_precision_ivf_construction})} on $\mathcal{D}'$ (Line~5), 
followed by returning both the inverted index $I$ and the original dataset $\mathcal{D}$ (Line~6) 
so that reordering can be performed during retrieval.

Figure~\ref{subfig:ip_error} shows that retaining under half of a query’s non-zero entries reduces the inner product error to nearly zero, 
cutting the search space by more than half. 
This suggests that posting lists from a few high–value non-zero entries in a query already cover most of the recall. 
Therefore, \textsc{Sindi} applies \textit{Mass Ratio Pruning} to queries: 
given $\beta \in (0,1]$, the pruned query is denoted $\beta\text{-}mass(\vec{q})$ and is used in coarse retrieval.

\subsection{Reordering}
\label{subsec:reordering}

Retaining only a small portion of non-zero entries preserves most of the inner product but may disrupt the partial order of full inner products. Using such pruned results directly for recall degrades accuracy. Nevertheless, experiments show that with enough candidates, the true nearest neighbors are often included. Figure~\ref{subfig:recall_cmp} reports Recall 10@500 and Recall 10@10 under different pruning ratios for documents and queries. Retaining 20\% of document entries and 15\% of query entries yields Recall 10@500 = 0.98 but Recall 10@10 = 0.63. This motivates a two-step strategy:  
(1) perform coarse recall with the pruned index to retrieve $\gamma$ candidates into a min-heap $H$;  
(2) compute the full inner products for all candidates in $H$ to refine the final top-$k$ results for efficient AMIPS.
The second stage is \textit{reordering}.


\begin{algorithm}[t]\small
	\DontPrintSemicolon
	\KwIn{Query $\vec{q}$, inverted index $I$, query prune ratio $\beta$, reorder number $\gamma$, and $k$}
	\KwOut{Top-$k$ points in $\mathcal{D}$ (at most $k$)}
	
	$H \leftarrow$ empty min-heap\;
	$R \leftarrow$ empty max-heap\;
	
	$\vec{q}' \leftarrow \beta\text{-}mass(\vec{q})$\;
	
	$H \leftarrow$ \textsc{PreciseSindiSearch}$(\vec{q}', I, \gamma)$\;
	
	\While{$H \neq \emptyset$}{
		$i, dis \leftarrow H.pop()$\;
		$dis' \leftarrow 1 - \delta(\vec{x}_i, \vec{q})$\;
		\uIf{$dis' < R.max()$ \textbf{or} $R.len() < k$}{
			$R.insert(i, dis')$\;
		}
		\uIf{$R.len() > k$}{
			$R.pop()$\;
		}
	}
	
	\Return{$R$}\;
	\caption{\textsc{ApproximateSindiSearch}}
	\label{algo:approximate_ivf_search}
\end{algorithm}

Algorithm~\ref{algo:approximate_ivf_search} shows the search procedure of the approximate \textsc{Sindi} index. 
Given a query $\vec{q}$, inverted index $I$, pruning ratio $\beta$, reordering size $\gamma$, and target $k$, the algorithm initializes an empty min-heap $H$ for coarse candidates and a max-heap $R$ for the final top-$k$ results (Lines~1--2). 
It first derives the $\beta$-mass subvector $\vec{q}'$ (Line~3) and invokes \textsc{PreciseSindiSearch} \revisedtext{(Algorithm~\ref{algo:full_precision_ivf_search})} to obtain $\gamma$ coarse candidates stored in $H$ (Line~4). 
While $H$ is not empty (Line~5), the best coarse candidate $(i,\,dis)$ is popped (Line~6), its exact distance to $\vec{q}$ is computed as $dis'$ (Line~7), and $(i,\,dis')$ is inserted into $R$ if it improves the current set or if $R$ contains fewer than $k$ elements (Line~8). 
If $R$ exceeds size $k$, its worst element is removed (Lines~9--10). 
After all candidates are processed, $R$ is returned as the final result (Line~12).

\section{Experimental Study}
\label{sec:experimental}

\subsection{Experimental Settings}

\textbf{Datasets.} Table~\ref{tab:datasets} summarizes the datasets used in our experiments, covering:  
(i) English datasets from the \textsc{MsMarco}~\cite{msmarco} passage ranking benchmark (including SPLADE-1M and SPLADE-FULL) and the \textsc{NQ}~\cite{nq} (Natural Questions) benchmark, all trained with the \textsc{Splade} model;  
(ii) Chinese dataset AntSparse, real business data from Ant Group trained with the BGE-M3~\cite{bge-m3} model, which has higher dimensionality due to the larger Chinese vocabulary;  
(iii) Random datasets with non-zero entry dimensions and values drawn from a uniform distribution.  
For each dataset, Table~\ref{tab:datasets} reports the average number of non-zero entries per vector ($\mathrm{avg}~\Vert \vec{x}_i \Vert$), average vectors per inverted list ($\mathrm{avg}~l$), and sparsity. The \textit{sparsity} of $\mathcal{D}$ is: $\textit{sparsity} = 1 - \frac{\sum_{\vec{x} \in \mathcal{D}} \Vert \vec{x} \Vert}{\Vert \mathcal{D} \Vert \cdot d}$.

We compare \textsc{Sindi} with five SOTA algorithms: \textsc{Seismic}, \textsc{PyANNS}, SOSIA, BMP, and HNSW. Below is a description of each algorithm: 
\begin{itemize}[leftmargin=*,labelsep=0.5em]
	\item \textbf{\textsc{Seismic}}~\cite{seismic}: A sparse vector index based on inverted lists. 
	\item \textbf{SOSIA}~\cite{sosia}: A sparse vector index using min-hash signatures. 
	\item \textbf{BMP}~\cite{bmp, sp}: A dynamic pruning strategy for learning sparse vector retrieval. It divides the original dataset into fine-grained blocks and generates a maximum value vector for each block to evaluate whether the block should be queried.
	\item \textbf{HNSW}~\cite{hnsw}: A graph-based index originally for dense vectors; we modify the data format and distance computation to support sparse vectors. 
	\item \textbf{\textsc{PyANNS}}~\cite{pyanns}: The open-source champion of the BigANN Benchmark 2023 Sparse Track. It is built on HNSW and incorporates quantization, query pruning, and rerank strategies. 
	\revisedtext{
		\item \textbf{SHNSW}~\cite{shnsw}:The open-source runner-up of the BigANN Benchmark 2023 Sparse Track. It is a graph-based index (HNSW variant) that utilizes memory optimization and early termination.
		\item \textbf{SINNAMON}~\cite{sinnamon}: An inverted index using hashing to compress vectors into dense sketches for SIMD scoring.
	}
\end{itemize}

\noindent\textbf{Parameter Settings.} We use the optimal parameters for each algorithm to ensure a fair comparison, the detailed configuration are in Table~\ref{tab:construction_params}. Parameter choices either follow the recommendations from the original authors or are determined via grid search, the details are provided in Appendix C.

\begin{table*}[t!]\vspace{-2ex}
	\centering
	\caption{Dataset Statistics and Characteristics}
	\label{tab:datasets}
	\begin{tabular}{lcccccccccc}
		\toprule
		\textbf{Dataset} & $\Vert \mathcal{D} \Vert$ & $\mathrm{avg}~\Vert x_i \Vert$ & $nq$ & $\mathrm{avg}~\Vert q \Vert$ & $d$ & \textit{\textbf{Sparsity}} & \textbf{Size (GB)} & $\mathrm{avg}~l$ & \textbf{Model} & \textbf{Language} \\
		\midrule
		SPLADE-1M     & 1,000,000   & 126.3  & 6980   & 49.1  & 30108    & 0.9958  & 0.94   & 4569.2   & \textsc{Splade}    & English \\
		SPLADE-FULL   & 8,841,823   & 126.8  & 6980   & 49.1  & 30108    & 0.9958  & 8.42   & 40447.3  & \textsc{Splade}    & English \\
		AntSparse-1M   & 1,000,000   & 40.1   & 1000   & 5.8   & 250000   & 0.9998  & 0.31   & 902.6    & BGE-M3    & Chinese \\
		AntSparse-10M  & 10,000,000  & 40.1   & 1000   & 5.8   & 250000   & 0.9998  & 3.06   & 6560.7   & BGE-M3    & Chinese \\
		NQ            & 2,681,468   & 149.4  & 3452   & 47.0  & 30510    & 0.9951  & 3.01   & 13914.7  & \textsc{Splade}    & English \\
		RANDOM-5M     & 5,000,000   & 150.0  & 5000   & 50.4  & 30000    & 0.9950  & 5.62   & 25000.0  & -         & -       \\
		\bottomrule
	\end{tabular}
\end{table*}

\begin{table}[htbp]
	\caption{\small Construction Parameter Settings (\revisedtext{\textbf{bold}} denotes selected optima).}
	\label{tab:construction_params}
	\centering
	\begin{tabularx}{0.48\textwidth}{p{2.3cm}X}
		\toprule
		\textbf{Algorithm} & \textbf{Construction Parameter Settings} \\
		\midrule
		\multirow{2}{*}{\textsc{Sindi}} 
		& SPLADE-FULL: $\alpha \in$ \{0.2, 0.3, 0.4, $\mathbf{0.5}$, 0.6, 0.7, 0.8\} \\
		& AntSparse\_10M: $\alpha \in$ \{0.7, 0.75, 0.8, $\mathbf{0.85}$, 0.9, 0.95, 1\} \\
		\midrule
		\multirow{2}{*}{\textsc{Seismic}}
		& SPLADE-FULL: $\lambda = 6000$, $\beta = 0.067$, $\alpha = 0.4$ \\
		& AntSparse\_10M: $\lambda \in$ \{5000, 6000, 7000, 8000, 9000, $\mathbf{10000}, 50000\}$, $\beta = 0.1$, $\alpha = 0.5$ \\
		\midrule
		\multirow{2}{*}{\textsc{Shnsw}}
		& SPLADE-1M: $M = 16$, efConstruction = 200 \\
		& SPLADE-FULL: $M = 16$, efConstruction = 1000 \\
		\midrule
		\textsc{Sinnamon} & $m = 31$, $h = 2$ \\
		\midrule
		\textsc{Hnsw} \& \textsc{PyANNS} & max\_degree = 32, ef\_construction = 1000 \\
		\midrule
		\textsc{Sosia} & $m = 150$, $l = 40$ \\
		\midrule
		\textsc{Bmp} & $l = 16$ \\
		\bottomrule
	\end{tabularx}
\end{table}

\noindent\textbf{Performance Metrics.} We evaluate index construction time, index size, recall, and queries per second (QPS) for all baselines. For a query $\vec{q}$, let $R = \{\vec{x}_1, \dots, \vec{x}_k\}$ denote the AMIPS results, and $R^\ast = \{\vec{x}^\ast_1, \dots, \vec{x}^\ast_k\}$ denote the exact MIPS results. Recall is computed as $\mathrm{Recall} = \frac{\Vert R \, \cap \, R^\ast \Vert}{\Vert R^\ast \Vert}$.
We specifically report Recall@50 and Recall@100. Since approximate methods trade off efficiency and accuracy, we also report \textit{throughput}, defined as the number of queries processed per second.

\noindent\textbf{Environment.} Experiments are conducted on a server running the CentOS, with an Intel Xeon Platinum 8269CY CPU @ 2.50GHz and 512\,GB memory. We implement \textsc{Sindi} in C++, compiled with \texttt{g++} 10.2.1. For hardware acceleration, \textsc{Sindi} supports both AVX-512 intrinsics and compiler auto-vectorization. To ensure reproducibility and peak performance, the following experiments are conducted using the manual AVX-512 intrinsics implementation. Detailed implementations are provided in Appendix A.

\begin{figure*}[t!]\centering\vspace{-2ex}
	\subfigure{
		\scalebox{0.6}[0.6]{\includegraphics{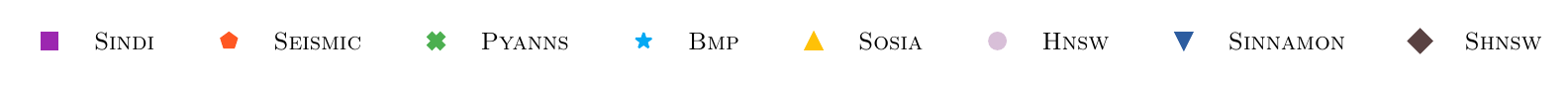}}}\hfill\\\vspace{-4ex}
	\addtocounter{subfigure}{-1}
	\subfigure[][{\scriptsize SPLADE-1M Recall@50}]{
		\scalebox{0.18}[0.18]{\includegraphics{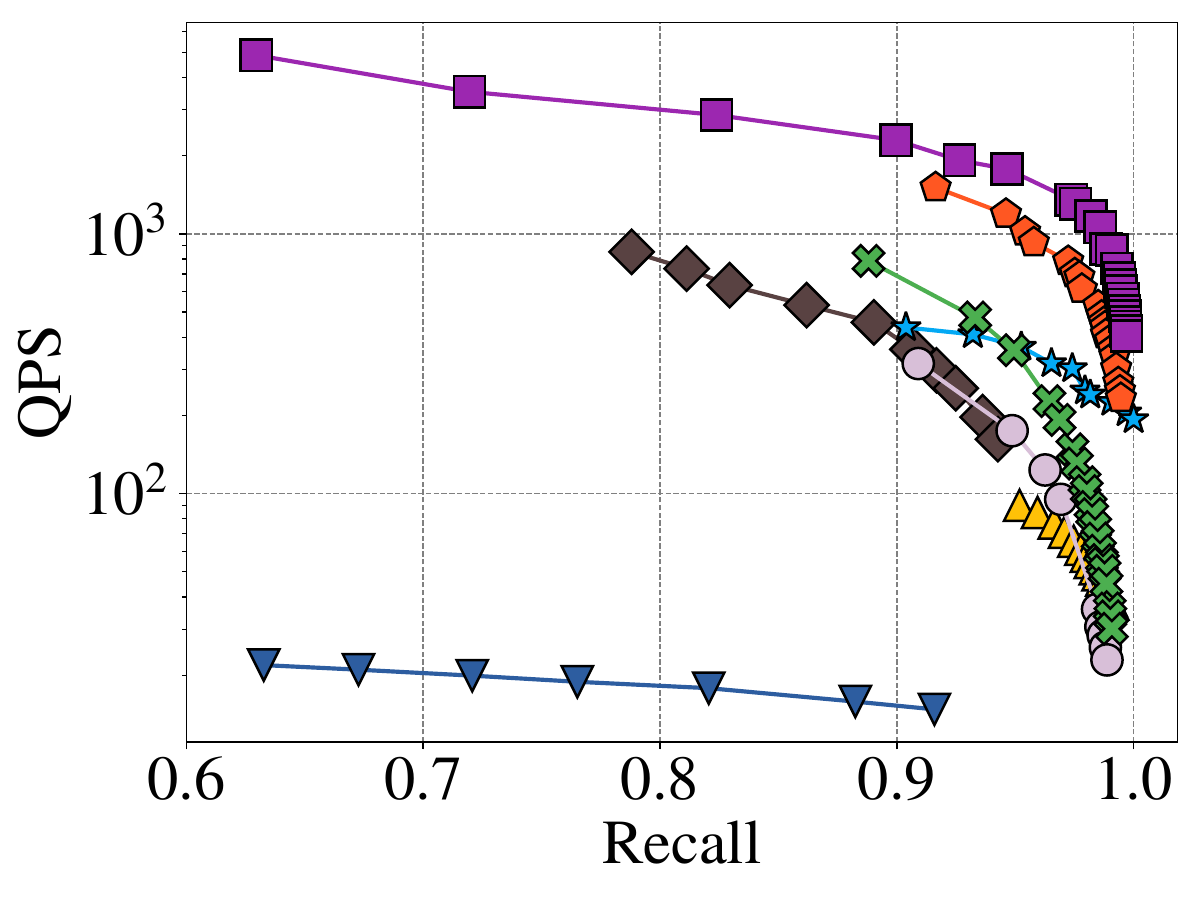}}
		\label{subfig:splade-1m-top50}}
	\hfill
	\subfigure[][{\scriptsize SPLADE-1M Recall@100}]{
		\scalebox{0.18}[0.18]{\includegraphics{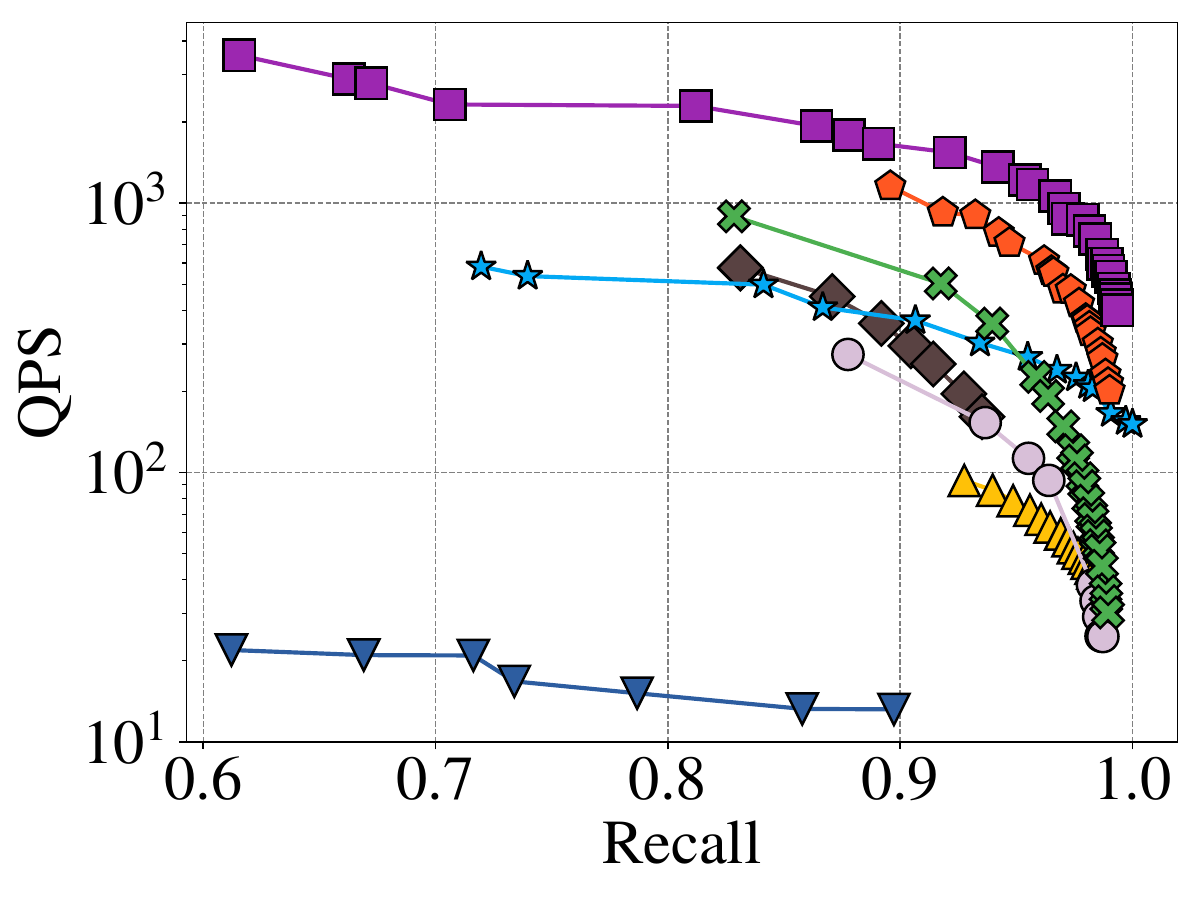}}
		\label{subfig:splade-1m-top100}}
	\hfill
	\subfigure[][{\scriptsize SPLADE-FULL Recall@50}]{
		\scalebox{0.18}[0.18]{\includegraphics{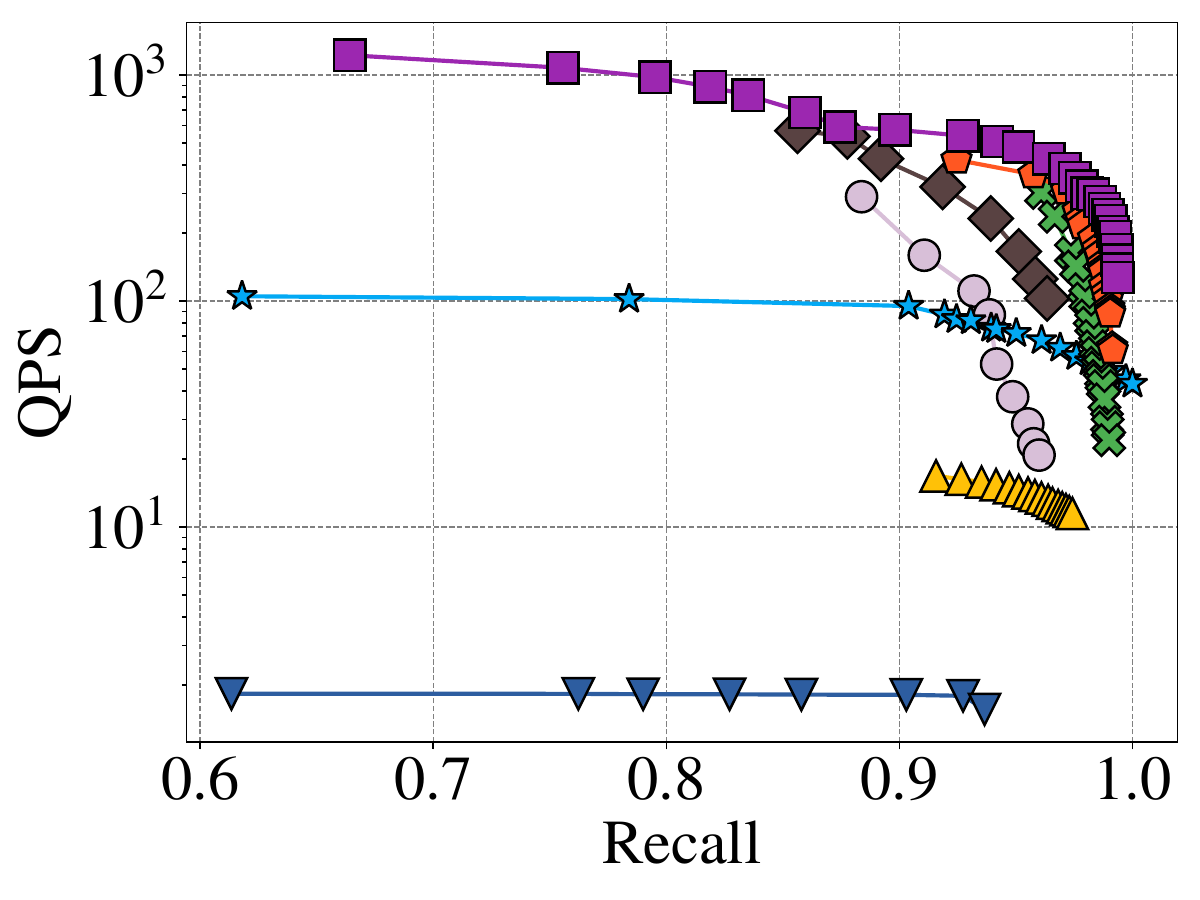}}
		\label{subfig:splade-full-top50}}
	\hfill
	\subfigure[][{\scriptsize SPLADE-FULL Recall@100}]{
		\scalebox{0.18}[0.18]{\includegraphics{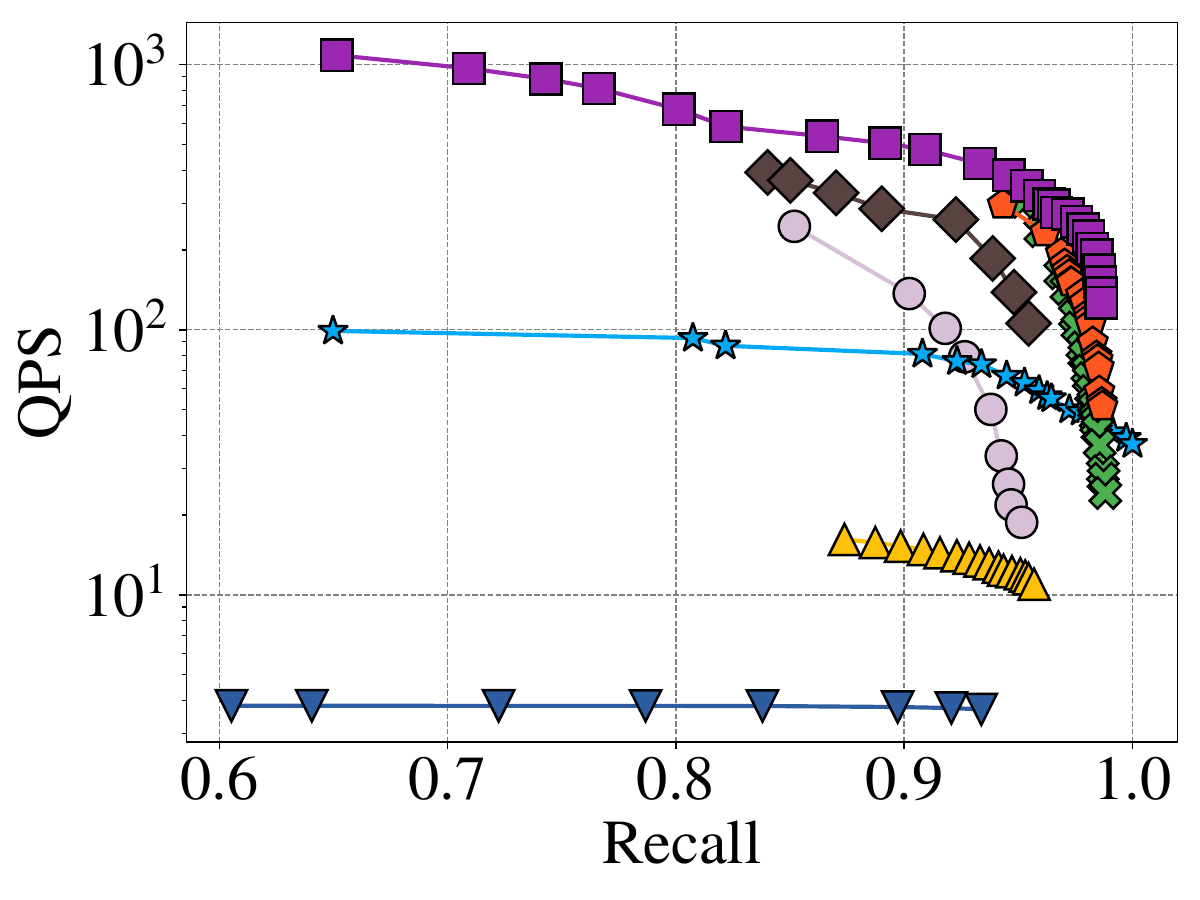}}
		\label{subfig:splade-full-top100}}
	\vspace{-2ex} 
	
	\subfigure[][{\scriptsize NQ Recall@50}]{
		\scalebox{0.18}[0.18]{\includegraphics{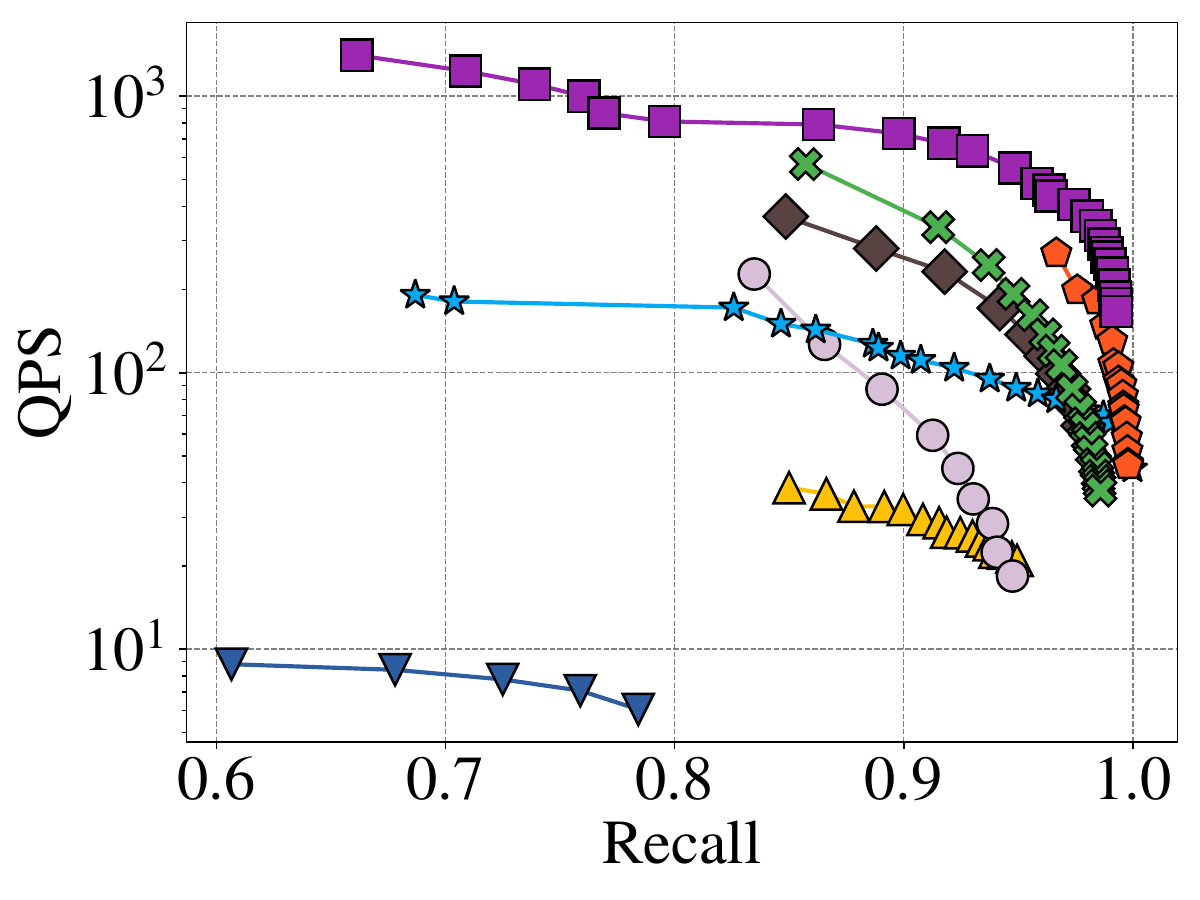}}
		\label{subfig:nq-top50}}
	\hfill
	\subfigure[][{\scriptsize NQ Recall@100}]{
		\scalebox{0.18}[0.18]{\includegraphics{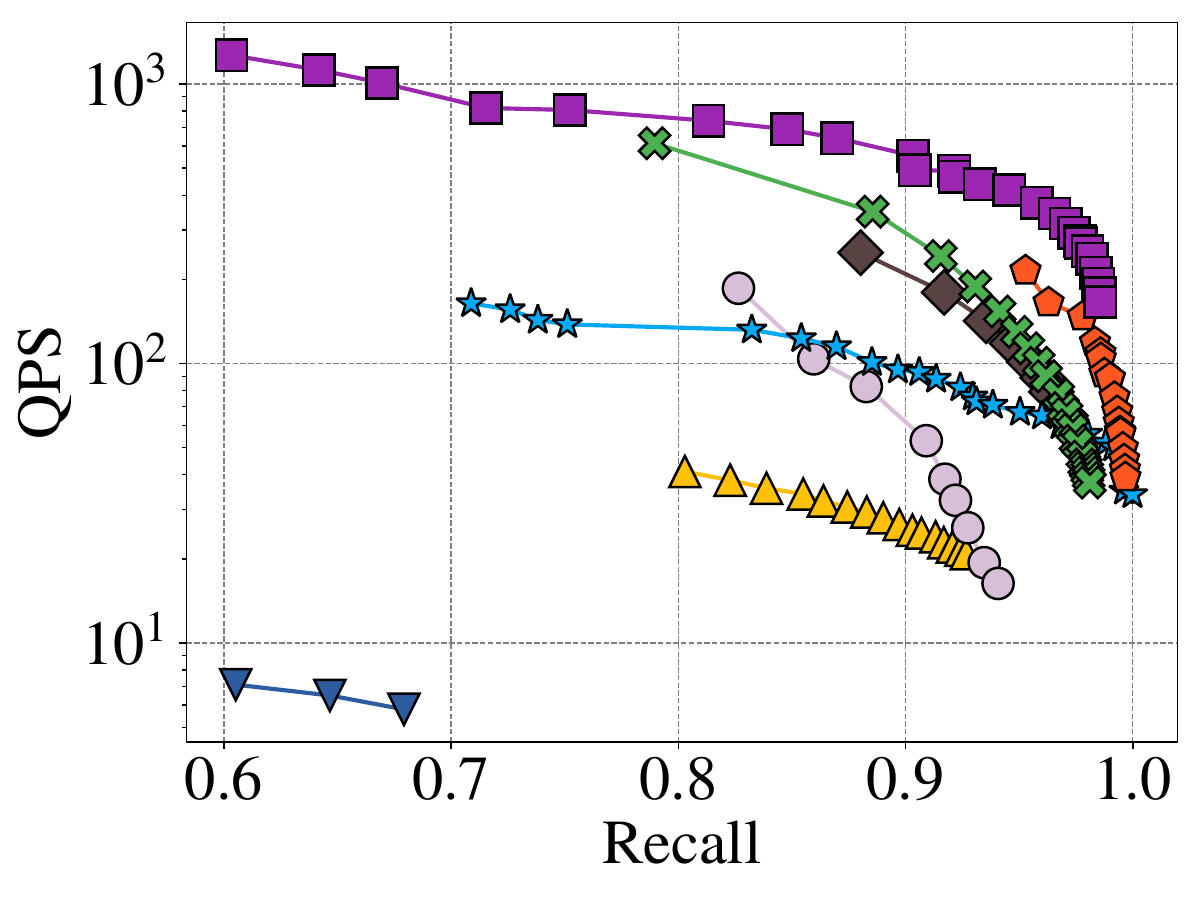}}
		\label{subfig:nq-top100}}
	\hfill
	\subfigure[][{\scriptsize RANDOM-5M Recall@50}]{
		\scalebox{0.18}[0.18]{\includegraphics{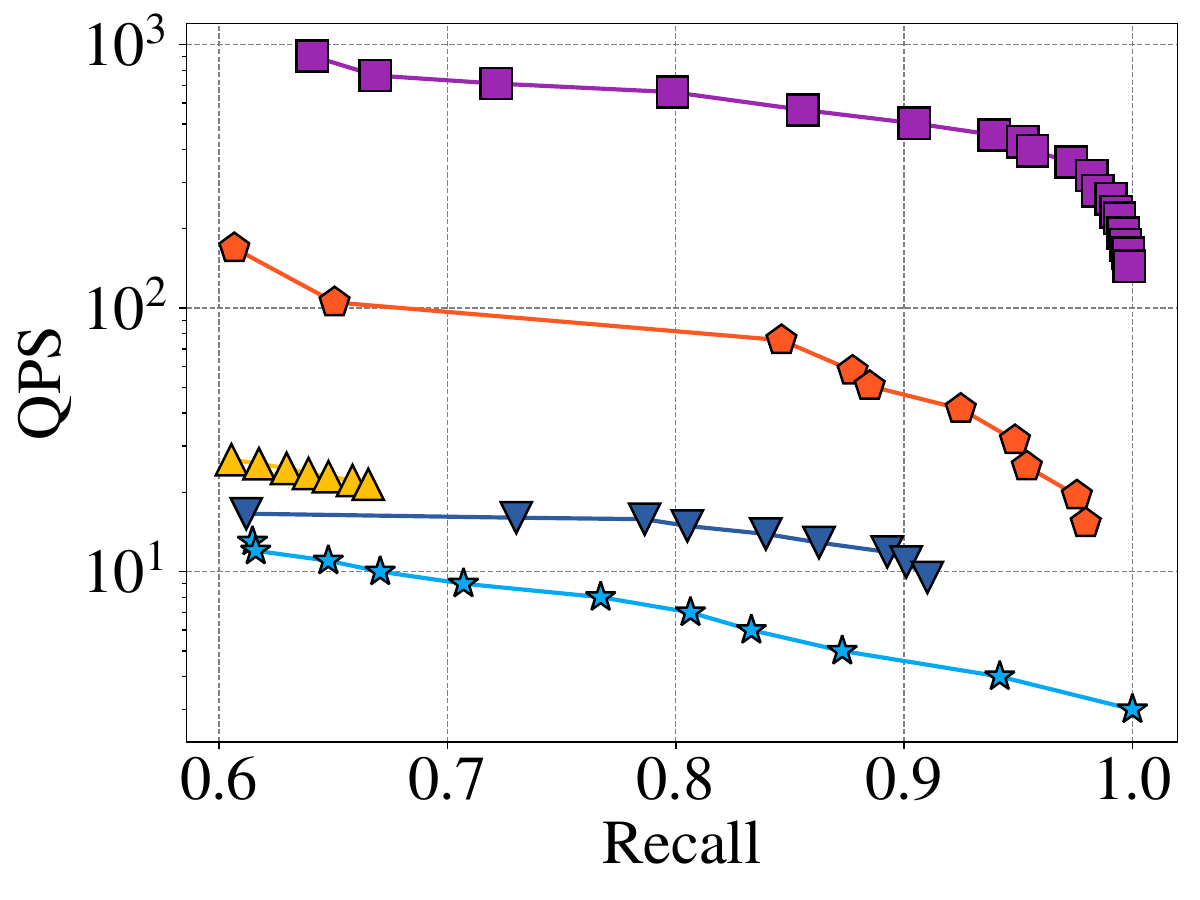}}
		\label{subfig:random_5m-top50}}
	\hfill
	\subfigure[][{\scriptsize RANDOM-5M Recall@100}]{
		\scalebox{0.18}[0.18]{\includegraphics{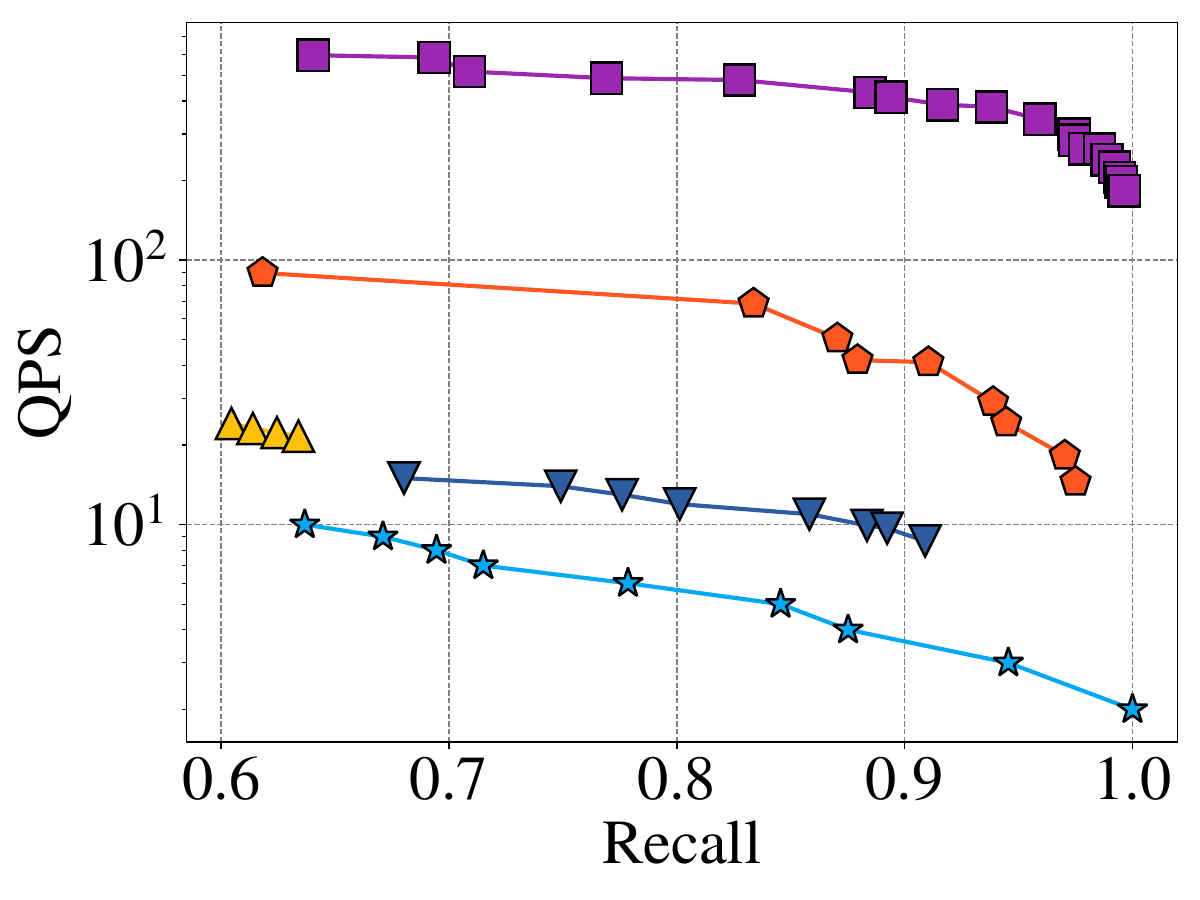}}
		\label{subfig:random_5m-top100}}
	\vspace{-2ex} 
	
	\subfigure[][{\scriptsize AntSparse-1M Recall@50}]{
		\scalebox{0.18}[0.18]{\includegraphics{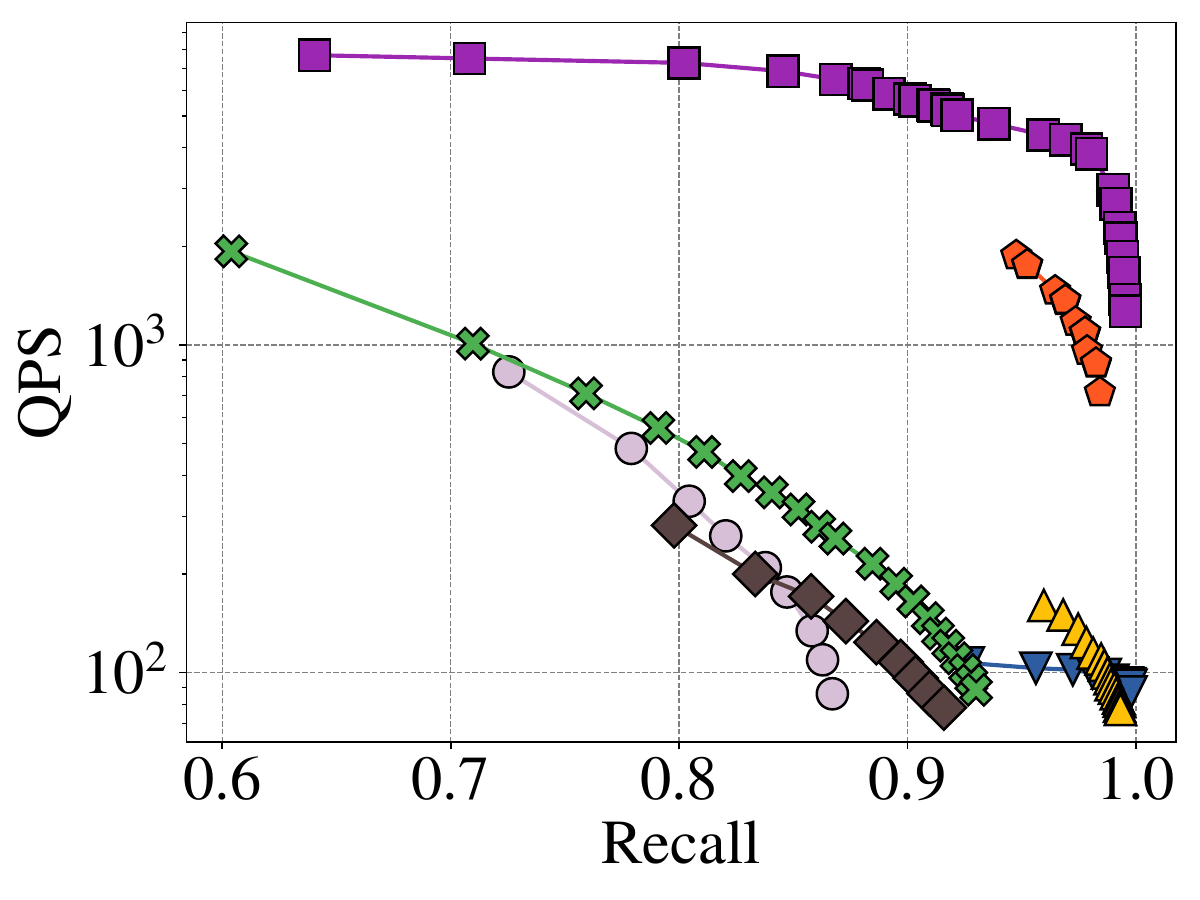}}
		\label{subfig:AntSparse-1m-top50}}
	\hfill
	\subfigure[][{\scriptsize AntSparse-1M Recall@100}]{
		\scalebox{0.18}[0.18]{\includegraphics{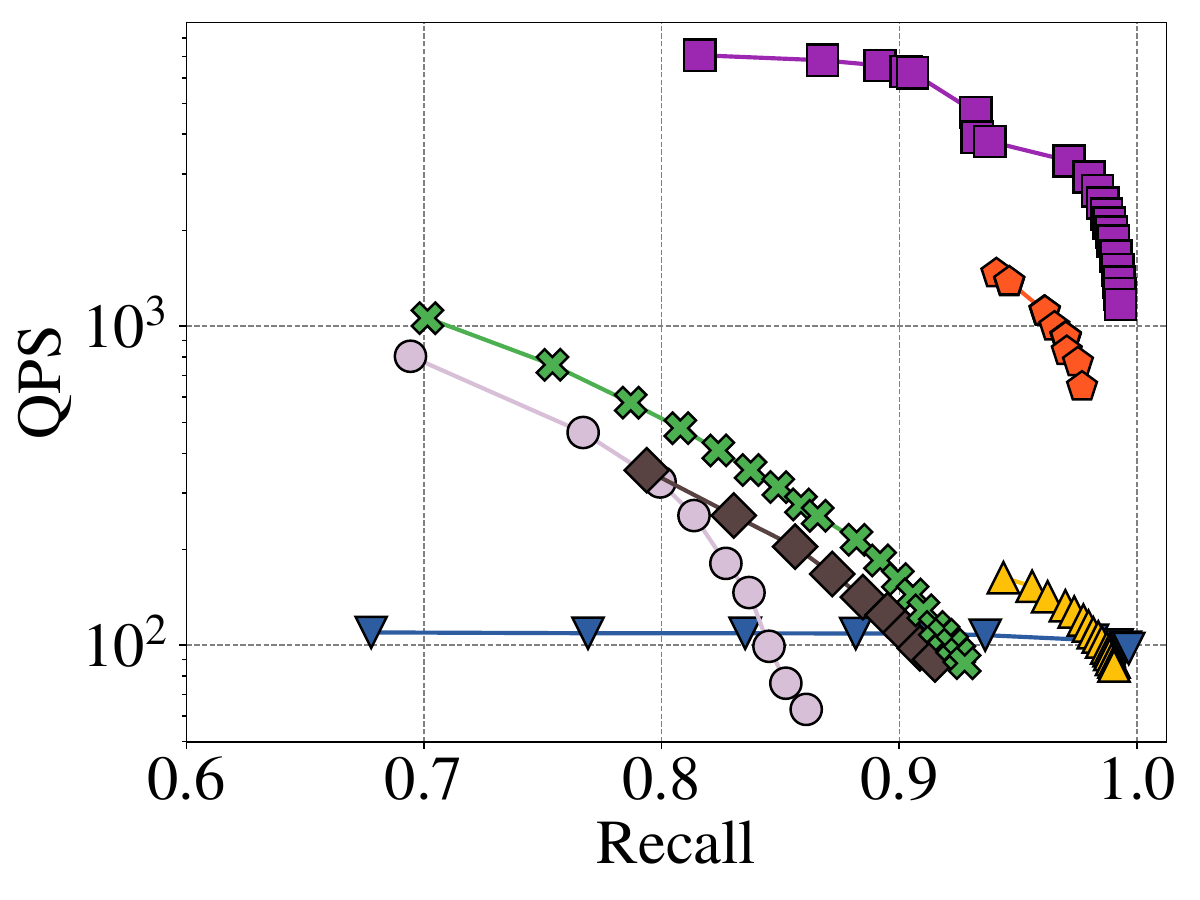}}
		\label{subfig:AntSparse-1m-top100}}
	\hfill
	\subfigure[][{\scriptsize AntSparse-10M Recall@50}]{
		\scalebox{0.18}[0.18]{\includegraphics{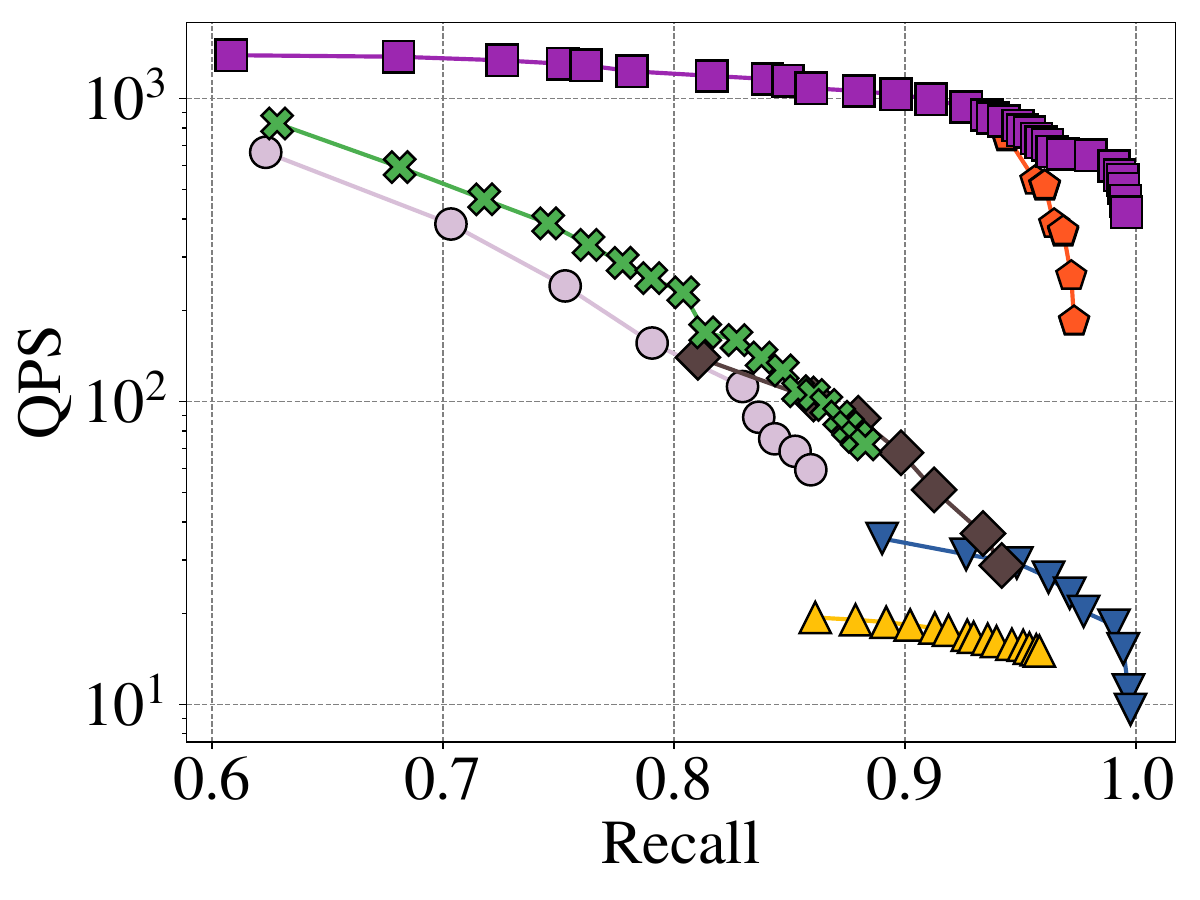}}
		\label{subfig:AntSparse-10m-top50}}
	\hfill
	\subfigure[][{\scriptsize AntSparse-10M Recall@100}]{
		\scalebox{0.18}[0.18]{\includegraphics{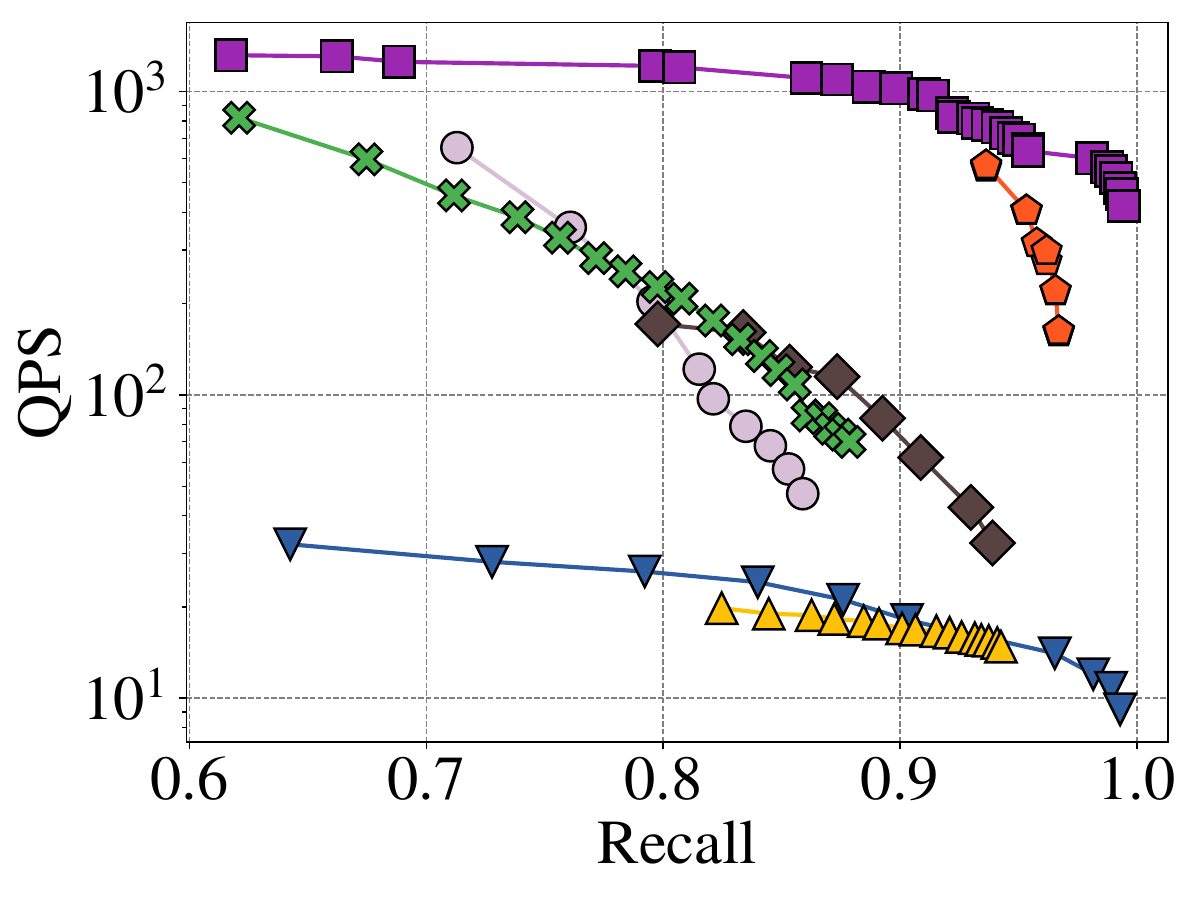}}
		\label{subfig:AntSparse-10m-top100}}
	
	\caption{\revisedtext{\small Overall Performance.}}
	\vspace{-2ex} 
	\label{fig:overview}
\end{figure*}

\subsection{Overall Performance}
\subsubsection{Recall and QPS}
Figure~\ref{fig:overview} shows the relationship between recall (Recall@50 and Recall@100) and single-threaded QPS for all algorithms. For each method, we report the best results across all tested parameter configurations.

On both English and Chinese datasets, \textsc{Sindi} achieves the highest QPS at the same recall levels. When Recall@50 is 99\%, on SPLADE-1M, the QPS of \textsc{Sindi} is $2.0\times$ that of \textsc{Seismic} and $26.4\times$ that of \textsc{PyANNS}$;$ on SPLADE-FULL, it is $4.16\times$ and $10.0\times$ higher, respectively. When Recall@100 is 98\% on SPLADE-FULL, \textsc{Sindi} attains $1.9\times$ the QPS of \textsc{Seismic} and $3.2\times$ that of \textsc{PyANNS}.

On the Chinese AntSparse-10M dataset encoded by the BGE-M3~\cite{bge-m3} model, \textsc{Sindi} also performs best. Fixing Recall@50 at 97\%, its QPS is $2.5\times$ that of \textsc{Seismic}, the recall of \textsc{PyANNS} is limited due to the high \textit{sparsity} of the dataset.

On RANDOM-5M, generated uniformly at random, the uniform distribution yields very few common non-zero dimensions between vectors, causing graph-based methods (\textsc{PyANNS}, HNSW, SHNSW) to suffer severe connectivity loss and low recall. The clustering effectiveness of \textsc{Seismic} is also sensitive to data distributions, resulting in marked degradation. In contrast, \textsc{Sindi} is unaffected by data distribution and achieves the best performance, with QPS exceeding \textsc{Seismic} by \textit{an order of magnitude}.

Overall, these results demonstrate that \textsc{Sindi} consistently delivers state-of-the-art performance across datasets with diverse languages, models, and distributions.

\subsubsection{Index Size and Construction Time}

\revisedtext{Figure~\ref{fig:IT_IS} summarizes the index size (all including datasets and search index storage) and construction time of \textsc{Sindi}, \textsc{Seismic}, and \textsc{PyANNS} on the two largest datasets (SPLADE-FULL and AntSparse-10M), showing that \textsc{Sindi} has the lowest \revisedtext{construction time}.}

\textsc{Seismic}, which stores summary vectors for each block, yields the largest index size; on AntSparse-10M, its size is $3.8\times$ that of \textsc{Sindi}. By contrast, the graph index construction of \textsc{PyANNS} requires numerous distance computations to find neighbors, resulting in a construction time $71.5\times$ that of \textsc{Sindi} on SPLADE-FULL. In comparison, \textsc{Sindi} mainly sorts each vector’s non-zero entries for pruning, keeping the \revisedtext{computational overhead} low and enabling rapid index building.

\subsection{Parameters}

\subsubsection{The Impact of $\alpha$}

\begin{figure}[t!]\vspace{-2ex}
	\centering
	\subfigure{
		\scalebox{0.55}[0.55]{\includegraphics{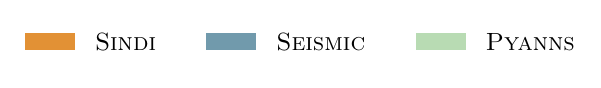}}}\hfill\\\vspace{-4ex}
	\addtocounter{subfigure}{-1}
	\subfigure[][{\scriptsize SPLADE-FULL}]{
		\scalebox{0.19}[0.18]{\includegraphics{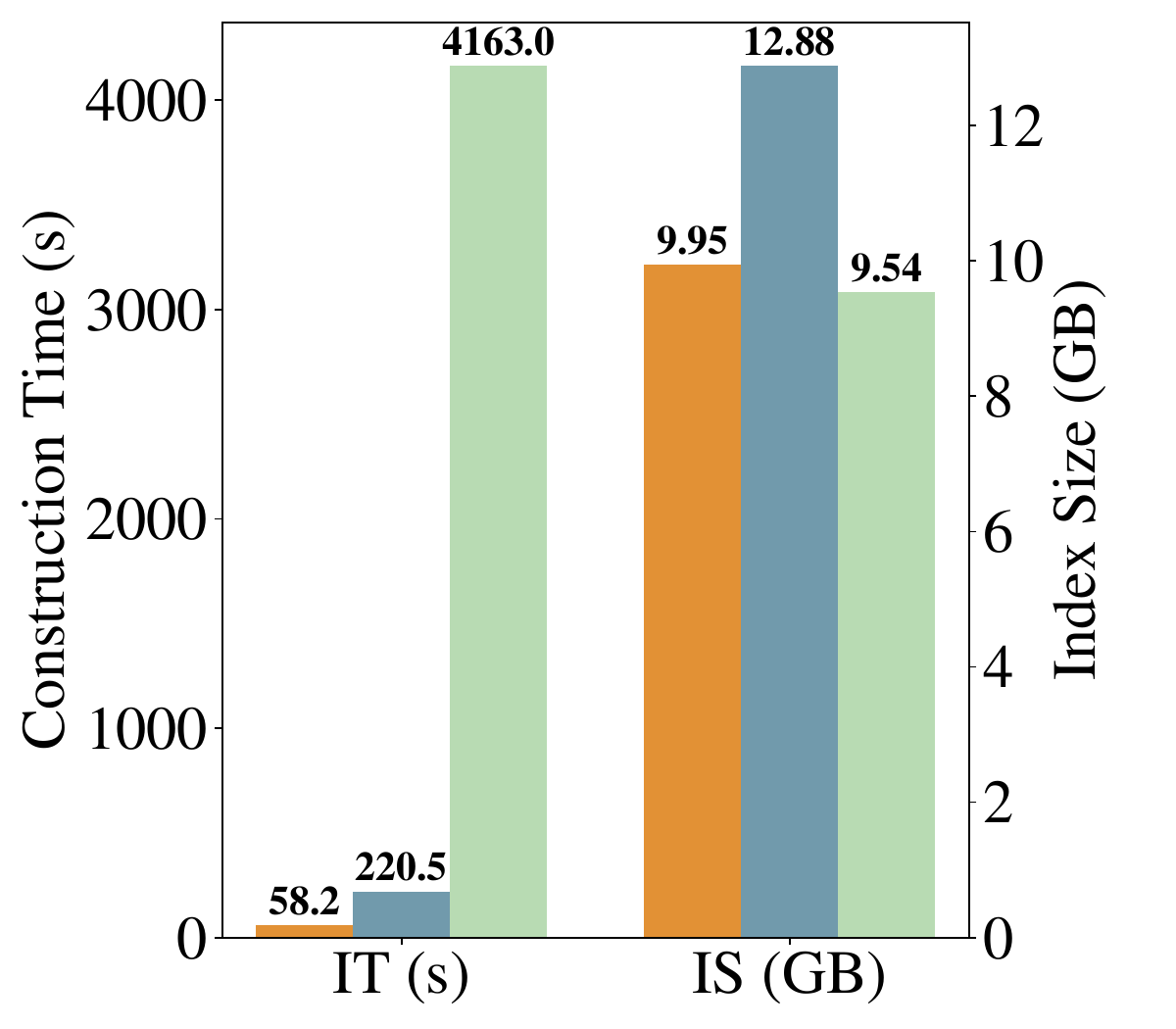}}
		\label{subfig:splade-full-it-is}}
	\hfill
	\subfigure[][{\scriptsize AntSparse-10M}]{
		\scalebox{0.19}[0.18]{\includegraphics{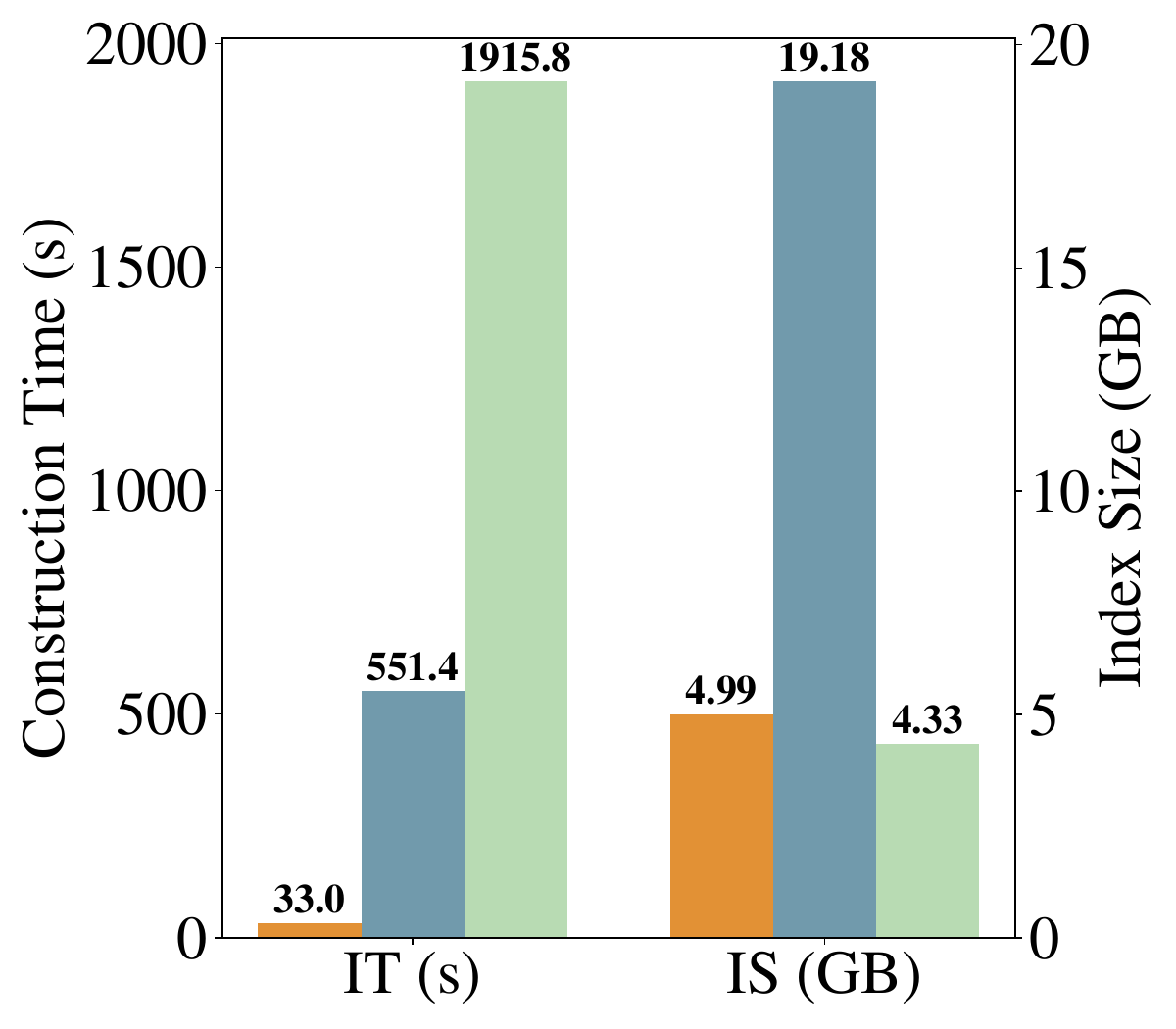}}
		\label{subfig:AntSparse-10m-it-is}}
	\vspace{-1ex} 
	
	\caption{\small Index Size and Construction Time for Different Datasets and Algorithms.}
	\vspace{-2ex} 
	\label{fig:IT_IS}
\end{figure}

\begin{figure}[t!]\vspace{-2ex}
	\centering
	\subfigure{
		\scalebox{0.5}[0.5]{\includegraphics{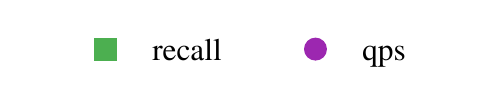}}}\hfill\\\vspace{-4ex}
	\addtocounter{subfigure}{-1}
	\subfigure[][{\scriptsize SPLADE-FULL Recall@50}]{
		\scalebox{0.19}[0.19]{\includegraphics{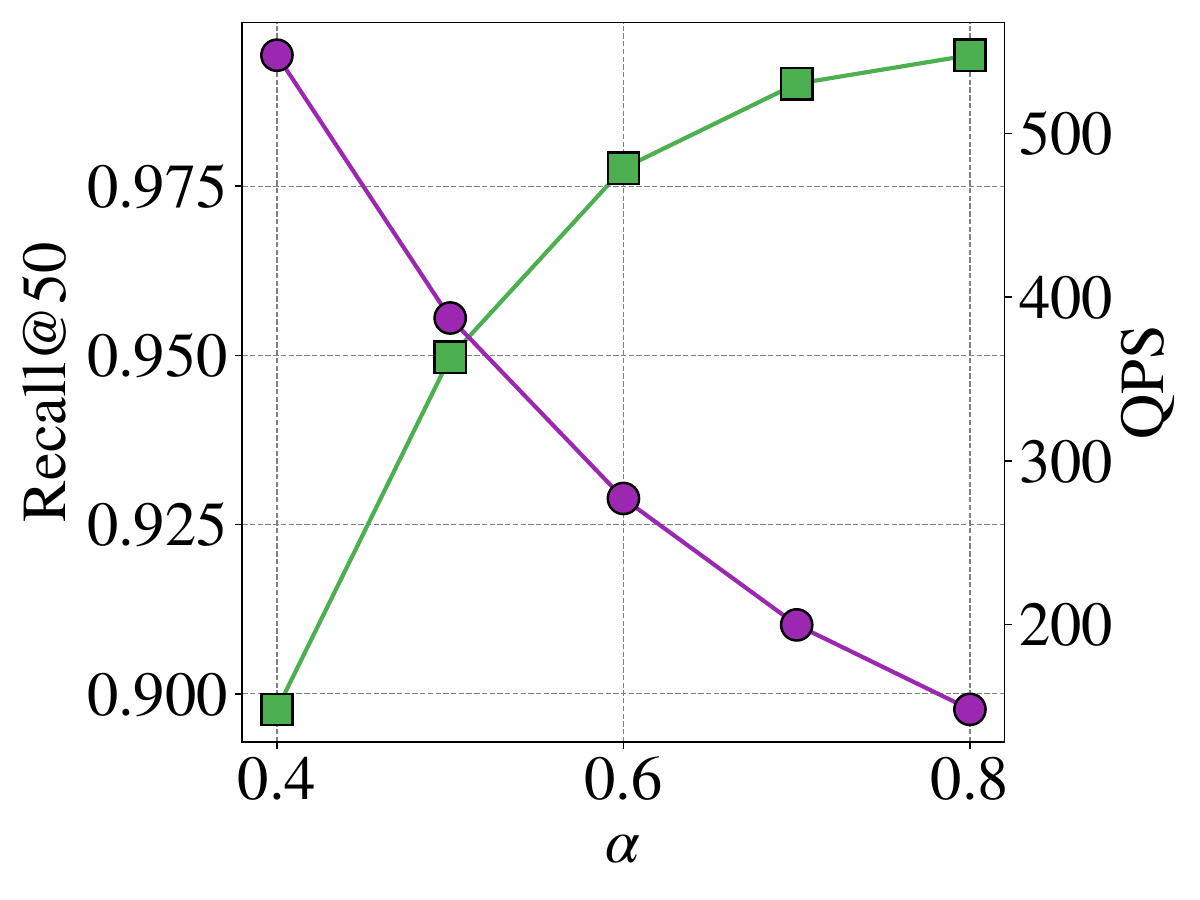}}
		\label{subfig:splade-full-top50-alpha}}
	\hfill
	\subfigure[][{\scriptsize AntSparse-10M Recall@50}]{
		\scalebox{0.19}[0.19]{\includegraphics{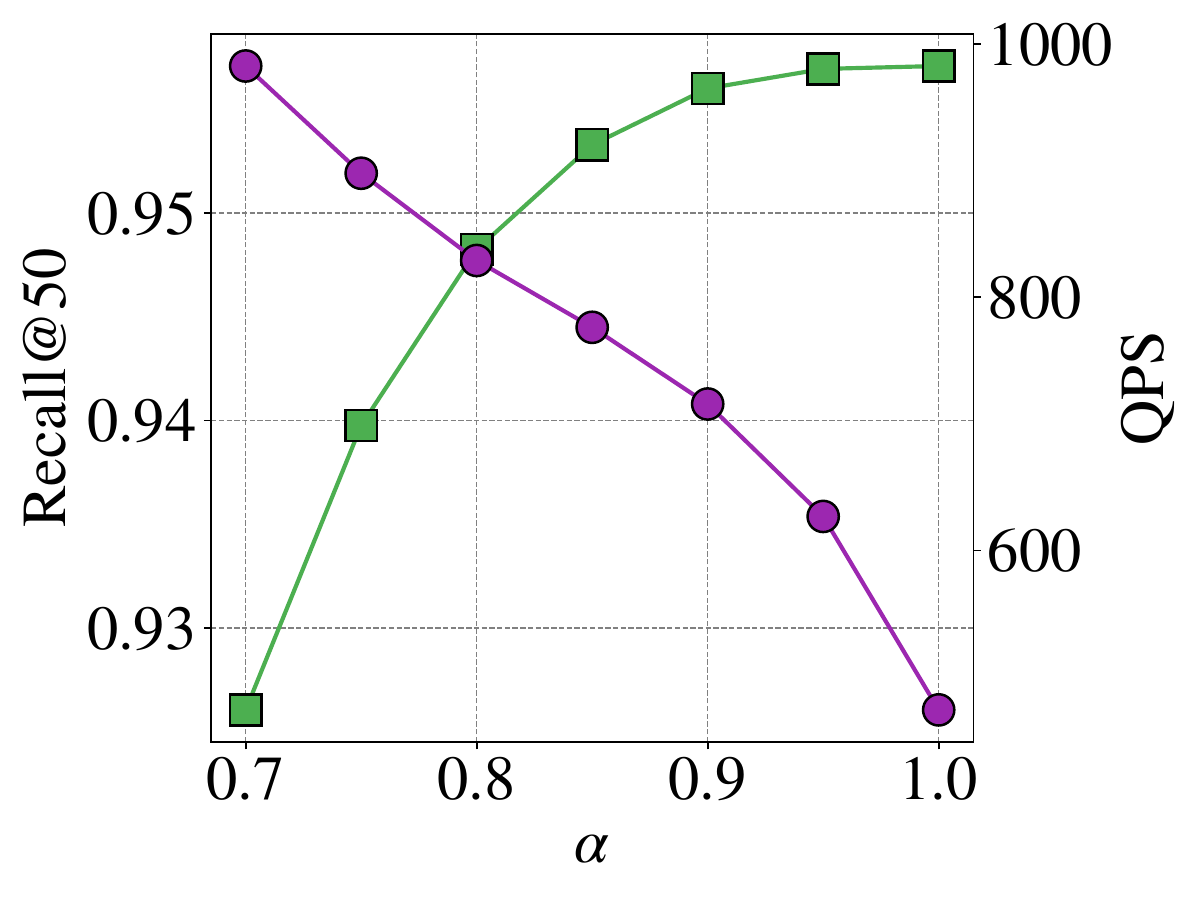}}
		\label{subfig:AntSparse-10m-top50-alpha}}
	\vspace{-1ex} 
	
	\caption{\small The Impact of $\alpha$.}
	\vspace{-2ex} 
	\label{fig:alpha}
\end{figure}

This experiment examines how the document pruning parameter $\alpha$, controlling the retained high-mass non-zero entries, affects \textsc{Sindi}'s performance. A larger $\alpha$ retains more entries per vector, potentially increasing recall but raising search costs. We vary $\alpha$ from 0.4 to 0.8 on SPLADE-FULL and 0.7 to 1.0 on AntSparse, keeping $\beta$ and $\gamma$ fixed.

Figure~\ref{fig:alpha} shows that, on \textsc{MsMarco}, recall rises and QPS drops as $\alpha$ increases, with both trends flattening at higher $\alpha$. In the lower $\alpha$ range, recall improves rapidly with moderate QPS loss, while in the higher range, recall gains slow and QPS stabilizes. On AntSparse, recall also grows more slowly at large $\alpha$, but QPS declines more steeply.

The slower recall gain at high $\alpha$ is due to the \textit{saturation effect} in Section~\ref{subsec:pruning}: once enough non-zero entries are kept, further additions barely reduce inner product error. The sharper QPS drop on AntSparse comes from its lower variance of non-zero values, which leads to more retained entries for the same $\alpha$ increase, and thus more postings to scan.

\subsubsection{The Impact of \textit{sparsity}}

We evaluate the performance of \textsc{Sindi} under varying dataset \textit{sparsity} levels. For a fixed $\mathrm{avg}\,\Vert \vec{x} \Vert$, larger $d$ yields higher \textit{sparsity}. To examine its impact, we generate five synthetic datasets ($\Vert \mathcal{D} \Vert = 1\text{M}$, $\mathrm{avg}\,\Vert \vec{x} \Vert = 120$) with $d \in \{10\mathrm{k},\,30\mathrm{k},\,50\mathrm{k},\,70\mathrm{k},\,100\mathrm{k}\}$, thereby increasing \textit{sparsity} gradually. Each dataset is generated uniformly at random, where both the positions and the values of non-zero entries follow a uniform distribution. Figure~\ref{fig:sparsity} compares \textsc{Sindi} and \textsc{Seismic} at Recall@50 = 90\% and Recall@50 = 99\%.

As \textit{sparsity} increases, both \textsc{Sindi} and \textsc{Seismic} achieve higher QPS at the same recall because the IVF index produces shorter inverted lists (average length $\mathrm{avg} \, l$ decreases),  reducing the number of candidate non-zero entries $\lVert q \rVert \, l$ to be visited. Since all true nearest neighbors still reside in the probed lists, recall remains unaffected.

\textsc{Sindi} consistently outperforms \textsc{Seismic} by maintaining approximately 10$\times$ higher QPS across all \textit{sparsity} levels. This demonstrates \textsc{Sindi}'s efficiency and robustness to different data distributions. 

We also report the sensitivity analysis of reordering depth $\gamma$, which is shown in Appendix E.

\subsection{Ablation}

\subsubsection{The Impact of Pruning Method}

\begin{figure}[t!]\vspace{-2ex}
	\centering
	\subfigure{
		\scalebox{0.5}[0.5]{\includegraphics{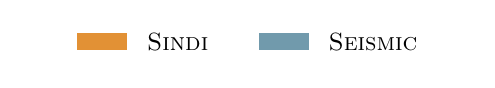}}}\hfill\\\vspace{-4ex}
	\addtocounter{subfigure}{-1}
	\subfigure[][{\scriptsize Recall@50=90\%}]{
		\scalebox{0.22}[0.22]{\includegraphics{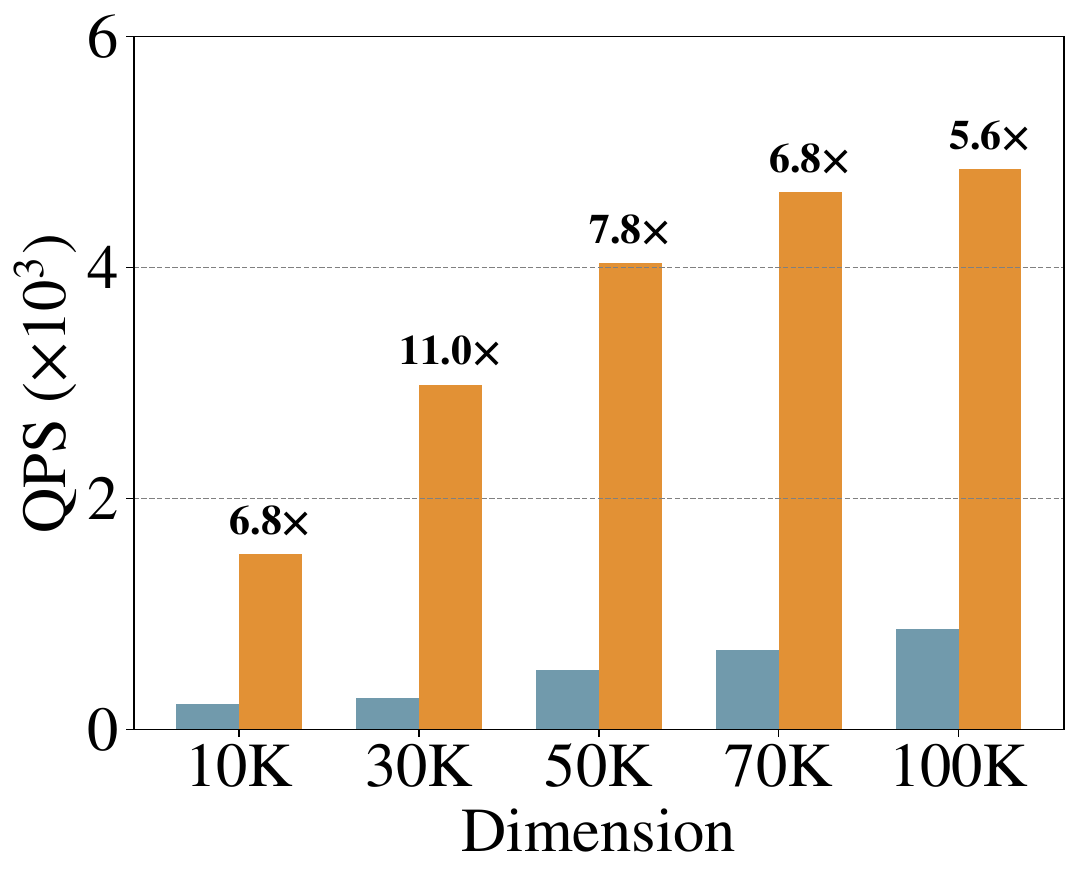}}
		\label{subfig:random-1M-recall90}}
	\hfill
	\subfigure[][{\scriptsize Recall@50=99\%}]{
		\scalebox{0.22}[0.22]{\includegraphics{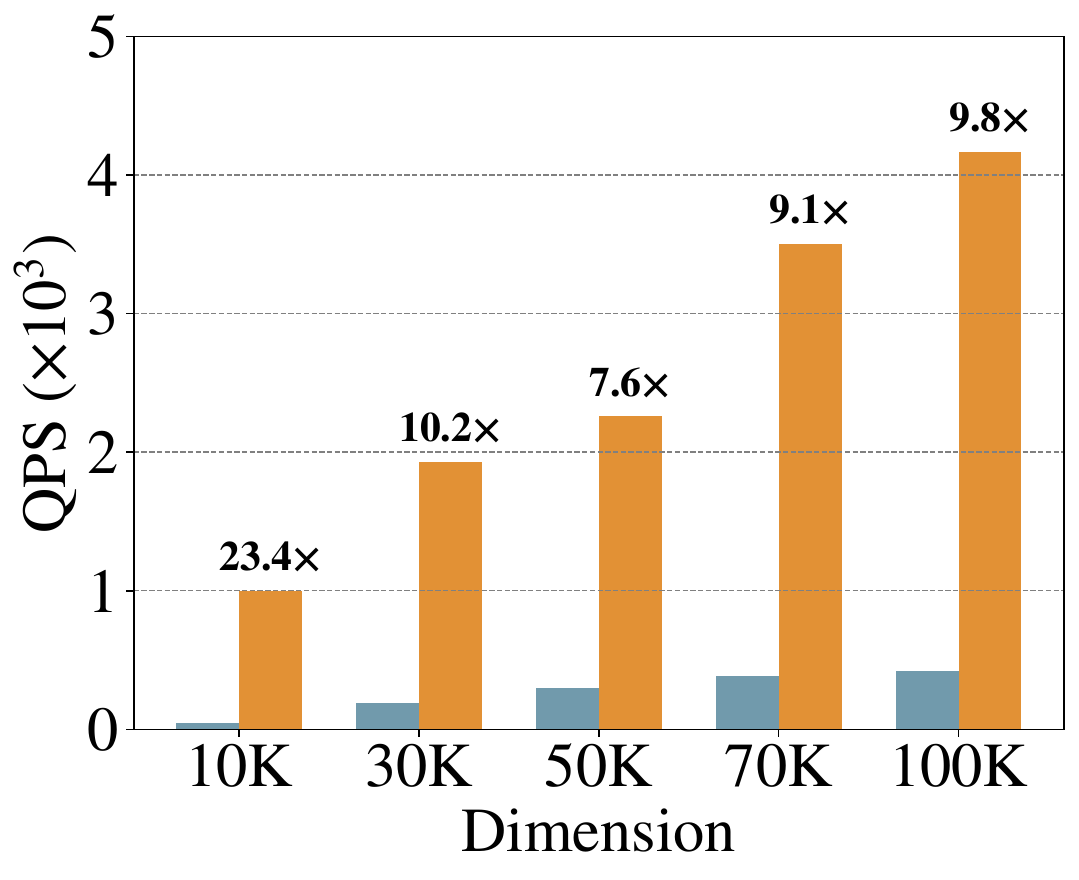}}
		\label{subfig:random-1M-recall99}}
	\vspace{-1ex} 
	
	\caption{\small QPS of \textsc{Sindi} and \textsc{Seismic} on RANDOM-1M Dataset with Different Sparsity}
	\vspace{-2ex} 
	\label{fig:sparsity}
\end{figure}

\begin{figure}[t!]\vspace{-1ex}
	\centering
	\subfigure{
		\scalebox{0.4}[0.4]{\includegraphics{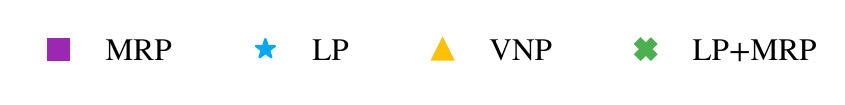}}}\hfill\\\vspace{-4ex}
	\addtocounter{subfigure}{-1}
	\subfigure[][{\scriptsize SPLADE-FULL Recall@10}]{
		\scalebox{0.19}[0.19]{\includegraphics{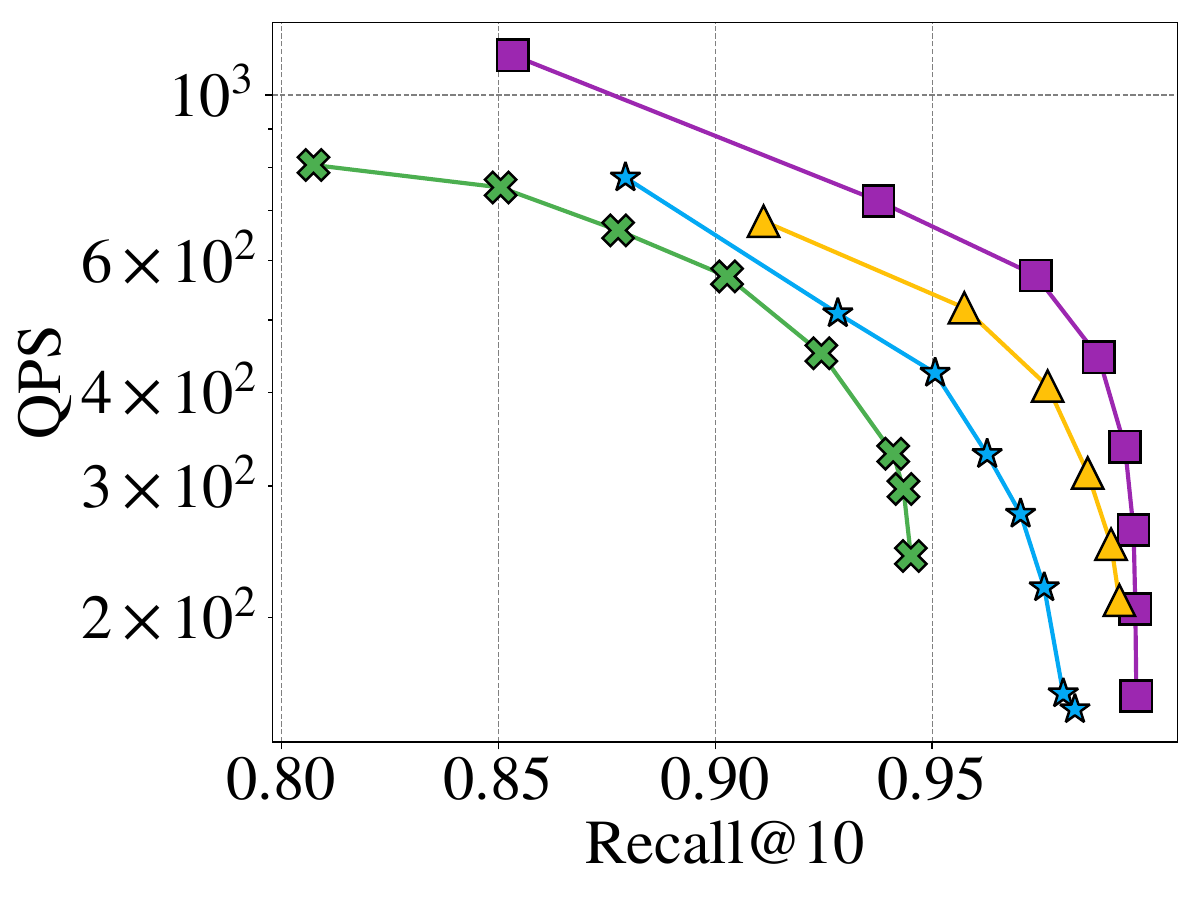}}
		\label{subfig:splade-full-top10-prune}}
	\hfill
	\subfigure[][{\scriptsize AntSparse Recall@10}]{
		\scalebox{0.19}[0.19]{\includegraphics{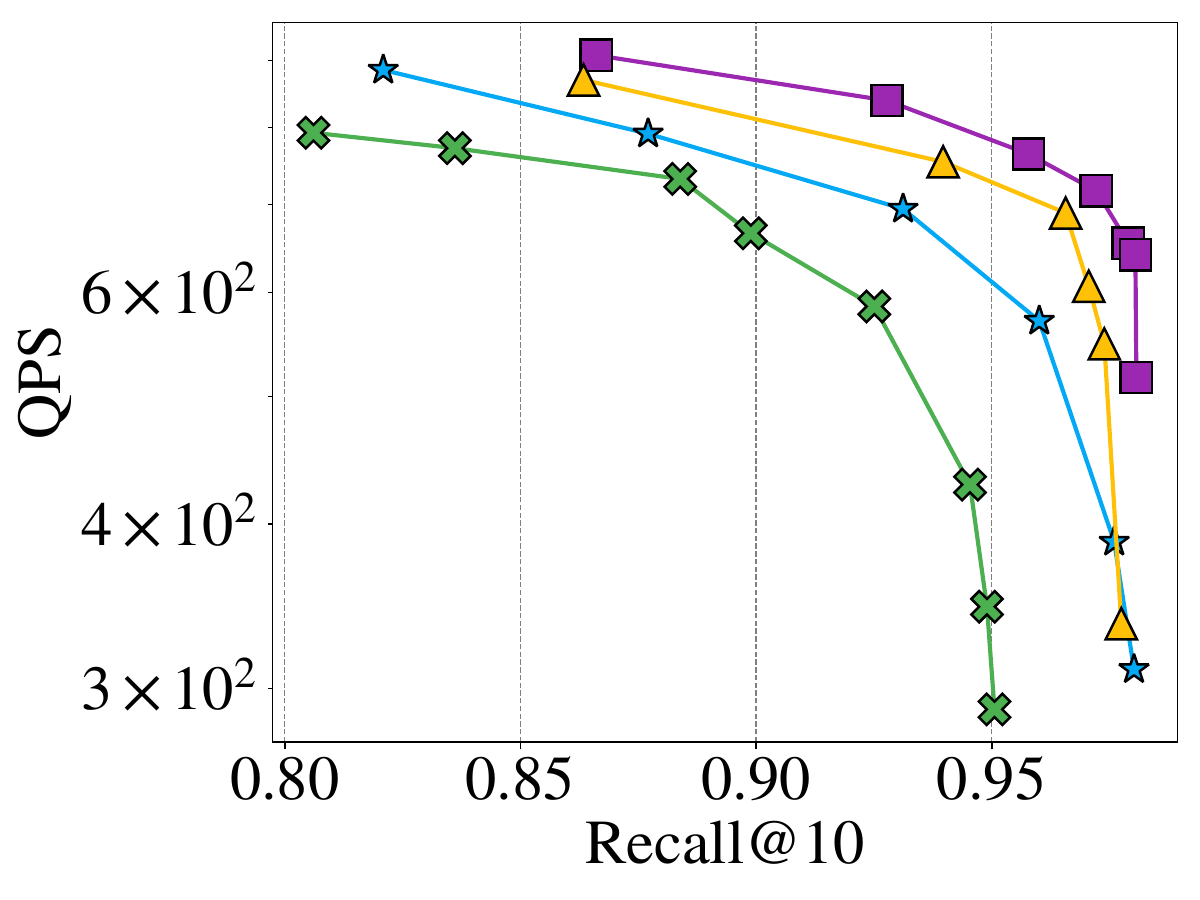}}
		\label{subfig:wholenet-10M-top10-prune}}
	\vspace{-1ex} 
	
	\caption{\small Recall@10 vs QPS on \textsc{MsMarco} and AntSparse of \textit{Mass Ratio Pruning}, \textit{List Pruning} and \textit{Vector Number Pruning}.}
	\vspace{-2ex} 
	\label{fig:prune_cmp}
\end{figure}

Figure~\ref{fig:prune_cmp} illustrates the performance of different pruning strategies on the SPLADE-FULL and AntSparse datasets. All strategies are evaluated under the same $\beta$ and $\gamma$ settings, while varying $\alpha$ to measure Recall and QPS. \textit{Mass Ratio Pruning} achieves the best overall performance, followed by \textit{Vector Number Pruning} and \textit{List Pruning}, with the lowest performance observed when combining \textit{List Pruning} and \textit{Mass Ratio Pruning}. 

The advantage of \textit{Mass Ratio Pruning} lies in its ability to preserve the non-zero entries that contribute most to the inner product, thereby retaining more true nearest neighbors during the coarse recall stage. In contrast, \textit{List Pruning} limits the posting list size for each dimension, which can result in two issues: some lists become too short and keep mainly small-value entries, while others remain too long and remove large-value entries. As a result, \textit{List Pruning} is less suitable for \textsc{Sindi}. \textsc{Seismic}, however, uses \textit{List Pruning} because it computes the full inner product for all vectors in the lists, avoiding large accuracy losses. When \textit{List Pruning} is combined with \textit{Mass Ratio Pruning}, even more non-zero entries are discarded, further reducing recall.

\subsubsection{The Impact of Reorder}

\revisedtext{To investigate the effectiveness of the reordering strategy, we evaluated \textsc{Sindi}'s Recall@50 and query time with and without reordering on the SPLADE-FULL and AntSparse-10M datasets. Regarding parameter selection, $\beta$  and $\gamma$ were fixed; these values serve as representative samples drawn from the optimal intervals identified via grid search. Conversely, to assess robustness across varying index densities, we varied the document prune ratio $\alpha$ from 0.3 to 0.6 on SPLADE-FULL and from 0.7 to 1.0 on AntSparse-10M.}

The results are shown in Figure~\ref{fig:reorder}. For the reordering strategy, the query time includes both accumulation time and reordering time, whereas the non-reordering strategy includes only accumulation time. Since $\gamma$ is fixed, the reorder time remains relatively constant. As $\alpha$ increases, the accumulation time also increases accordingly. Although reorder time accounts for only a small portion of the total query time, it yields a substantial recall improvement. For example, on SPLADE-FULL with $\alpha = 0.6$, the accumulation time is 17099 ms and the reorder time is 3553 ms ($\approx 17.2\%$ of the total), yet recall improves from 0.71 to 0.97. This demonstrates the clear benefit of incorporating the reordering strategy.

The reordering strategy is effective for two main reasons. First, it focuses computation on a limited set of non-zero entries that contribute most to the inner product, greatly reducing unnecessary operations. Second, the partial inner products derived from high-value entries largely preserve the true ranking order, ensuring that small $\gamma$ is sufficient to contain the true nearest neighbors, thereby improving both efficiency and accuracy.

\subsection{Scalability}

\begin{figure}[t!]\vspace{-2ex}
	\centering
	\subfigure{
		\scalebox{0.24}[0.24]{\includegraphics{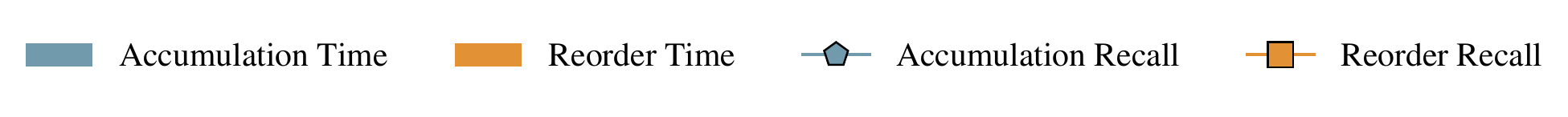}}}\hfill\\\vspace{-3ex}
	\addtocounter{subfigure}{-1}
	\subfigure[][{\scriptsize SPLADE-FULL}]{
		\scalebox{0.18}[0.18]{\includegraphics{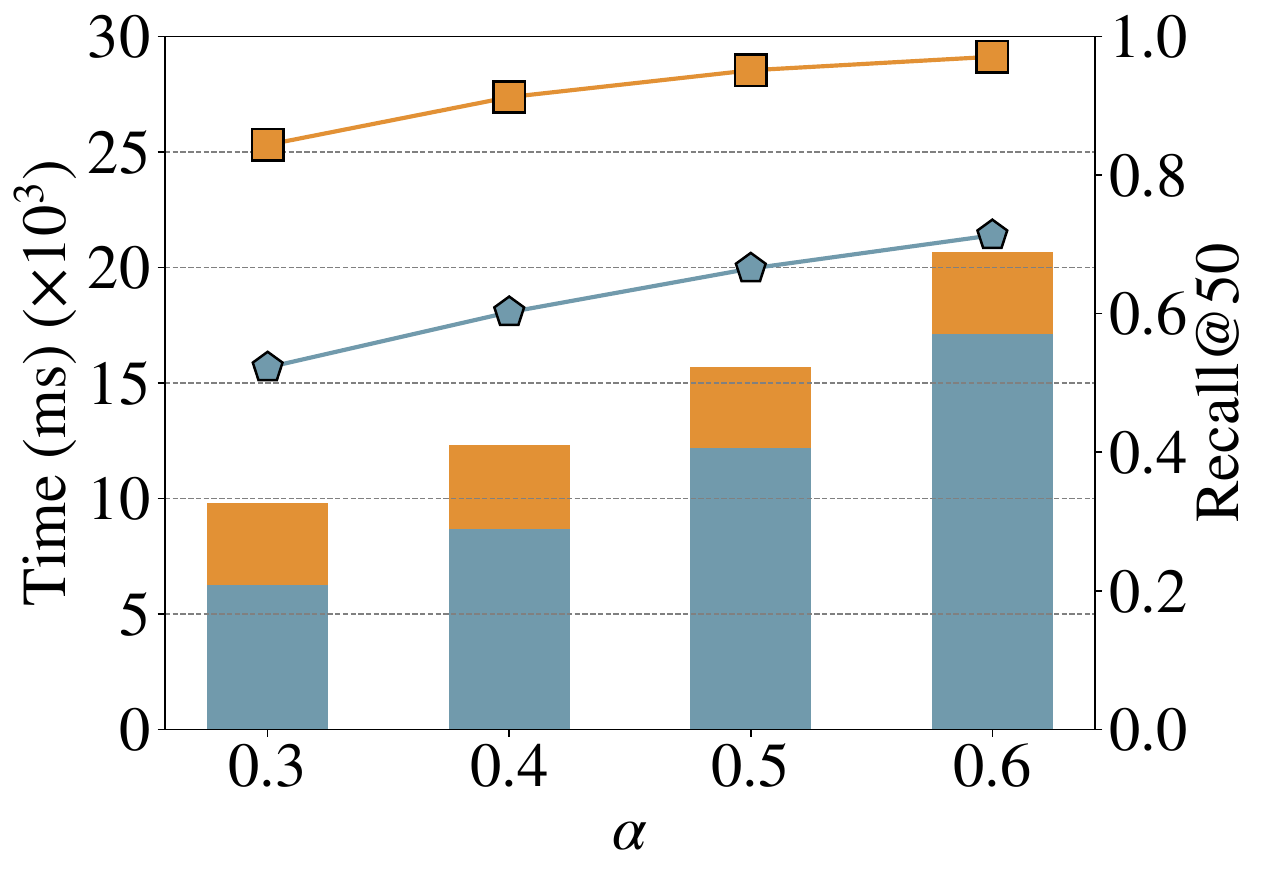}}
		\label{subfig:splade-full-top50-reorder}}
	\hfill
	\subfigure[][{\scriptsize AntSparse-10M}]{
		\scalebox{0.18}[0.18]{\includegraphics{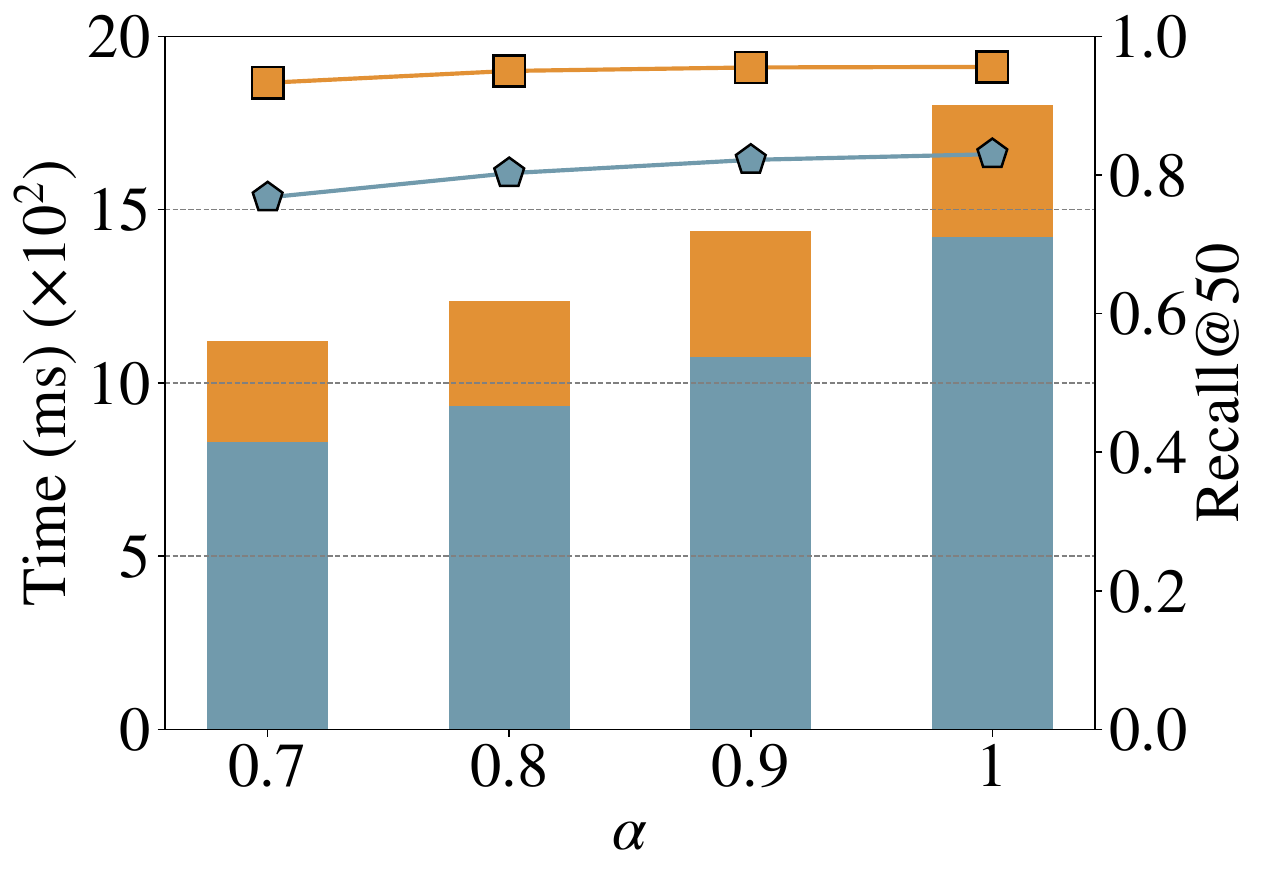}}
		\label{subfig:antsparse-top50-reorder}}
	\vspace{-1ex} 
	
	\caption{\small  Reorder vs. Non-Reorder on SPLADE-FULL and AntSparse-10M Datasets: Time Cost and Recall@50 with Varying $\alpha$.}
	\vspace{-2ex} 
	\label{fig:reorder}
\end{figure}

To further evaluate the scalability of the \textsc{Sindi} algorithm, we conducted a multi-threaded performance test on two large-scale datasets: SPLADE-FULL and AntSparse-10M. We measured QPS at different recall targets, with \(\mathrm{Recall}@50 \in \{0.95, 0.97, 0.99\}\) for SPLADE-FULL and \(\mathrm{Recall}@50 \in \{0.90, 0.95, 0.99\}\) for AntSparse-10M, while varying the number of CPU cores from 2 to 10, as shown in Figure~\ref{fig:thread}. 

On AntSparse-10M at \(\mathrm{Recall}@50 = 0.90\), using 2 CPU cores yields 1979.49 QPS ($\approx 989.75$ QPS per core), whereas 10 cores achieve 8374.01 QPS ($\approx 837.40$ QPS per core), indicating per-core efficiency drops by less than 16\% when scaling from 2 to 10 cores. On SPLADE-FULL at \(\mathrm{Recall}@50 = 0.99\), using 2 CPU cores yields 453.46 QPS ($\approx 226.73$ QPS per core), whereas 10 cores achieve 2142.05 QPS ($\approx 214.21$ QPS per core), corresponding to a per-core efficiency drop of about 5.5\%. Similar scaling behavior is observed across other recall targets for both datasets.

These results show that \textsc{Sindi} maintains high multi-core efficiency across datasets and accuracy targets, confirming its suitability for deployment in scenarios requiring both high recall and high throughput, with minimal parallelization overhead.

We also report the performance comparison with Lucene-based systems in Appendix F.

\begin{figure}[t!]\vspace{-2ex}
	\centering
	\subfigure{
		\scalebox{0.33}[0.33]{\includegraphics{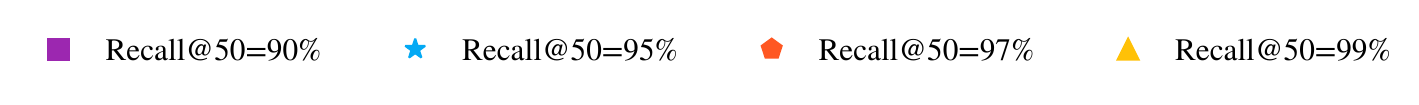}}}\hfill\\\vspace{-3ex}
	\addtocounter{subfigure}{-1}
	\subfigure[][{\scriptsize SPLADE-FULL}]{
		\scalebox{0.18}[0.18]{\includegraphics{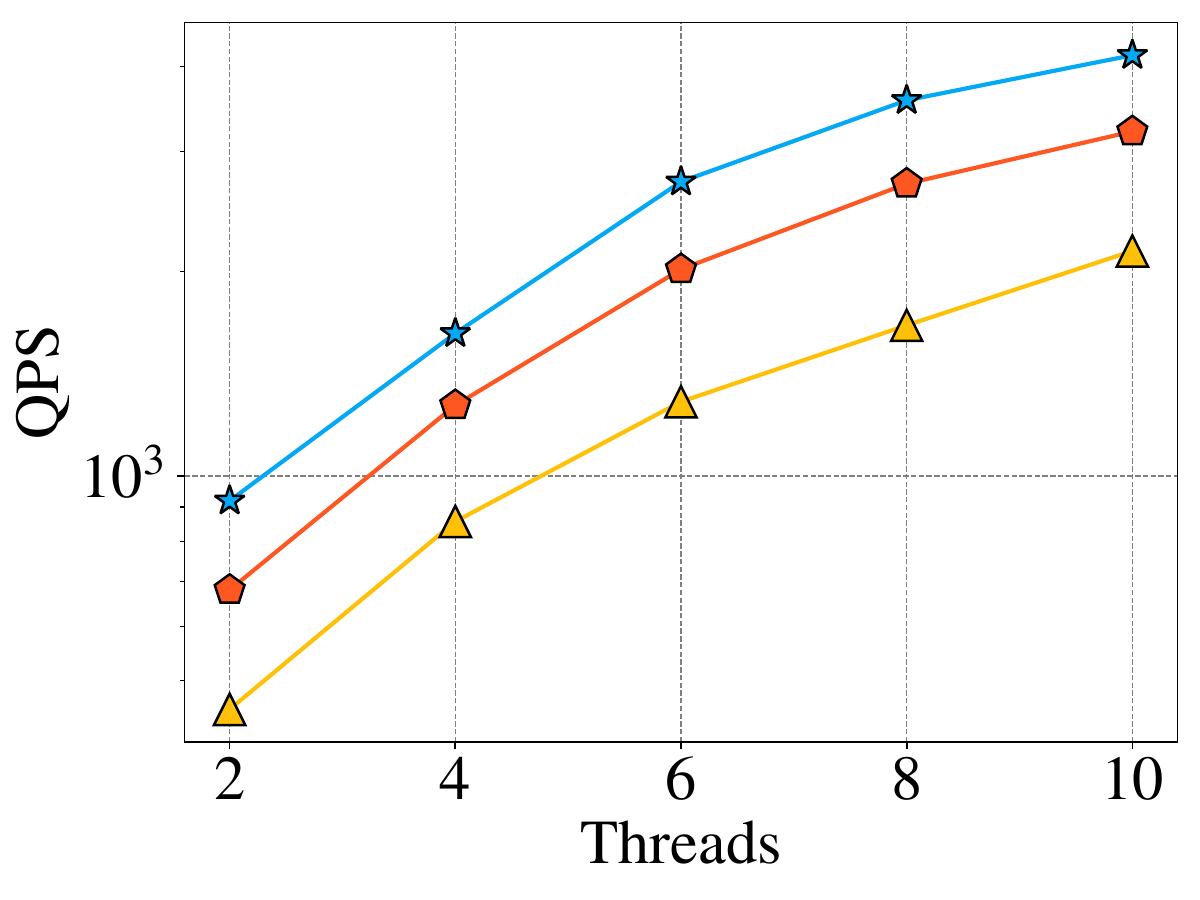}}
		\label{subfig:splade-full-top50-thread}}
	\hfill
	\subfigure[][{\scriptsize AntSparse-10M}]{
		\scalebox{0.18}[0.18]{\includegraphics{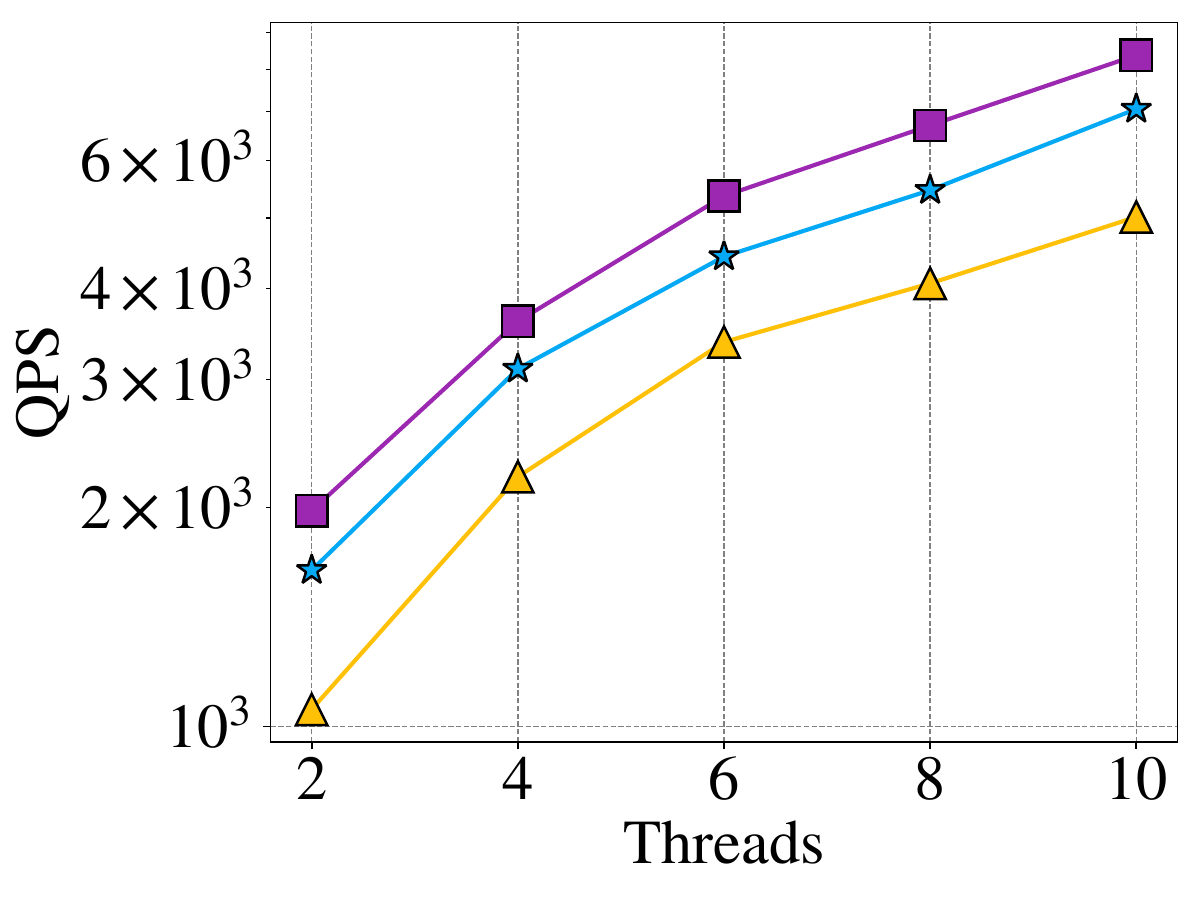}}
		\label{subfig:antsparse-top50-thread}}
	\vspace{-1ex} 
	
	\caption{\small Multi-threaded Scalability of \textsc{Sindi} (QPS) at Different Recall@50 Targets on SPLADE-FULL and AntSparse-10M Datasets.}
	\vspace{-2ex} 
	\label{fig:thread}
\end{figure}

\section{Related Work}

	\textbf{Inverted Index-based Methods.}
	Algorithms like BMW~\cite{bmw}, BMP~\cite{bmp}, and \textsc{Seismic}~\cite{seismic} utilize blocking strategies. BMW~\cite{bmw}'s dynamic pruning often degenerates to brute force due to smooth weight distributions, causing frequent cache misses. We put the comparisons with BMW in Appendix D. BMP~\cite{bmp} improves this by pre-sorting block bounds but incurs prohibitive sorting costs.  \textsc{Seismic}~\cite{seismic} utilizes geometric blocking for skipping but remains bottlenecked by inefficient exact scoring and random memory access to raw vectors.
	
	\textbf{Graph-based Methods.}
	\textsc{PyANNS}~\cite{pyanns} and SHNSW~\cite{shnsw} adapt HNSW~\cite{hnsw} for sparse data via quantization and data co-location. However, performance degrades on highly sparse datasets where limited node-query overlap causes greedy routing failures. Furthermore, graph index construction remains significantly more expensive than inverted indices.
	
	\textbf{Hashing-based Methods.}
	SINNAMON~\cite{sinnamon} and SOSIA~\cite{sosia} explore hashing pathways. SINNAMON~\cite{sinnamon} uses dense sketches for SIMD streaming but lacks static pruning, forcing the evaluation of extensive candidates. SOSIA~\cite{sosia} employs MinHash for LSH-based estimation; achieving high accuracy requires numerous hash functions, incurring significant overhead, while its randomized storage forces expensive random memory access.

\section{Conclusion}
\label{sec:conclusion}

In this work, we propose \textsc{Sindi}, an inverted index for sparse vectors that eliminates redundant inner-product computations.  
By storing non-zero entries directly in the postings, \textsc{Sindi} removes both document ID lookups and random memory accesses, and leverages SIMD acceleration to maximize CPU parallelism.  
It further introduces \textit{Mass Ratio Pruning}, which effectively preserves high-value entries, and a reordering strategy whose refinement step ensures high accuracy.
Experiments on multilingual, multi-scale real-world datasets demonstrate that \textsc{Sindi} delivers state-of-the-art performance in both recall and throughput.

\section{acknowledgment}
\label{sec:ack}
This work is supported by the ``Pioneer'' and ``Leading Goose'' R\&D Program of Zhejiang (No. 2026C02A1236), Ant Group, Fundamental Research Funds for the Central Universities, Shanghai Central and Local Science and Technology Development Fund Project (Grant No. YDZX20253100002004), the New Cornerstone Science Foundation through the XPLORER PRIZE. Corresponding author: Peng Cheng.

\section{AI-Generated Content Acknowledgement}
No content of this paper is generated by Generative AI tools and technologies, such as ChatGPT.

\bibliographystyle{IEEEtran}
\bibliography{add}

 \clearpage
 \appendix
\subsection{SIMD Implementation and Evaluation}
\textsc{Sindi}'s handwritten AVX-512 implementation relies on four key strategies:
\begin{itemize}
	\item \textbf{Latency Hiding for Sparse Access:} \textsc{Sindi} extend SIMD to sparse data using \texttt{gather/scatter} instructions; to hide their significant memory access latency, \textsc{Sindi} employ \texttt{2x loop unrolling} to schedule independent gather operations back-to-back.
	\item \textbf{Data Prefetching:} \textsc{Sindi} employ \texttt{prefetch} directives to proactively hint the CPU to fetch data for subsequent loop iterations into the L1 cache.
	\item \textbf{Fused Multiply-Add (FMA):} The core computation utilizes FMA instructions, which perform multiplication and addition in a single, high-throughput operation.
	\item \textbf{Tail Masking Process:} \textsc{Sindi} uses AVX-512's mask registers to efficiently process the final few elements of a list that do not fill a full SIMD vector.
\end{itemize}

To evaluate effect of SIMD to \textsc{Sindi}, a comparative analysis was performed between handwritten SIMD implementations and naive C++ code compiled with auto-vectorization. Experiments were conducted on an Intel Xeon Platinum 8269CY CPU using GCC 10.2.1. Tested configurations include:

\begin{itemize}
	
	\item \textbf{unoptimized}: Naive C++ code without compiler optimizations.
	
	\item \textbf{compiler-opt}: Naive C++ code compiled with auto-vectorization and optimization flags (\texttt{-Ofast, -O3, -free-vectorize, -fopen-simd, -funroll-loops}).
	
	\item \textbf{sse}: Handwritten code using the respective SIMD intrinsics ''{\texttt{-msse -msse2 -msse3 -mssse3 -msse4 -msse4a -msse4.1 -msse4.2}}''.
	
	\item \textbf{avx}: Handwritten code using the respective SIMD intrinsics ''{\texttt{-mavx}}''.
	
	\item \textbf{avx2}: Handwritten code using the respective SIMD intrinsics ''{\texttt{-mavx2 -mfma}}''.
	
	\item \textbf{avx512}: Handwritten code using the respective SIMD intrinsics ''{\texttt{-mavx512f -mavx512pf -mavx512er -mavx512cd -mavx512vl -mavx512bw -mavx512dq -mavx512ifma -mavx512vbmi}}''.
	
\end{itemize}

The end-to-end QPS results on the SPLADE-1M and SPLADE-FULL datasets are presented in Table \ref{tab:base_1m_qps_speedup} and Table \ref{tab:base_full_qps_speedup}.

\begin{table*}[t]
	\centering
	\caption{\textit{QPS and Speedup relative to scalar baseline on SPLADE-1M dataset.}}
	\label{tab:base_1m_qps_speedup}
	\resizebox{\textwidth}{!}{
		\begin{tabular}{ccccccc}
			\toprule
			$\lambda$ & unoptimized & sse & avx & avx2 & compiler-opt & avx512 \\
			\midrule
			100000 & 1407.3 & 1413.6 (1.00$\times$) & 1513.1 (1.08$\times$) & 1584.2 (1.13$\times$) & 1711.1 (1.22$\times$) & \textbf{1776.4 (1.26$\times$)} \\
			300000 & 1141.6 & 1153.1 (1.01$\times$) & 1276.7 (1.12$\times$) & 1335.3 (1.17$\times$) & 1439.0 (1.26$\times$) & \textbf{1482.4 (1.30$\times$)} \\
			500000 & 983.2 & 1010.6 (1.03$\times$) & 1104.2 (1.12$\times$) & 1183.6 (1.20$\times$) & 1240.1 (1.26$\times$) & \textbf{1275.9 (1.30$\times$)} \\
			700000 & 909.5 & 957.7 (1.05$\times$) & 1024.8 (1.13$\times$) & 1085.5 (1.19$\times$) & 1182.3 (1.30$\times$) & \textbf{1203.4 (1.32$\times$)} \\
			1000000 & 828.9 & 858.9 (1.04$\times$) & 936.2 (1.13$\times$) & 975.2 (1.18$\times$) & 1060.8 (1.28$\times$) & \textbf{1071.3 (1.29$\times$)} \\
			\bottomrule
		\end{tabular}
	}
\end{table*}

\begin{table*}[ht]
	\centering
	\caption{\textit{QPS and Speedup relative to scalar baseline on SPLADE-FULL dataset.}}
	\label{tab:base_full_qps_speedup}
	\resizebox{\textwidth}{!}{
		\begin{tabular}{ccccccc}
			\toprule
			$\lambda$ & unoptimized & sse & avx & avx2 & compiler-opt & avx512 \\
			\midrule
			100000 & 276.9 & 282.5 (1.02$\times$) & 298.5 (1.08$\times$) & 318.8 (1.15$\times$) & 339.1 (1.22$\times$) & \textbf{348.3 (1.26$\times$)} \\
			300000 & 218.6 & 228.7 (1.05$\times$) & 242.7 (1.11$\times$) & 252.3 (1.15$\times$) & 257.4 (1.18$\times$) & \textbf{268.1 (1.23$\times$)} \\
			500000 & 194.7 & 199.7 (1.03$\times$) & 206.3 (1.06$\times$) & 215.9 (1.11$\times$) & 220.6 (1.13$\times$) & \textbf{238.8 (1.23$\times$)} \\
			700000 & 182.5 & 186.7 (1.02$\times$) & 194.8 (1.07$\times$) & 198.6 (1.09$\times$) & 206.7 (1.13$\times$) & \textbf{220.8 (1.21$\times$)} \\
			1000000 & 172.4 & 178.5 (1.04$\times$) & 181.3 (1.05$\times$) & 184.1 (1.07$\times$) & 192.0 (1.11$\times$) & \textbf{205.8 (1.19$\times$)} \\
			\bottomrule
		\end{tabular}
	}
\end{table*}

Results indicate that handwritten AVX-512 consistently yields peak performance (up to 1.32$\times$ speedup), validating the low-level optimizations. Notably, the compiler-optimized version (\texttt{compiler-opt}) achieves efficiency comparable to AVX-512 (e.g., 1.22$\times$ vs. 1.26$\times$ on SPLADE-1M). This confirms that the memory design of SINDI is inherently amenable to vectorization, allowing efficient deployment on platforms lacking AVX-512 via standard compiler flags.

\subsection{Sensitivity Analysis of Window Size $\lambda$ across Hardware and Datasets}
\label{subsec: window_tuning}

To assess the portability of the window size parameter $\lambda$ and validate the robustness of the proposed double power-law cost model, we reproduced the sensitivity analysis (originally Example 6) across three distinct server configurations (specifications detailed in Table \ref{tab:hardware_specs}). Experiments were conducted on both SPLADE (English) and AntSparse (Chinese) datasets using full-precision SINDI. We utilized Intel VTune Profiler to measure memory-bound metrics, specifically distinguishing between random accesses to the distance array and cache eviction overheads during sub-list switching. For each dataset $\mathcal{D}$, we sampled 8 values of $\lambda$ on a logarithmic scale ranging from $1K$ to $||\mathcal{D}||$.

\begin{table}[t]
	\centering
	
	\caption{Hardware Specifications for Sensitivity Analysis}
	\label{tab:hardware_specs}
	
	\begin{tabularx}{\columnwidth}{@{} l X X X @{}} 
		\toprule
		\textbf{Component} & \textbf{Server A} & \textbf{Server B} & \textbf{Server C}\\ 
		\midrule
		\textbf{CPU} & Intel Xeon Platinum 8269CY @ 2.50GHz & Intel Xeon Platinum 8163 @ 2.50GHz & Intel Xeon CPU E5-2650 v2 @ 2.60GHz\\ 
		\textbf{L1d Cache} & 32K & 32K & 32K\\ 
		\textbf{L1i Cache} & 32K & 32K & 32K\\ 
		
		\textbf{L2 Cache} & 1,024K & 1,024K & 256K\\ 
		\textbf{L3 Cache} & 36,608K & 33,792K & 20,480K\\ 
		\textbf{Memory} & 512 GB & 502 GB & 125 GB\\
		\bottomrule
	\end{tabularx}
\end{table}

Figure \ref{fig:window_appendix} illustrates the correlation between theoretical predictions and measured QPS, alongside the memory-bound proportions. Table \ref{tab:lambda_results} summarizes the fitted theoretical optima ($\lambda^*$), the measured optimal intervals, and the corresponding QPS stability ranges. Due to memory limitations, Server C can only test datasets with a size of 1M.

\begin{table}[t]
	\centering
	\caption{Theoretical $\lambda^*$ and Optimal Window Interval with corresponding QPS Interval}
	\label{tab:lambda_results}
	\begin{tabularx}{\columnwidth}{@{} ll r >{\centering}X >{\centering\arraybackslash}X @{}}
		\toprule
		\textbf{Dataset} & \textbf{Server} & $\mathbf{\lambda^*}$ & \textbf{$\lambda$ interval} & \textbf{QPS interval}\\ 
		\midrule
		SPLADE-FULL   & Server A & 120,656 & [15k, 200k] & [21, 24] \\ 
		AntSparse-10M & Server A & 51,024  & [14k, 200k] & [90, 94]\\ 
		
		SPLADE-FULL   & Server B & 92,936  & [15k, 200k] & [25, 26] \\ 
		AntSparse-10M & Server B & 47,555  & [14k, 200k] & [74, 78]\\ 
		
		SPLADE-1M     & Server C & 20,794  & [7k, 50k]   & [171, 187] \\ 
		AntSparse-1M  & Server C & 11,308  & [7k, 50k]   & [594, 607]\\ 
		\bottomrule
	\end{tabularx}
\end{table}

\begin{figure}[!t]
	\centering
	\scalebox{0.4}{\includegraphics{figures/window/bar.pdf}}
	\vspace{1ex}
	
	\subfigure[{\scriptsize Server-A SPLADE-FULL}]{%
		\scalebox{0.15}{\includegraphics{figures/revision/window/f63_base_full_window.pdf}}
		\label{subfig:f63-splade-full-window}
	}
	\subfigure[{\scriptsize Server-A AntSparse-10M}]{%
		\scalebox{0.15}{\includegraphics{figures/revision/window/f63_wholenet_10M_window.pdf}}
		\label{subfig:f63-antsparse-10m-window}
	}
	\vspace{1ex}
	
	\subfigure[{\scriptsize Server-B SPLADE-FULL}]{%
		\scalebox{0.15}{\includegraphics{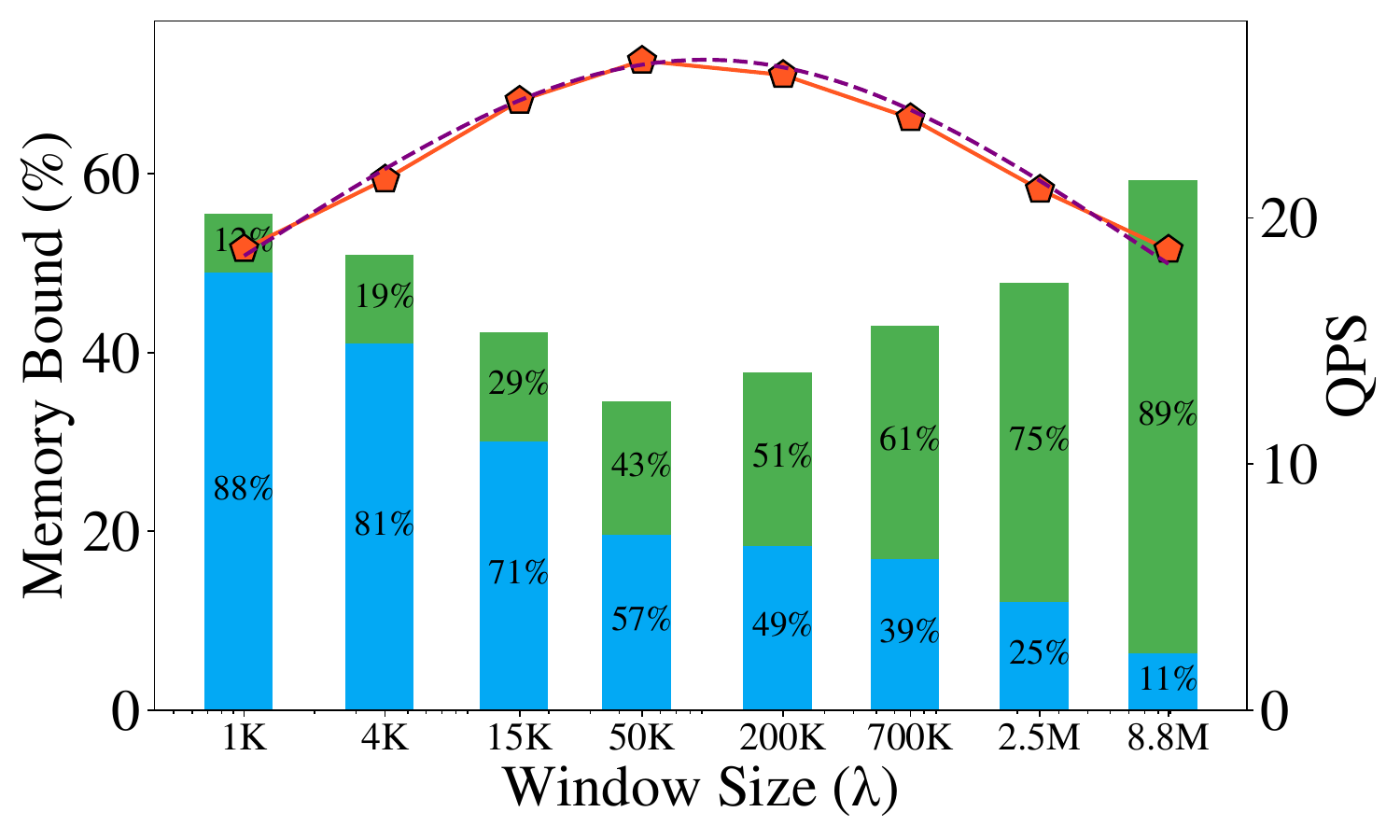}}
		\label{subfig:f53-splade-full-window}
	}
	\subfigure[{\scriptsize Server-B AntSparse-10M}]{%
		\scalebox{0.15}{\includegraphics{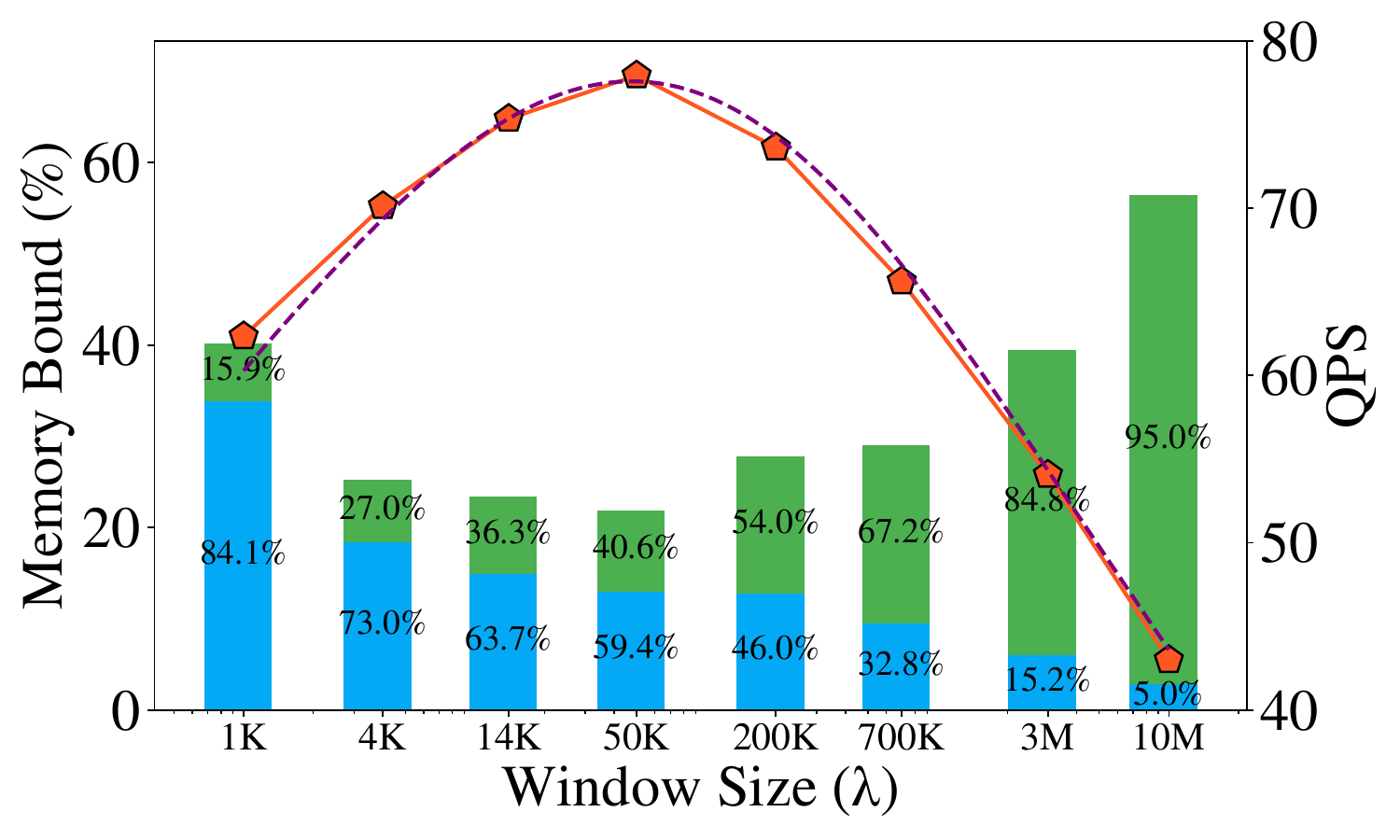}}
		\label{subfig:f53-antsparse-10m-window}
	}
	\vspace{1ex}
	
	\subfigure[{\scriptsize Server-C SPLADE-1M}]{%
		\scalebox{0.15}{\includegraphics{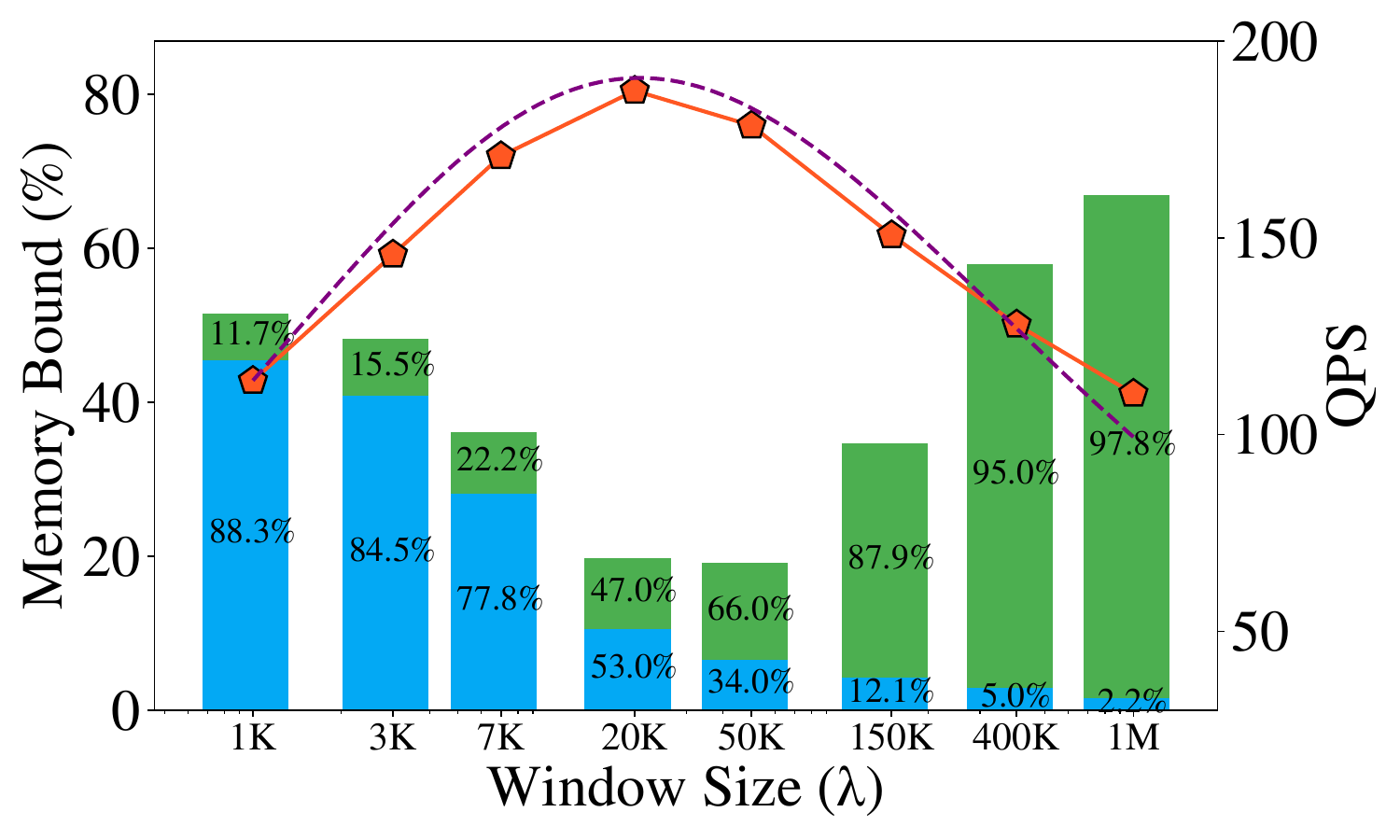}}
		\label{subfig:tbase1-splade-1m-window}
	}
	\subfigure[{\scriptsize Server-C AntSparse-1M}]{%
		\scalebox{0.15}{\includegraphics{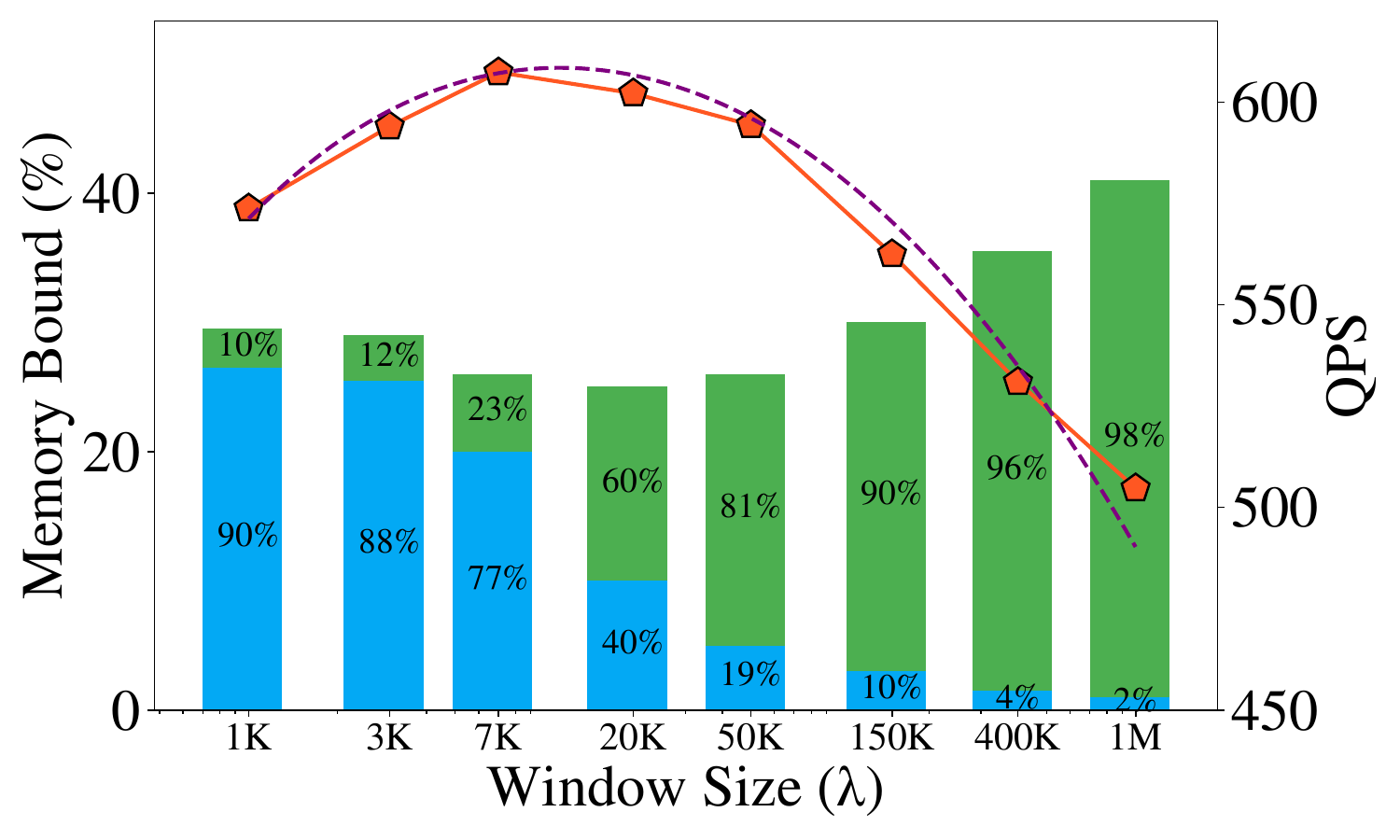}}
		\label{subfig:tbase1-antsparse-1m-window}
	}
	
	\vspace{1ex}
	\caption{Impact of Window Size on Query Throughput and Memory Accesses across different hardware configurations.}
	\label{fig:window_appendix}
	\vspace{-2ex}
\end{figure}

\textbf{Analysis of Influencing Factors.} The optimal window size $\lambda^*$ is primarily determined by two factors: hardware cache capacity and dataset sparsity.
\begin{itemize}
	\item \textbf{Cache Capacity:} Larger L3 caches accommodate larger distance arrays, reducing random access penalties ($c_{rand}$) and shifting the equilibrium $\lambda^*$ to higher values. For instance, Server A (36MB L3) exhibits a larger $\lambda^*$ than Server C (20MB L3) on the SPLADE dataset.
	\item \textbf{Dataset Sparsity:} Higher sparsity (as seen in AntSparse) results in shorter inverted lists, lowering the memory bandwidth cost of switching windows ($c_{evict}$). Consequently, highly sparse data favors smaller optimal windows. Table \ref{tab:lambda_results} consistently shows smaller $\lambda^*$ values for AntSparse compared to SPLADE across all servers.
\end{itemize}

\textbf{Practical Parameter Recommendations.} Our analysis reveals high parameter robustness, with query throughput remaining stable across order-of-magnitude ranges. Based on these findings, we recommend a simple **binary search method** to efficiently locate the extremum, although a general ``rule of thumb'' range (e.g., $\lambda \in [10,000, 120,000]$) is usually sufficient due to the flat performance plateau. For sparser datasets or cache-limited server, we recommend $\lambda \in [10,000, 50,000]$; for denser datasets or cache-rich server, we recommend $\lambda \in [50,000, 120,000]$.

\subsection{Baseline Parameter Optimization and Configuration}
\label{subsec:param_settings}

To ensure the fairness and optimality of the comparative evaluation, \textsc{Sindi} adopted a hybrid parameter selection strategy derived from three sources: recommendations from original publications, default settings from recognized benchmarks (e.g., BigANN), and extensive self-tuning via grid search. Table \ref{tab:final_parameters} summarizes the final parameter settings used for all algorithms across datasets, while Table \ref{tab:param_source} details the specific methodology and search ranges employed. \textsc{PyANNS} and HGRAPH's performance under RANDOM dataset is very poor due to random dataset's query and base vector's inner product almost zero.

\textit{The selection logic is threefold:}
\begin{itemize}
	\item \textbf{Literature Recommendations:} For algorithms such as \textsc{Seismic} and \textsc{Sosia}, we adopted the optimal parameters recommended in their respective original papers, as these settings were derived from extensive grid searches.
	\item \textbf{Benchmark Defaults:} For \textsc{PyANNS}, we utilized the configuration from the BigANN Benchmark, where it secured the top position in the Sparse Track. These settings represent the current state-of-the-art standard for SPLADE-based retrieval.
	\item \textbf{High-Recall Optimization:} For scenarios without established defaults (e.g., \textsc{Bmp} or \textsc{Seismic} on the new AntSparse datasets), we performed grid searches specifically optimizing for Query Per Second (QPS) throughput within the high-recall region (Recall@50 $\in [0.9, 1.0]$).
\end{itemize}

\begin{table*}[htbp]
	\centering
	\caption{Final Parameter Settings for All Algorithms and Datasets}
	\label{tab:final_parameters}
	\renewcommand{\arraystretch}{1.3}
	\small
	\begin{tabularx}{\textwidth}{l|X|c|c|c|c}
		\toprule
		\textbf{Dataset} & \textbf{\textsc{Seismic}} ($\lambda, \beta, \alpha$) & \textbf{\textsc{PyANNS}} & \textbf{SOSIA} & \textbf{BMP} & \textbf{HNSW} \\
		\hline
		\textbf{SPLADE-1M} & $\lambda=1400, \beta=140, \alpha=0.4$ & \multirow{5}{*}{\makecell{$R=32$ \\ $L=1000$}} & \multirow{6}{*}{\makecell{$l=50$ \\ $m=150$}} & \multirow{6}{*}{$b=16$} & \multirow{5}{*}{\makecell{$R=32$ \\ $L=1000$}} \\
		\cline{1-2}
		\textbf{SPLADE-FULL} & $\lambda=6000, \beta=400, \alpha=0.4$ & & & & \\
		\cline{1-2}
		\textbf{NQ} & $\lambda=5250, \beta=525, \alpha=0.5$ & & & & \\
		\cline{1-2}
		\textbf{AntSparse-1M} & $\lambda=2000, \beta=200, \alpha=0.5$ & & & & \\
		\cline{1-2}
		\textbf{AntSparse-10M} & $\lambda=10000, \beta=1000, \alpha=0.5$ & & & & \\
		\cline{1-3} \cline{6-6}
		\textbf{RANDOM-5M} & $\lambda=6000, \beta=600, \alpha=0.4$ & -- & & & -- \\
		\bottomrule
	\end{tabularx}
\end{table*}

\begin{table*}[htbp]
	\centering
	\caption{Parameter Selection Methodology and Tuning Ranges}
	\label{tab:param_source}
	\renewcommand{\arraystretch}{1.3}
	\small
	\begin{tabularx}{\textwidth}{l|l|X}
		\toprule
		\textbf{Algorithm} & \textbf{Dataset} & \textbf{Methodology \& Search Ranges} \\
		\hline
		\multirow{6}{*}{\textbf{\textsc{Seismic}}} & SPLADE-1M & \textbf{Self-Tuning:} Grid search $\lambda \in [1000, 2000]$ (step 200). \\
		\cline{2-3}
		& SPLADE-FULL & \textbf{Paper Rec.} : Grid search $\lambda \in [1500, 7500]$ (step 500), $\beta \in [150, 750]$ (step 50), $\alpha \in [0.1, 0.5]$ (step 0.1). \\
		\cline{2-3}
		& NQ & \textbf{Paper Rec.} : Grid search $\lambda \in \{4500, ...\}$, $\beta \in \{300, ...\}$, $\alpha \in \{0.3, 0.4, 0.5\}$. \\
		\cline{2-3}
		& RANDOM-5M & \textbf{Self-Tuning:} Grid search $\lambda \in [3000, 10000]$ (step 1000). \\
		\cline{2-3}
		& AntSparse-1M & \textbf{Self-Tuning:} Grid search $\lambda \in [800, 2000]$ (step 200). \\
		\cline{2-3}
		& AntSparse-10M & \textbf{Self-Tuning:} Grid search $\lambda \in [2000, 10000]$ (step 1000). \\
		\hline
		\multirow{2}{*}{\textbf{\textsc{PyANNS}}} & SPLADE & \textbf{Benchmark Rec.} : BigANN Sparse Track winner config. \\
		\cline{2-3}
		& Others & \textbf{Self-Tuning:} Grid search $R \in \{16, 32, 64\}, L \in \{1000, 2000\}$. \\
		\hline
		\textbf{SOSIA} & All & \textbf{Paper Rec.} : Grid search $l \in \{1, 10, 20, 40, 80\}$, $m \in \{25, 50, 100, 150, 200\}$. \\
		\hline
		\textbf{BMP} & All & \textbf{Self-Tuning:} Grid search $b \in \{16, 32, 64\}$. \\
		\hline
		\textbf{HNSW} & All & \textbf{Default:} Inherits optimal settings from \textsc{PyANNS}. \\
		\bottomrule
	\end{tabularx}
\end{table*}

\subsection{Comparison with Exact Retrieval Baseline (BMW)}

Block-Max WAND (BMW) is a classic dynamic pruning algorithm designed for efficient exact retrieval on full-text search. It relies on storing upper-bound scores for blocks of postings to skip non-competitive documents. To validate the efficiency of \textsc{Sindi}'s hardware-aware design against this algorithmic skipping approach, we compared the query throughput (QPS) of BMW using the PISA engine against the full-precision version of \textsc{Sindi} using the PISA engine. The results for Top-50 and Top-100 retrieval are presented in Table \ref{tab:full-precision-qps-combined}.

\begin{table}[htbp]
	\centering
	\caption{Performance Comparison (QPS) of BMW and Full-Precision SINDI}
	\label{tab:full-precision-qps-combined}
	\begin{tabular}{lcccc}
		\toprule
		& \multicolumn{2}{c}{\textbf{Top-50}} & \multicolumn{2}{c}{\textbf{Top-100}} \\
		\cmidrule(lr){2-3} \cmidrule(lr){4-5}
		\textbf{Dataset} & {\textbf{BMW}} & {\textbf{SINDI}} & {\textbf{BMW}} & {\textbf{SINDI}} \\
		\midrule
		SPLADE-1M & 5.22 & \textbf{301.30} & 5.33 & \textbf{304.06} \\
		SPLADE-FULL & 0.68 & \textbf{33.78} & 0.66 & \textbf{32.85} \\
		AntSparse-1M & 436.55 & \textbf{1029.59} & 473.20 & \textbf{1022.98} \\
		AntSparse-10M & 75.23 & \textbf{104.56} & 66.98 & \textbf{103.36} \\
		RANDOM-5M & 6.93 & \textbf{124.57} & 7.19 & \textbf{127.12} \\
		NQ & 1.72 & \textbf{90.89} & 1.67 & \textbf{90.85} \\
		\bottomrule
	\end{tabular}
\end{table}

\textbf{Analysis.} \textsc{Sindi} significantly outperforms BMW on SPLADE datasets (approx. $50\times$ speedup). This is because SPLADE vectors exhibit smooth score distributions that render BMW's block-skipping ineffective. Moreover, SPLADE's long queries (nearly 50) leads to many branch prediction errors and cache misses. However, on the AntSparse dataset, the performance gap narrows. AntSparse is characterized by extremely high dimensionality (250k) and very short queries (avg. 6 terms). In this sparse regime, BMW's ability to skip posting lists becomes more advantageous. \textsc{Sindi} maintains superior query throughput due to its cache-friendly memory layout and lack of branch misprediction overhead.

\subsection{Sensitivity Analysis of Reordering Depth $\gamma$}

To evaluate the sensitivity of retrieval performance to the reordering candidate size $\gamma$, experiments were conducted on the \textbf{SPLADE-FULL} and \textbf{AntSparse-10M} datasets. The parameter $\gamma$ was varied across broad ranges under three distinct query pruning ratios ($\beta$). Figure \ref{fig:gamma} illustrates the resulting QPS-Recall trajectories.

\textbf{Analysis.} The curves exhibit a consistent trend across all configurations: recall increases with $\gamma$ and eventually reaches a saturation plateau. It is observed that this saturation occurs more rapidly as the query pruning ratio $\beta$ increases. Extending $\gamma$ beyond the saturation point offers negligible improvements in recall but causes a linear degradation in QPS due to the increased cost of full-precision computations.

\begin{figure}[!t]
	\centering
	\scalebox{0.35}{\includegraphics{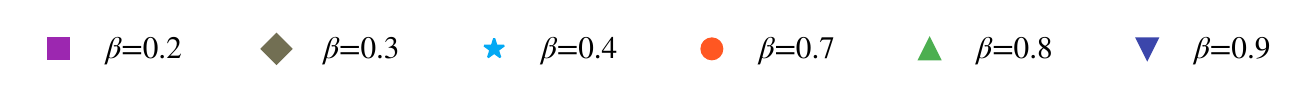}}
	\vspace{1ex}  
	
	\subfigure[{\scriptsize SPLADE-FULL}]{%
		\scalebox{0.18}{\includegraphics{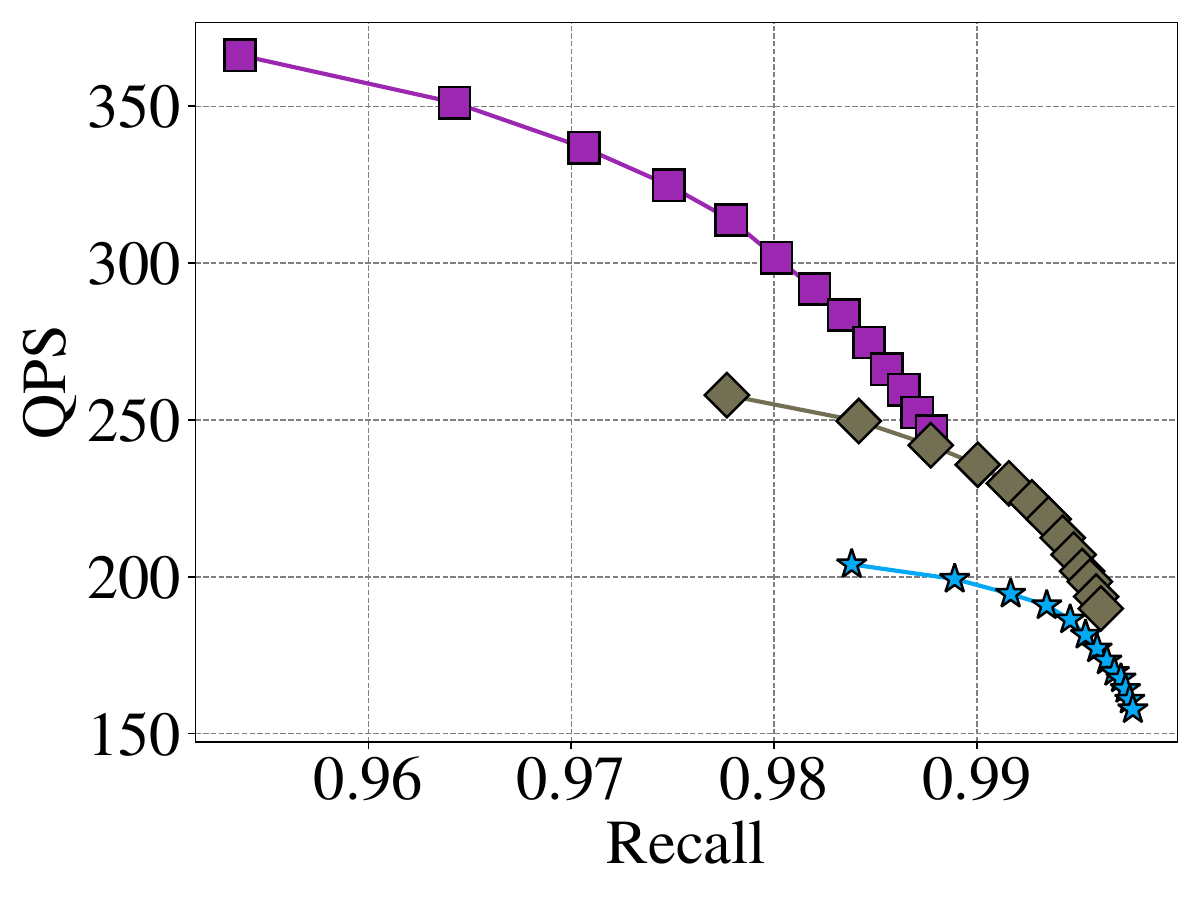}}
		\label{subfig:splade-full-gamma}
	}
	\subfigure[{\scriptsize AntSparse-10M}]{%
		\scalebox{0.18}{\includegraphics{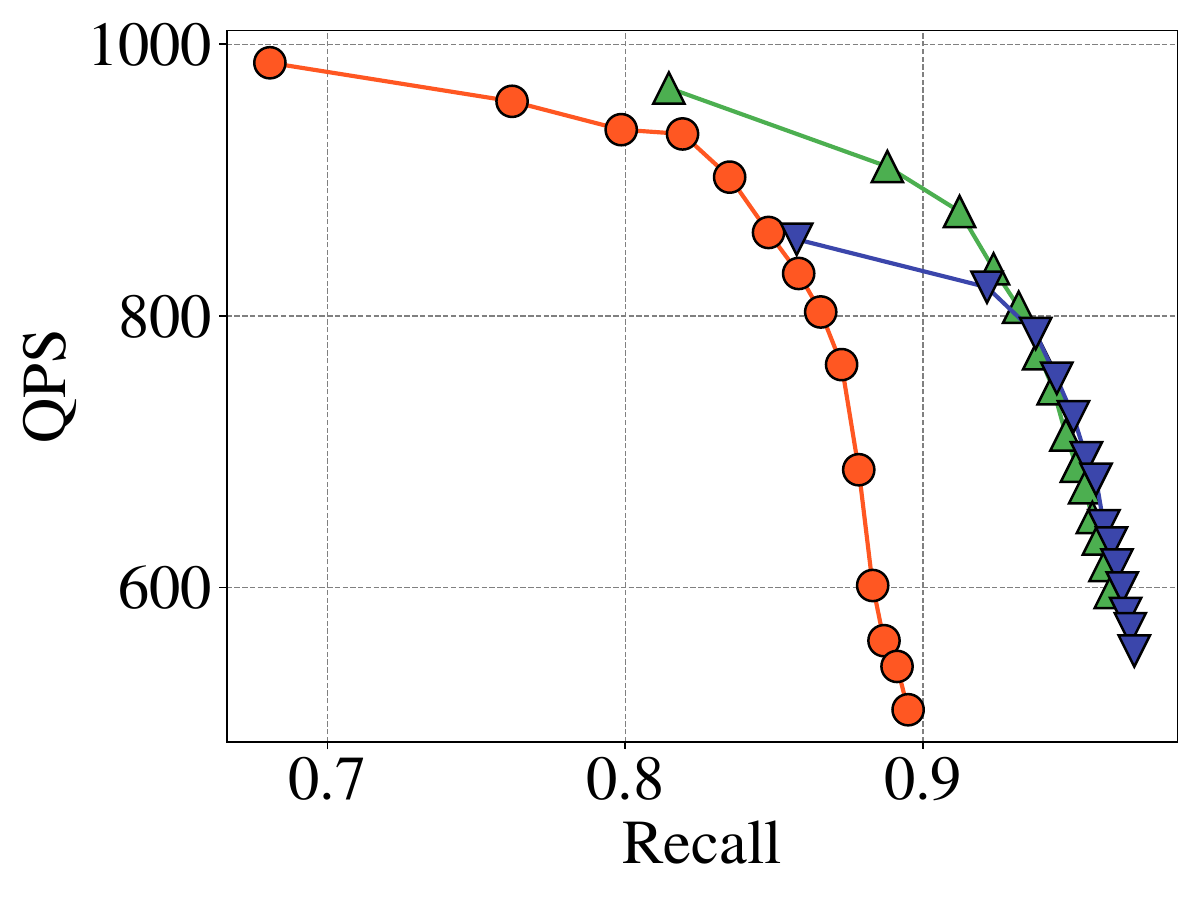}}
		\label{subfig:antsparse-10m-gamma}
	}
	
	\vspace{1ex}
	\caption{\small Sensitivity Analysis of Reordering Candidate Size $\gamma$.}
	\label{fig:gamma}
	\vspace{-2ex}  
\end{figure}

\subsection{Performance Comparison with Lucene-based System}
\label{app:lucene_comparison}

To demonstrate the fundamental advantages of \textsc{Sindi} over standard ``IR-style'' indexing, a system-level comparison was conducted against Elasticsearch, the most widely adopted production system based on the Lucene core. Evaluations were performed in two distinct scenarios to ensure robustness: a standalone system comparison and an industrial integration assessment.

\begin{enumerate}
	\item \textbf{Standalone System Comparison:} \textsc{Sindi} (integrated into TBaseSearch~\cite{tbase}) was compared against Elasticsearch under identical configurations (4 threads, top-$k=10$). \textsc{Sindi} achieved \textbf{1,775 QPS}, approximately \textbf{3$\times$ higher} than Elasticsearch's 575 QPS, while concurrently reducing average latency by \textbf{67\%} (2.23 ms vs. 6.92 ms). These metrics confirm the superior core efficiency of the proposed architecture.
	
	\item \textbf{Industrial Integration (OceanBase):} To validate performance within complex commercial infrastructure, OceanBase~\cite{oceanbase} (a distributed relational database integrating \textsc{Sindi}) was evaluated against Elasticsearch on the SPLADE-FULL dataset using Intel Xeon Platinum 8269CY CPUs. As shown in Figure~\ref{fig:oceanbase}, the integration consistently outperforms the baseline:
	\begin{itemize}
		\item \textbf{Full Recall (No Pruning):} OceanBase achieves \textbf{78 QPS} compared to Elasticsearch's 33 QPS, representing a \textbf{2.4$\times$} speedup even without approximation strategies.
		\item \textbf{With Pruning:} Performance gains are approximately \textbf{3$\times$} without reordering, extending to \textbf{2$\times$--5$\times$} when reordering is enabled.
	\end{itemize}
\end{enumerate}

As shown in Table \ref{tab:perf_comparison_remote}, the results indicate that \textsc{Sindi} is fundamentally more efficient than standard IR-style implementations. While Lucene effectively utilizes skipping (Block-Max WAND), it remains limited by scalar processing. In contrast, the value-storing design of \textsc{Sindi} facilitates cache-friendly access and \textbf{SIMD acceleration}, delivering order-of-magnitude improvements over traditional methods.

\begin{table}[htbp]
	\centering
	\caption{Performance Comparison between \textsc{Sindi} and Elasticsearch}
	\label{tab:perf_comparison_remote}
	\begin{tabular}{lcc}
		\toprule
		\textbf{Metric} & \textbf{TbaseSearch} & \textbf{Elasticsearch} \\ \midrule
		Memory Usage (RSS) & 1.09 GB & 1.65 GB \\
		Number of Queries & 1,000 & 1,000 \\
		Top-$k$ & 10 & 10 \\
		Client Threads & 4 & 4 \\ \midrule
		Average Recall & 0.9988 & 0.9950 \\
		QPS (Queries Per Second) & 1,775.11 (\textbf{+208.4\%}) & 575.61 \\
		Avg Latency (ms) & 2.23 (\textbf{-67.8\%}) & 6.92 \\
		Min Latency (ms) & 0.58 (\textbf{-82.2\%}) & 3.26 \\
		P99 Latency (ms) & 4.22 (\textbf{-63.7\%}) & 11.62 \\
		\bottomrule
	\end{tabular}
\end{table}

\newpage
\begin{figure}[htbp]
	\centering
	\scalebox{0.4}{\includegraphics{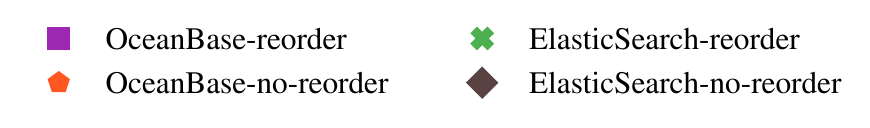}} \\
	\vspace{0.5ex} 
	
	\scalebox{0.3}{\includegraphics{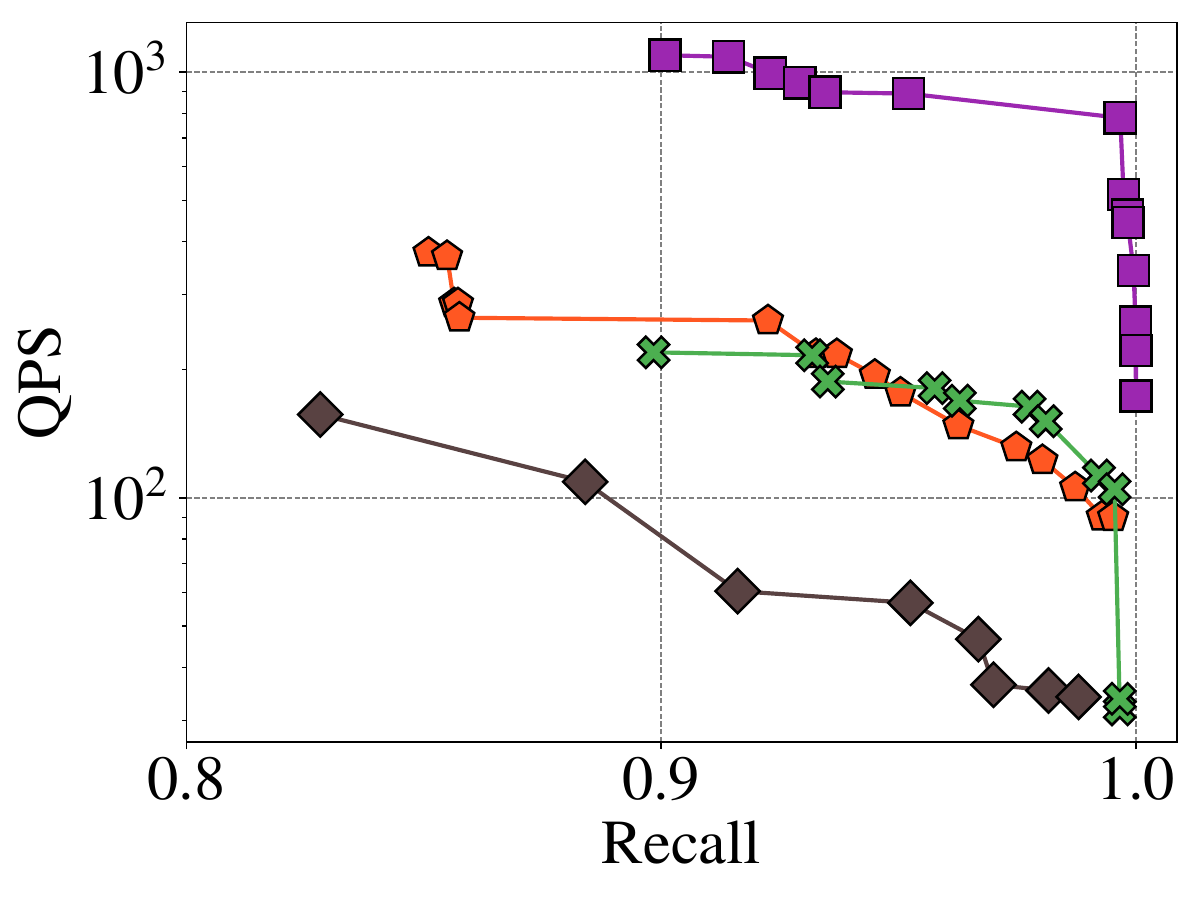}}
	
	\caption{Performance comparison between OceanBase (integrating \textsc{Sindi}) and Elasticsearch on the SPLADE-FULL dataset.}
	\label{fig:oceanbase}
	\vspace{-2ex}
\end{figure}

\end{document}